\theoremstyle{plain}
\newtheorem{theorem}{Theorem}
\newtheorem{corollary}{Corollary}
\theoremstyle{definition}
\newtheorem{example}{Example}
\newtheorem{definition}{Definition}
\newtheorem{remark}{Remark}
\begin{document}


\title{Content Caching and Delivery \\ in Wireless Radio Access Networks}

\author{Meixia Tao, Deniz G\"{u}nd\"{u}z, Fan Xu, and Joan S. Pujol Roig
\thanks{M. Tao and F. Xu are with the Department of Electronic Engineering at Shanghai Jiao Tong University, Shanghai, 200240, China (Emails: mxtao@sjtu.edu.cn, xxiaof@sjtu.edu.cn). D. G\"{u}nd\"{u}z and J. Pujol Roig are with the Department of Electrical and Electronic Engineering at Imperial College London, UK (Emails: d.gunduz@imperial.ac.uk, j.pujol-roig16@imperial.ac.uk). }
\thanks{The work by Meixia Tao and Fan Xu is supported by the National Natural Science Foundation of China under grant 61571299 and the Science and Technology Commission of Shanghai Municipality under grant 18DZ2270700. Deniz G\"{u}nd\"{u}z received support from European Research Council (ERC) through Starting Grant BEACON (grant agreement no. 677854). S. Pujol Roig acknowledges funding from the Engineering and Physical Sciences Research Council (EPSRC) and Toshiba Research Europe through an iCASE award to carry out his PhD studies.}}

\maketitle

\begin{abstract}

Today's mobile data traffic is dominated by content-oriented traffic. Caching popular contents at the network edge can alleviate network congestion and reduce content delivery latency. This paper provides a comprehensive and unified study of caching and delivery techniques in wireless radio access networks (RANs) with caches at all edge nodes (ENs) and user equipments (UEs).
Three cache-aided RAN architectures are considered: RANs without fronthaul, with dedicated fronthaul, and with wireless fronthaul. It first reviews in a tutorial nature how caching facilitates interference management in these networks by enabling interference cancellation (IC), zero-forcing (ZF), and interference alignment (IA). Then, two new delivery schemes are presented. One is for RANs with dedicated fronthaul, which considers centralized cache placement at the ENs but both centralized and decentralized placement at the UEs. This scheme combines IA, ZF, and IC together with soft-transfer fronthauling. The other is for RANs with wireless fronthaul, which considers decentralized cache placement at all nodes. It leverages the broadcast nature of wireless fronthaul to fetch not only uncached but also cached contents to boost transmission cooperation among the ENs. Numerical results show that both schemes outperform existing results for a wide range of system parameters, thanks to the various caching gains obtained opportunistically.


\end{abstract}

\begin{IEEEkeywords}
Coded caching,  delivery time, fog radio access networks, interference alignment, interference cancellation, interference management, wireless edge caching, zero-forcing.
\end{IEEEkeywords}

\section{Introduction}
Over the last decade, mobile data traffic has undergone a significant transformation; not only it has continuously grown at an exponential rate, but also it has become dominated by  content oriented traffic rather than the traditional connection-centric traffic. Currently the network data traffic is dominated by requests for multimedia contents, such as on-demand video streaming and push media\cite{liu2014content,cisco}. This type of traffic can be largely characterized by asynchronous requests for pre-recorded contents, e.g., movies or user-generated content. Moreover,  a large percentage of these requests are for a relatively small number of highly popular contents. These characteristics call for \textit{caching} of popular contents closer to the end users, which can help reduce both the traffic over the network and the latency in delivery. The idea of caching has already been successfully implemented in the Internet through the content distribution networks (CDNs). In recent years,  there has been growing research interest towards pushing content caching all the way to the wireless network edge.
%
%
Caching popular contents locally at macro base stations (MBSs), small base stations (SBSs), or even directly at user equipments (UEs) in a radio access network (RAN) during off-peak traffic periods, can help boost the network performance, similarly to the CDNs' role in the Internet.

Caching at the wireless network edge has its own challenges and characteristics that distinguish it from traditional solutions in a wired network. Most distinctively, wireless is a broadcast medium, which leads to interference, but also allows multiple requests to be served simultaneously from the same base station. Similarly, signals from multiple base stations can act as interference, but can also be exploited through advanced signal processing techniques, such as cooperative multi-point (CoMP) transmission and interference alignment (IA), to improve the reception quality. As we will outline in this paper these characteristics can lead to novel caching gains to be exploited in wireless networks.

The existence of potential gains from coded caching and delivery in a broadcast delivery model that go beyond the local gains from classical uncoded caching, is first shown in the seminal work of Maddah-Ali and Niesen \cite{fundamentallimits}. In \cite{fundamentallimits}, the authors consider a server holding a library of files serving multiple cache-enabled users over a shared broadcast link, and show that a \textit{global caching gain} can be obtained, which, unlike local caching gain, scales with the total number of caches in the network, by leveraging a novel file-splitting based cache placement scheme and coded multicast transmissions. It is further shown in \cite{fundamentallimits} that the performance of this coded caching scheme is within a constant gap to the information-theoretic optimum.

While the gains in \cite{fundamentallimits} require carefully coordinating the cache placement across all the users, in \cite{decentralized} the authors extended their work to \textit{decentralized} cache placement, where the users simply cache random bits from the files in the library. Using a coded  delivery scheme similar to the one in \cite{fundamentallimits}, the authors showed that a global caching gain is still possible. The coded caching framework in \cite{fundamentallimits}  is also studied for the system with non-uniform file popularity \cite{nonuniformdemands, nonuniformdemand10, Ozfatura:Nonuniform:18}, in an online caching system \cite{online}, with finite subpacketization \cite{Shanmugam-IT16},  distinct file sizes \cite{zhang2015coded},  heterogeneous cache sizes \cite{wang2015coded, Amiri:TC:17}, and distinct quality requests from users with distinct cache sizes \cite{Yang:IT:18}.  Apart from the shared link model, coded caching is also studied in other types of networks, such as a hierarchical network \cite{hierarchical}, a device-to-device network \cite{ji2016wireless,D2D}, a multi-level caching network \cite{hachem2014multi}, and a multi-server network \cite{multi-server}.

While the aforementioned works are built upon the error-free shared-link model of \cite{fundamentallimits}, a noisy broadcast channel is a more appropriate model for the downlink in a  wireless RAN, bringing the system model one step closer to reality. In \cite{Bidokhti:IT:18} and \cite{Amiri:TC:18}, content delivery over an erasure broadcast channel is considered, while a Gaussian broadcast channel is studied in \cite{Zhang:ISIT:17}, \cite{Bidokhti:ICC:17} and \cite{Amiri:JSAC:18}. Erasure and Gaussian broadcast delivery channels with feedback are studied in \cite{GhorbelErasureCacheFeedback} and \cite{zhang2017fundamental}, respectively. Main challenge in these works is to exploit the broadcast channel in a non-trivial manner, that goes beyond reducing the problem to delivery over a shared link whose rate is dictated by the user with the worst channel quality. A common conclusion of these works is that, caches at the UEs can compensate for weaker channel conditions.

Content delivery from a single server to multiple users does not reflect the full complexity of wireless RANs we have today. With increasing network densification, users are typically within the coverage area of multiple SBSs, called the \textit{edge nodes} (ENs), which can cooperate to deliver requests to multiple users. When the coded caching framework is extended to a wireless RAN with multiple ENs, several new and interesting research challenges emerge, which will be the focus of this paper. First of all, in a cache-aided RAN, one can consider caches at the ENs as well as caches at the UEs. In the presence of a single broadcasting server, it is natural to assume that the server has access to all the files in the library that can be requested by the users, whereas in the presence of multiple cache-aided ENs, each EN can hold a portion of the library. Cache placement at the ENs and the associated delivery techniques lead to many interesting and challenging problems. A $3\times3$ interference network with caches only at the transmitters (i.e., ENs) is studied in \cite{upperbound}. The authors propose a caching scheme that transforms the original interference network into a combination of broadcast channels, X channels, or hybrid channels, depending on how the subfiles are stored across the transmitters. Then, they use zero-forcing (ZF) and IA techniques in the delivery phase to exploit the presence of the same portions of the files at multiple transmitters. The authors in \cite{lowerbound} introduce the \emph{normalized delivery time} (NDT) as a performance metric in cache-aided interference networks, and  present a lower bound on the NDT in a network with caches only available at the transmitter side. They show that the scheme in \cite{upperbound} is optimal in certain transmitter cache size regions. Note that a similar latency-oriented performance metric is also considered in \cite{zhang2017fundamental}. The model in \cite{upperbound} is later extended to the more general $K_T\times K_R$ cache-aided interference network with caches at both the transmitter and receiver sides in \cite{bothcache,niesen,mine,gunduz,cao}. It is worth mentioning that caches at the UEs have yet another benefit in this context, as the locally available portions of the files requested by other users can be used for interference cancellation (IC). These techniques will be reviewed in greater detail in Section \ref{section Caching Gain in RAN without Fronthaul}.


When extending coded caching to a practical RAN architecture it may not be always feasible to assume that the ENs can store all the files in the library.
%
However, this is not a limitation in practical RANs, since the ENs can fetch the missing portions of the requested contents from the cloud server via their fronthaul/backhaul connections, and then deliver them to the UEs. This network architecture is also known as a \textit{fog RAN (F-RAN)} as the ENs that are connected to the cloud processor are also endowed with storage and processing capabilities,  in contrast to cloud RANs.
The cache-aided F-RAN architecture is first considered in \cite{simeone}, where only the ENs are equipped with caches. In addition to the conventional \textit{hard-transfer} of uncached contents over the fronthaul links, the so-called \textit{soft-transfer} fronthauling \cite{simeone} is also available in an F-RAN, in which the quantized and compressed versions of the baseband signals that will be transmitted by the ENs are delivered over the fronthaul links. Cache-aided F-RANs  will be discussed in detail in Sections \ref{s:Dedicated} and \ref{section Caching gain in F-RAN with wireless fronthaul}, with dedicated and shared wireless fronthaul links, respectively.



The purpose of this paper is two-fold. First, we provide a tutorial overview of some of the existing works on coded caching in cache-aided RANs without fronthaul in Section \ref{section Caching Gain in RAN without Fronthaul}. Through intuitive examples, we demonstrate how coded caching can opportunistically enable interference cancellation, zero-forcing, and interference alignment in a wireless RAN when it is equipped at both ENs and UEs. Then we present two new caching and delivery strategies, one for a RAN with dedicated fronthaul \cite{roig2018storage} in Section \ref{s:Dedicated} and the other for a RAN with wireless fronthaul \cite{xu-isit18} in Section \ref{section Caching gain in F-RAN with wireless fronthaul}, and compare them with existing works in their corresponding sections.
In a RAN with dedicated fronthaul, we consider both centralized and decentralized cache placement at UEs, while cache placement at the ENs is centralized. We propose a new delivery scheme based on the techniques introduced in \cite{gunduz} and the soft-transfer delivery scheme of \cite{simeone}. This achievable scheme aims to minimize the delivery latency by taking into account the interplay between the EN caches, UE caches, and the fronthaul capacity. The proposed delivery scheme jointly exploits IA, ZF, IC as well as the fronthaul links. 
In a RAN with wireless fronthaul, we consider decentralized cache placement at both the ENs and UEs. In our proposed delivery scheme, the broadcast nature of the wireless fronthaul is exploited not only for fetching uncached contents, but also for fetching contents already cached at some but not all the ENs to boost EN cooperation over the access transmission. We also show that this delivery scheme is information-theoretically order-optimal. By putting the tutorial overview for cache-aided RANs without fronthaul and the  new contributions for cache-aided RANs with dedicated and wireless fronthaul together, this paper provides a comprehensive and unified treatment of content  caching and delivery in wireless RANs. Discussions for future research will also be provided.


The remainder of this paper is organized as follows. Section \ref{section system model} introduces the cache-aided RAN model, the performance metric, and how the caching and delivery of the files are carried out. Section \ref{section Caching Gain in RAN without Fronthaul} studies the cache-aided RAN without fronthaul links, and introduces some basic cache-aided interference management techniques by reviewing existing works. Section \ref{s:Dedicated} considers a cache-aided RAN with dedicated fronthaul links, and proposes a novel caching and delivery scheme with centralized caching at the ENs and centralized/decentralized caching at the UEs. Section \ref{section Caching gain in F-RAN with wireless fronthaul} studies cache-aided RAN with wireless fronthaul, and proposes a novel delivery scheme with decentralized caching at both the EN and UE sides. Section \ref{section conclusion} concludes this paper and discusses directions for future research.

\textbf{Notations:} For $K\in\mathbb{Z}^+$, $[K]$ denotes the set $\{1,2,\ldots,K\}$. For $a<b$, $a,b\in\mathbb{Z}^+$, $[a:b]$ denotes the set $\{a,a+1,\ldots,b-1,b\}$. For $x\in\mathbb{R}$, $\lfloor x\rfloor$ denotes the largest integer not greater than $x$. $(x_j)^K_{j=1}$ denotes the vector $(x_1,x_2,\cdots,x_K)^T$. We define $(x)^+\triangleq\max\{0,x\}$, and use $(\cdot)^T$ to denotes the transpose of a matrix. $A_{1\sim s}$ denotes the set $\{A_1,A_2,\ldots,A_s\}$. $\mathcal{CN}(0,1)$ denotes the complex-valued Gaussian distribution with zero mean and unit variance. $H(X)$ denotes the entropy of random variable $X$.

\section{System Model}\label{section system model}
\subsection{Network Model}\label{section network model}
\begin{figure}[!tbp]
\begin{minipage}[t]{1\linewidth}
\centering
\includegraphics[scale=0.23]{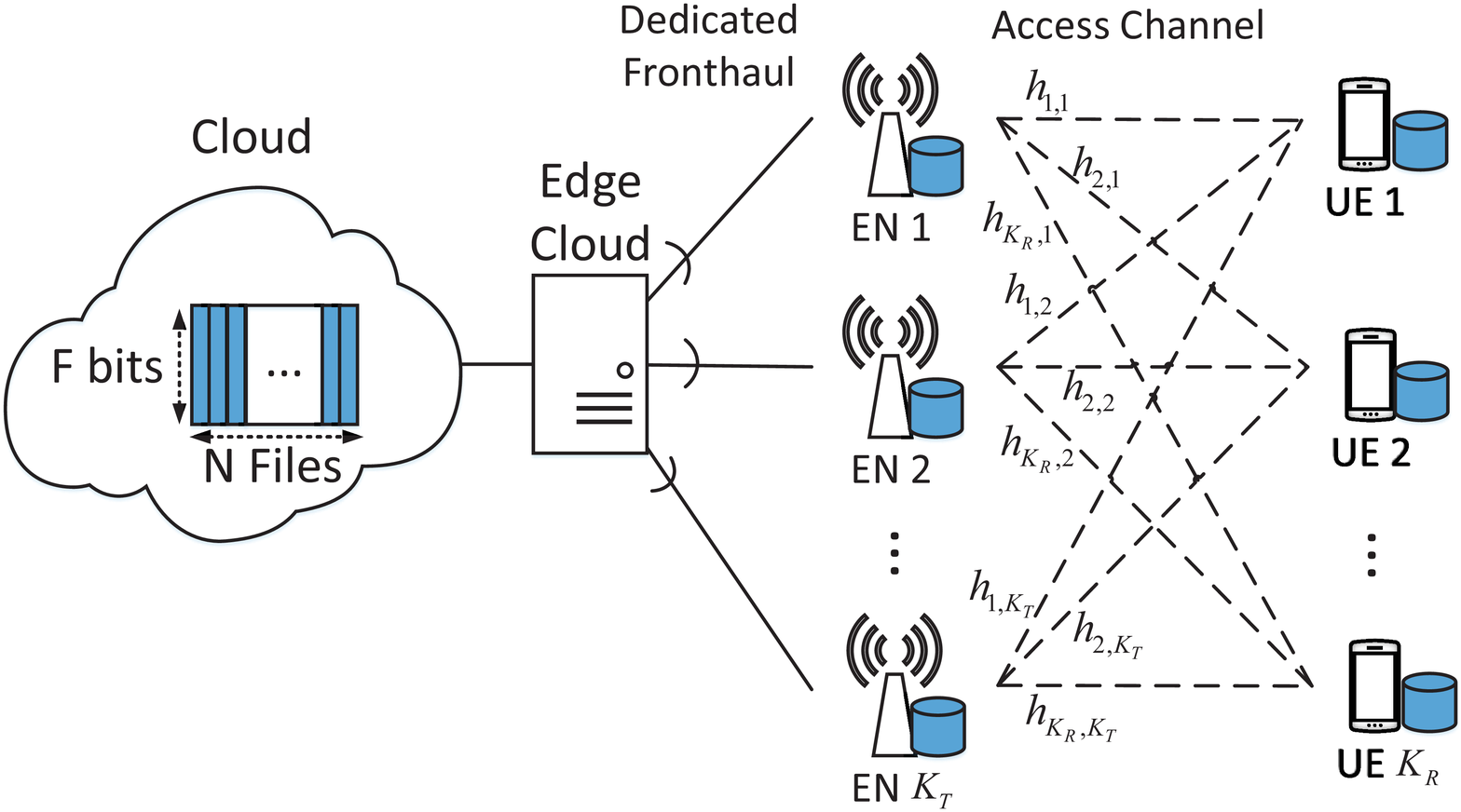}
\subcaption{}\label{Fig model dedicated}
\end{minipage}
\begin{minipage}[t]{1\linewidth}
\centering
\includegraphics[scale=0.23]{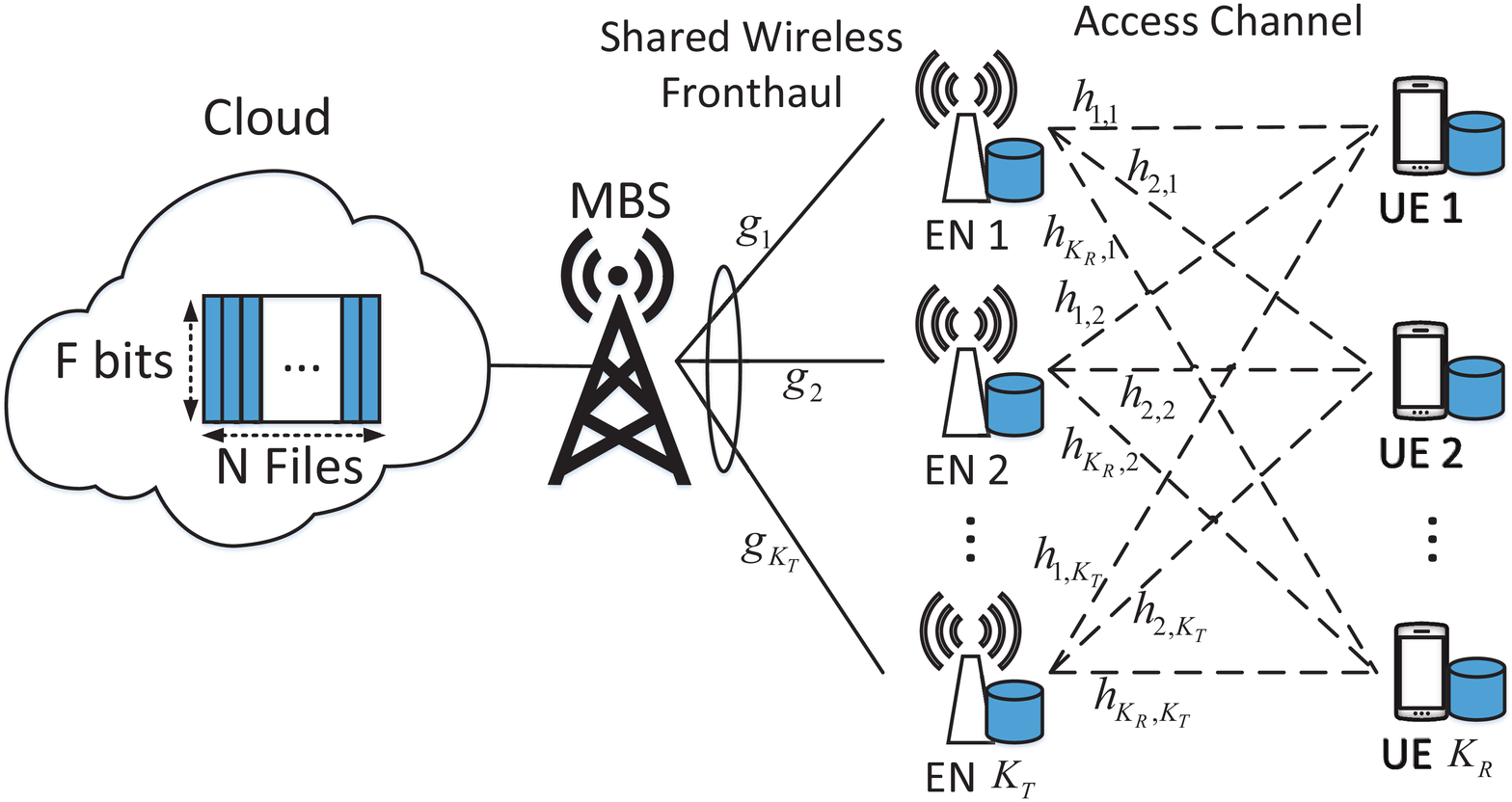}
\subcaption{}\label{Fig model wireless}
\end{minipage}
\caption{Cache-aided RAN architecture with (a) dedicated fronthaul links, (b) a shared wireless fronthaul.}\label{Fig model}
\vspace{-0.5cm}
\end{figure}

We consider a cache-aided RAN, in which $K_T\ge2$ ENs serve $K_R\ge2$ UEs over a wireless access channel, while the ENs are connected to the cloud either through dedicated fronthaul links operated by an edge cloud, as shown in Fig.~\ref{Fig model dedicated}, or through  a shared wireless fronthaul link served by a MBS, as shown in Fig.~\ref{Fig model wireless}. In the special case when the fronthaul capacity goes to zero,  the network reduces to the cache-aided interference network as studied in \cite{bothcache,niesen,mine,gunduz}, and referred to as a cache-aided RAN without fronthaul in this paper. In this paper, we will refer to the fronthaul connection from the cloud to the ENs as the \textit{fronthaul network}, and to the wireless access channel from the ENs to the UEs as the \textit{access network}.
In the fronthaul network, the capacity of the dedicated fronthaul link for each EN in bits per channel use is denoted as $C_F$ ; the channel vector of the wireless fronthaul link is denoted as $\mathbf{g}$ ,be the channel vector, where its $p$-th element for $p\in[K_T]$, denoted by $g_p\in\mathbb{C}$, is the channel coefficient from the MBS to EN $p$. In the access network, let $\mathbf{H}$ denote the channel matrix, where its $(q,p)$-th entry for $q \in [K_R]$, $p \in [K_T]$, denoted by $h_{q,p}\in\mathbb{C}$, is the channel coefficient from EN $p$ to UE $q$. For simplicity, only a single antenna is considered at all nodes. Each $g_p$ and $h_{q,p}$ are drawn from a continuous distribution and independent of each other. Throughout this paper, the channel state information $\mathbf{g}$ in the wireless fronthaul link is known by the MBS and all the ENs, but not the UEs, in the system, while the channel state information $\mathbf{H}$ is globally known within the system. We assume that a library of $N$ popular files, denoted by $\mathcal{W}=\{W_1,W_2,\ldots,W_N\}$, each of size $F$ bits, is available in the cloud. Each EN and each UE is equipped with a local cache memory that can store $\mu_TNF$ and $\mu_RNF$ bits, respectively, where $\mu_T$ and $\mu_R$ ($0\le\mu_T,\mu_R\le1$) are referred to as the \textit{normalized cache sizes} at the ENs (transmitters) and the UEs (receivers), respectively. The normalized cache size represents the fraction of the entire library that can be stored in the corresponding local cache.
%

%

The network operates in two phases, a \textit{cache placement phase} and a \textit{content delivery phase}.  In the cache placement phase, which is assumed to occur during off-peak traffic hours and over a relatively large time scale, the ENs and UEs fill their local caches. Cache placement across the network can either be done in a \textit{centralized} or a \textit{decentralized} manner. In the former, it is assumed that a central authority, e.g., the cloud server, decides what to cache in each of the caches in the network; while in the latter, each node decides its own cache contents locally and independently of the others.
In the content delivery phase, which occurs during peak traffic periods and over a shorter time scale, UE $q$, $q\in[K_R]$, requests file $W_{d_q}$, $d_q\in[N]$. We define $\mathbf{d}\triangleq(d_q)^{K_R}_{q=1}\in[N]^{K_R}$ as the \textit{demand vector}. Note that, if the ENs have collectively cached all the files in the library, then they can directly deliver users' requests over the access network without the need for fetching additional information from the cloud via the fronthaul network. Even in this case, the fronthaul network can still be utilized to deliver contents to the ENs to improve the performance over the access network, e.g., through cooperative transmission. The delivery process is a two-hop transmission, with the first hop over the fronthaul network and the second over the access network, with the aim of satisfying all the users' demands with the minimum latency possible. We consider both  \textit{full-duplex} and \textit{half-duplex} delivery schemes, where the former assumes that the ENs can transmit over the access network while receiving over the fronthaul network at the same time, while the latter assumes that the ENs either transmit or receive at any point in time, but not both simultaneously.

We next define the caching and delivery functions for this network, which are also summerized in Table \ref{table functions}.

\begin{table*}[t]
\centering
\caption{Caching, encoding, and decoding functions used across the network}
\label{table functions}
\begin{tabular}{|c|c|}
\hline
\textbf{Function} & \textbf{Notation} \\ \hline
Caching function at EN $p$ & $\phi_p: \mathcal{W}\rightarrow U_p$, for $p\in[K_T]$ \\ \hline
Caching function at UE $q$ & $\psi_q: \mathcal{W}\rightarrow V_q$, for $q\in[K_R]$ \\ \hline
Encoding function at cloud & \(\displaystyle\Lambda_F: \left\{\begin{array}{ll}\mathcal{W}, \textbf{U}, \textbf{V}, \textbf{d}, C_F, \textbf{H}\rightarrow \{\textbf{S}_{p}^{T_F}\}_{p=1}^{K_T},\text{ for dedicated fronthaul}\\ \mathcal{W}, \textbf{U}, \textbf{V}, \textbf{d},\textbf{g}, \textbf{H}\rightarrow \textbf{S}^{T_F},\text{ for wireless fronthaul}\end{array}\right.\notag\) \\ \hline
Encoding function at EN $p$ & $\Lambda^p_A: U_p,\mathbf{Q}_{p}^{T_F},\textbf{d},\textbf{H}\rightarrow \mathbf{X}_{p}^{T_A}$, for $p\in[K_T]$ \\ \hline
Decoding function at UE $q$ & $\Gamma^q_A: \mathbf{Y}_{q}^{T_A}, V_q, \textbf{d}, \textbf{H}\rightarrow \hat{W}_{d_q}$, for $q\in[K_R]$ \\ \hline
\end{tabular}
\end{table*}

%
%

%

\subsubsection{Caching functions}  Caching functions $\left\{ \phi_{p} \right\}^{K_T}_{p=1}$ and $\left\{ \psi_{q}\right\} ^{K_R}_{q=1}$ map the entire library into the cache contents at ENs and UEs, respectively. More precisely, for $p\in[K_T]$, caching function $\phi_{p}$ maps the library $\mathcal{W}$ to the cache content of EN $p$ denoted by $U_{p}$, as $U_{p}= \phi_{p} (\mathcal{W})$, where $U_{p}$ is a binary sequence of length no more than $\mu_TNF$ bits. Likewise, UE $q$, $q \in [K_R]$, employs the  caching function $\psi_{q}$ to map the library to a binary sequence of length no more than $\mu_RNF$ bits, denoted by $V_{q}$, where $V_{q}= \psi_{q} (\mathcal{W})$. We note that the caches at ENs and UEs are filled without the knowledge of the future user demands or the channel conditions during the delivery phase. We define $\textbf{U}\triangleq \left( U_p\right)^{K_T}_{p=1}$ and $\textbf{V}\triangleq \left( V_q\right)^{K_R}_{q=1}$.

In general, the caching functions $\left\lbrace \phi_{p} \right\rbrace^{K_T}_{p=1}$ and $\left\lbrace\psi_{q}\right\rbrace ^{K_R}_{q=1}$ can allow arbitrary coding within and across the files; however, many previous works, such as \cite{mine,simeone,koh2017cloud}, consider only intra-file coding, and do not allow inter-file coding, since intra-file coding, in general, can offer order-wise optimal performance. Without inter-file coding, we can rewrite the cache contents $U_p$ of EN $p$ and $V_q$ of UE $q$ as consisting of contributions from different files in the library:
\begin{align}
  U_p=\left(U_{p,1},U_{p,2},\ldots,U_{p,N}\right),\quad V_q=\left( V_{q,1},V_{q,2},\ldots,V_{q,N}\right),\notag
\end{align}
where $U_{p,n}$ and $V_{q,n}$, $n \in [N]$, denote the cache contents generated as a function of file $W_n$ at EN $p$ and UE $q$, respectively. Moving further, when neither inter-file  nor intra-file coding is applied as in \cite{bothcache}, the cache contents at EN $p$ and UE $q$ can be written, respectively, as follows:
\begin{align}
  &U_p=\{W_{n,\Phi,\Psi}: \Psi\ni p, \Psi \subseteq [K_T], \Phi \subseteq [K_R], n\in[N]\},\notag\\
  &V_q=\{W_{n,\Phi,\Psi}: \Phi\ni q, \Psi \subseteq [K_T], \Phi \subseteq [K_R], n\in[N]\},\notag
\end{align}
where $W_{n,\Phi,\Psi}$ denotes the subfile of file $W_n$ cached at UEs in set $\Phi\subseteq[K_R]$ and ENs in set $\Psi\subseteq[K_T]$ in an uncoded manner. This is commonly known as \emph{uncoded prefetching}.  This paper mainly focuses on uncoded prefetching due to its implementation simplicity and good performance.
We shall review both the centralized and decentralized cache placement with uncoded prefetching in the next subsection.

\subsubsection{Delivery functions}  The delivery scheme is defined by an  encoding function  $\Lambda_{F} $  at the cloud for transmission over the  fronthaul network, a set of encoding functions $\{\Lambda_{A}^p \} ^{K_T}_{p=1}$ at the ENs for the access network, and a set of decoding functions $\{\Gamma_{A}^q \}_{q=1}^{K_R}$ at the UEs for the access network. Unlike the caching functions, the delivery functions depend on the user demand vector  $\mathbf{d}$ and the channel matrix $\mathbf{H}$.

When a dedicated fronthaul link is used for each EN,  the cloud employs the encoding function $\Lambda_{F}$ to generate  a set of codewords  $\textbf{S}_{p}^{T_F}\triangleq \{(S_p(t))^{T_F}_{t=1}\}_{p=1}^{K_T} = \Lambda_{F} \left( \mathcal{W}, \textbf{U}, \textbf{V}, \textbf{d}, C_F, \textbf{H}\right)$. each of length $T_F$ and to be transmitted to EN $p\in[K_T]$. Here, the codeword length $T_F$ means that the transmission of each codeword $S_p^{T_F}$ takes $T_F$ channel uses. Note that $T_F$ can be zero when the fronthaul-link is deemed unnecessary. Given that the capacity of each dedicated fronthaul link is $C_{F}$ bits per channel use, not more than $T_F\cdot C_{F}$ bits can be delivered to each EN during the delivery phase. Let $Q^D_p(t)$ denote the received signal at EN $p$ at time $t$ over the dedicated fronthaul link.
\par
When the ENs share a wireless fronthaul link, the cloud employs the encoding function $\Lambda _{F}$ to generate a common codeword $\textbf{S}^{T_F}\triangleq (S(t))^{T_F}_{t=1} =\Lambda_{F}\left( \mathcal{W},\textbf{U},\textbf{V},\textbf{d},\textbf{g},\textbf{H}\right)$ of length $T_F$. The input-output relationship of the wireless fronthaul link for each symbol is modeled as:
\begin{align}
Q^W_p(t)=g_{p}S(t)+N_p(t),\ p \in [K_T], \label{eqn fronthaul model 2}
\end{align}
where $Q^W_p(t)\in \mathbb{C}$ denotes the received signal at EN $p$, $S(t)\in \mathbb{C}$ denotes the transmitted signal from the MBS subject to an average power constraint of $P_F$, i.e., $1/T_F \sum _{t=1}^{T_F}|S(t)|^2\leq P_F$, and $N_p(t)$ denotes the additive noise at EN $p$ distributed with $\mathcal{CN}(0,1)$ at time $t$.
\par
Based on the received signals from the fronthaul $\mathbf{Q}_p^{T_F} \triangleq ( Q^D_p(t) )_{t=1}^{T_F}$  or   $\mathbf{Q}_p^{T_F} \triangleq ( Q^W_p(t) )_{t=1}^{T_F}$, and the locally cached content $U_p$,  EN $p$ employs the encoding function $\Lambda_{A}^p$ to generate a codeword of length $T_A$, $\mathbf{X}_{p}^{T_A}\triangleq (X_{p}(t))^{T_{A}}_{t=1} =\Lambda_{A}^{p}\left( U_p,\mathbf{Q}_{p}^{T_F},\textbf{d}, \textbf{H}\right)$.
\par
The input-output relationship of the access network at each time slot $t$ is modeled as:
  \begin{align}
Y_q(t) = \sum_{p=1}^{K_T}h_{q,p} X_p(t) + Z_q(t),\notag
\end{align}
where $Y_q(t)\in \mathbb{C}$ denotes the received signal at UE $q$, $X_p(t)\in \mathbb{C}$ denotes the transmitted signal at EN $p$ subject to an average power constraint $P$, i.e.,  $1/T_A \sum ^{T_A}_{t=1}|X_p(t)|^2 \leq P$, and $Z_q(t)$ denotes the independent noise at UE $q$ distributed with $\mathcal{CN}(0,1)$.

Based on the received signal vector $\mathbf{Y}_{q}^{T_A}\triangleq (Y_{q}(t))^{T_{A}}_{t=1}$, the locally cached content $V_q$, the demand vector $\textbf{d}$, and the knowledge of the channel gain matrix $\mathbf{H}$, UE $q$ employs the decoding function $\Gamma_A^q$  to decode its requested file  $\hat{W}_{d_q}$. We have  $\hat{W}_{d_q} =  \Gamma_A^q(\mathbf{Y}_{q}^{T_A}, V_q,\textbf{d}, \textbf{H})$.

Note that these caching and delivery functions remain unchanged as long as the network do not change. The worst-case error probability of the system is defined as
\begin{eqnarray}
P_\epsilon=\max\limits_{\mathbf{d}\in[N]^{K_R}}\max\limits_{q\in[K_R]}\mathbb{P}(\hat{W}_{d_q}\ne W_{d_q}).\notag
\end{eqnarray}
A sequence of caching and delivery functions, consisting of $\lbrace \phi_p\rbrace ^{K_T}_{p=1}$, $\lbrace \psi_q\rbrace ^{K_R}_{q=1}$, $\Lambda _{F}$, $\left\lbrace \Lambda _{A}^{p}\right\rbrace ^{K_T}_{p=1}$, $\left\lbrace \Gamma _{A}^q\right\rbrace ^{K_R}_{q=1}$, and indexed by file size $F$, is said to be \textit{feasible} if, for almost all channel realizations, $P_\epsilon\rightarrow 0$ as $F\rightarrow\infty$. We note that the definition of the error probability imposes reliable decoding at all the UEs for all demand combinations for feasibility.

\subsection{Cache Placement}\label{section cache model}
In this subsection, we review two uncoded prefetching schemes in the placement phase, namely \emph{centralized} and \emph{decentralized} cache placement. These two schemes differ in the way the cache contents are decided across different users during the placement phase. While both schemes perform file splitting, we do not allow intra or inter-file coding.

\subsubsection{Centralized cache placement}\label{section centralized cache placement}
In centralized cache placement, the caching functions of the ENs and UEs are jointly designed by a central controller that has full knowledge of all the ENs and UEs in the network.  In practice, this means that the ENs and UEs that will participate in the delivery phase are known in advance, and their cache contents can be designed accordingly. Note that all the ENs can collectively store at most $K_T\mu_TNF$ bits from the library. In the absence of the fronthaul network, or when the fronthaul capacity goes to zero, to be able to satisfy all demand combinations, we must have $\mu_RNF+K_T\mu_TNF\ge NF$, i.e., the cache capacities of all the ENs together with the cache capacity of each single UE must be sufficient to recover all the files in the library. Equivalently, the normalized cache sizes must satisfy $\mu_R+K_T\mu_T\ge1$.  This constraint is not required in the presence of the fronthaul network.
%
%
%
%
In the following, we first introduce the \emph{symmetric} file splitting and caching scheme \cite{bothcache}, suitable for a normalized cache size pair $(\mu_R, \mu_T)$, referred to as \textit{integer points}, for which the cumulative cache capacities at both the ENs and the UEs are integers, i.e., $K_R\mu_R=i\in[0:K_R]$ and $K_T\mu_T=j\in[0:K_T]$. We then discuss the more general \emph{parametric} file splitting and caching scheme \cite{mine} suitable for arbitrary cache capacity values.

\textbf{Symmetric file splitting and caching}:  For any integer-point cache size pair $(\mu_R=\frac{i}{{K_R}}, \mu_T=\frac{j}{{K_T}})$, each file $W_n$, for $n\in[N]$, is split into $\binom{K_R}{i}\binom{K_T}{j}$ equal-size distinct subfiles $\{W_{n,\Phi,\Psi}:\Phi\subseteq[K_R],|\Phi|=i,\Psi\subseteq[K_T],|\Psi|=j\}$. Each subfile $W_{n,\Phi,\Psi}$ is then cached at the $i$ UEs in subset $\Phi$  and the $j$ ENs in subset $\Psi$.
Following this placement strategy, each EN caches $N\frac{\binom{K_R}{i}\binom{K_T-1}{j-1}F}{\binom{K_R}{i}\binom{K_T}{j}}=N\frac{j}{K_T}F=\mu_TNF$ bits, and each UE caches $N\frac{\binom{K_R-1}{i-1}\binom{K_T}{j}F}{\binom{K_R}{i}\binom{K_T}{j}}=N\frac{i}{K_R}F=\mu_RNF$ bits, which satisfy the cache capacity constraint with equality. We will illustrate symmetric file splitting and caching for $(\mu_R=\frac{1}{3},\mu_T=\frac{2}{3})$ and $(\mu_R=\frac{1}{3},\mu_T=\frac{1}{3})$ in Example \ref{example self interference cancellation gain} and Example \ref{example EN coordination gain} in Section \ref{section caching gains}, respectively, in a $3\times3$ network.

\textbf{Parametric file splitting and caching}: For any cache size pair ($\mu_T$, $\mu_R$), each file $W_n$ is partitioned into $2^{K_T+K_R}$ distinct subfiles, $\{W_{n,\Phi,\Psi}:\Phi\subseteq[K_R],\Psi\subseteq[K_T]\}$. Each subfile $W_{n,\Phi,\Psi}$ is cached at the UEs in subset $\Phi$ and the ENs in subset $\Psi$, for any $\Phi\subseteq[K_R]$ and $\Psi\subseteq[K_T]$. While each subfile can be of an arbitrary size, due to the symmetry among the nodes, the subfiles that are cached by the same number of ENs and the same number of UEs are set to have the same size. We denote the size of $W_{n,\Phi,\Psi}$ by $a_{r,t}F$ bits, where $t=|\Psi|$, $r=|\Phi|$, and $a_{r,t}\in[0,1]$ is the design parameter for file splitting. The parameters $\{a_{r,t}\}$ should satisfy the following constraints:
\begin{empheq}[left=\empheqlbrace]{align}
&\sum_{r=0}^{K_R} \sum_{t=0}^{K_T}\binom{K_R}{r}\binom{K_T}{t}a_{r,t}=1,\label{eqn:total cache}\\
&\sum_{r=1}^{K_R} \sum_{t=0}^{K_T}\binom{K_R-1}{r-1}\binom{K_T}{t}a_{r,t}\le\mu_R,\label{eqn:receiver cache}\\
&\sum_{r=0}^{K_R} \sum_{t=1}^{K_T}\binom{K_R}{r}\binom{K_T-1}{t-1}a_{r,t}\le\mu_T.\label{eqn:transmitter cache}
\end{empheq}
Here, constraint \eqref{eqn:total cache} guarantees that all $F$ bits of each file are considered, while constraints \eqref{eqn:receiver cache} and \eqref{eqn:transmitter cache} guarantee  that  the receiver and transmitter cache capacities are not violated, respectively. Note that  these constraints differ slightly from the original constraints in \cite{mine} in that the parameters $a_{r,0}$, for $0\le r<K_R$ are present in \eqref{eqn:total cache} and \eqref{eqn:receiver cache} but not in \cite[Eq. (13) and (14)]{mine}. This is because \cite{mine} assumed that every bit of a file must be either cached in at least one EN or cached in all UEs if not in any EN due to the absence of the fronthaul network. We will illustrate parametric file splitting and caching in Example \ref{example ZF gain parametric} in Section \ref{section caching gains} for  $(\mu_R=\frac{1}{3},\mu_T=\frac{2}{3})$ in a $3\times3$ network with file splitting parameters $a_{3,0}=\frac{1}{3},a_{0,3}=\frac{2}{3}$ and others being 0.
We note that parametric file splitting and caching scheme is more general than the symmetric one as it considers all possible cache placement combinations. But, bear in mind that not all cache combinations are actually needed in a given system as some of the file splitting parameters $\{a_{r,t}\}$ can be zero after optimization. The symmetric scheme is a special case of the parametric one with $a_{i,j} = 1/ \binom{K_R}{i}\binom{K_T}{j}$ and $a_{r,t}=0$, $\forall r\ne i, \forall t\ne j$ at integer-point cache size pair $(\mu_R=\frac{i}{{K_R}}, \mu_T=\frac{j}{{K_T}})$, for $i\in [0:K_R]$ and $j \in [0:K_T]$.

\subsubsection{Decentralized cache placement}\label{section decentralized cache placement}
Throughout this paper, we limit the decentralized caching strategy to the random decentralized cache placement as originally proposed  in \cite{fundamentallimits} only, though other decentralized strategies are also possible.  In specific, each cache node in the network independently caches $\mu F$ bits chosen uniformly at random from each of the $N$ files in the library,  with $\mu$ being its normalized cache size.  This scheme is particularly suitable for a large and random network where it is demanding for a central controller to coordinate the contents of too many caches, or when the identity and the number of users that will take part in the delivery phase are unknown at the placement phase, e.g., mobile users randomly connecting to access points.
%
%
 %
As a result of the randomness, the size of each subfile cached simultaneously in a given set of nodes is a random variable. However, for a sufficiently large file size $F$, the size of each subfile cached exclusively by an arbitrary set of $r$ UEs, $r\in[0:K_R]$, and an arbitrary set of $t$ ENs, $t\in[0:K_T]$, converges to $\mu_R^r (1-\mu_R)^{K_R-r}\mu_T^t(1-\mu_T)^{K_T-t}F+o(F)$ bits with high probability \cite{Wan2018}. Thus, for ease of analysis, we ignore the $o(F)$ term and define
\begin{align}
f_{r,t}\triangleq\mu_R^r (1-\mu_R)^{K_R-r} \mu_T^t(1-\mu_T)^{K_T-t}.\label{eqn decentralized subfile size}
\end{align}
as the fractional size of each subfile cached at an arbitrary set of $r$ UEs and an arbitrary set of $t$ ENs when $F\rightarrow\infty$. If the random caching scheme is employed by the UEs only, the fractional size of each subfile cached simultaneously at an arbitrary set of $r$ UEs, when $F\rightarrow\infty$, is given by:
\begin{align}
f_{r}\triangleq\mu_R^r(1-\mu_R)^{K_R-r}\label{eqn decentralized receiver subfile size}
\end{align}

Note that if we adopt decentralized cache placement at both the ENs and UEs, a non-zero fronthaul connection is always required for the existence of feasible codes even when the normalized cache sizes satisfy $\mu_R+K_T\mu_T\ge1$. This is because, due to the random nature of cache placement, some bits of the requested file will not be cached by any of the ENs with a non-zero probability.


\subsection{Performance Metric}

We adopt a latency-oriented performance metric, called the \emph{normalized delivery time }(NDT), introduced in \cite{zhang2017fundamental}, \cite{lowerbound}, \cite{simeone}, and widely used in the literature \cite{mine,gunduz,cao,girgis2017decentralized,goselingdelivery,koh2017cloud,Cran}. It is defined as the worst-case latency required to serve any possible user demand vector $\mathbf{d}$, normalized by the required time to transmit a single file in a point-to-point baseline channel, in the high signal-to-noise ratio (SNR) regime.
As noted in \cite{mine, gunduz}, NDT captures not only the improvement in channel degrees-of-freedom (DoF) thanks to cache-enabled EN cooperation (as studied in \cite{upperbound}), but also the reduction in the data load that needs to be delivered to the UEs thanks to their local caches.
While this work focuses on the worst-case NDT analysis, we would like to mention that there are some works on non-worst-case NDT analysis in the literature. In particular, \cite{azimi_online} studies the average NDT over a long time horizon for online coded caching, \cite{girgis_converse} presents a lower bound of the expected NDT, and \cite{long_glo17} studies the NDT at any given (not necessarily distinct) user demand.

Based on the system model outlined in Section \ref{section network model}, the total content delivery latency, denoted as $T$, is given by $T=T_F+T_A$ for half-duplex ENs, and by $T=\max\{T_F,T_A\}$ for full-duplex ENs. In the following, we first define the delivery time per bit, and then define NDT formally.
\begin{definition}\label{definition dtpb}
\textit{Delivery time per bit} $\Delta (\mu_T,\mu_R,C_F {\rm{~or~}} P_F,P)$ is said to be \textit{achievable} for the cache-aided RAN with dedicated fronthaul links at capacity $C_F$ or with a wireless fronthaul link at power $P_F$, if there exist a sequence of feasible caching and delivery codes so that
\begin{align*}
\Delta (\mu_T,\mu_R,C_F {\rm{~or~}} P_F,P)  = \lim _{F\to \infty} \frac{T\left(T_F,T_A\right)}{F}.
\end{align*}
\end{definition}

The delivery time per bit measures the average number of channel uses required to transmit a single bit to all the UEs in the network.

Due to the difficulty of characterizing the delivery time per bit in a multi-user network at a given finite SNR value, we will resort to the high SNR analysis, which will allow us to gain insights into the potential benefits of caching and coded delivery in a two-hop RAN architecture. Accordingly, we let the fronthaul link capacity scale as $C_F=r_D\log \left(P\right)$ in the case of dedicated fronthaul links, or let the power constraint of the MBS grow as $P_F=(P)^{r_W}$ in the case of a shared wireless fronthaul link. We note that $r_D$ and $r_W$ can be viewed as the \emph{multiplexing gain} of the fronthaul link with respect to the capacity of a point-to-point EN-UE wireless channel in the high SNR regime.

\begin{definition}\label{definition ndt}
 The \textit{normalized delivery time} (NDT) of a cache-aided RAN with an achievable delivery time per bit $\Delta (\mu_T,\mu_R,C_F,P)$ with dedicated fronthaul links at capacity $C_F=r_D\log P$,  or  $\Delta (\mu_T,\mu_R,P_F,P)$ with a wireless fronthaul link at power $P_F=(P)^{r_W}$  is defined as
\begin{align}
&\tau(\mu_T,\mu_R, r)\triangleq\left\{
\begin{array}{ll}
\lim\limits_{P\to\infty}\frac{\Delta (\mu_T,\mu_R,r_D\log P,P)}{1/\log P},\text{ for dedicated fronthaul }r\!=\!r_D,\\
\lim\limits_{P\to\infty}\frac{\Delta (\mu_T,\mu_R,(P)^{r_W},P)}{1/\log P},\text{ for wireless fronthaul } r\!=\!r_W.\\
\end{array}
\right.\notag
\end{align}
Moreover, the minimum NDT $\tau^*(\mu_R,\mu_T, r)$ is defined as the infimum of $\tau(\mu_T,\mu_R, r)$ over all achievable NDT values. For simplicity we will shortly use $\tau$ and $\tau^*$ in the rest of the paper, unless we want to highlight its dependence on the parameters $\mu_T,\mu_R$, and $r$.
%
\end{definition}

%
%

Similarly to Definition \ref{definition dtpb} and Definition \ref{definition ndt}, for a sequence of feasible caching and delivery codes, we can separately define the \textit{fronthaul NDT} and the \textit{access NDT} as
\begin{align}
&\tau_F(\mu_R,\mu_T,r)\triangleq\lim_{P\to\infty}\lim_{F\to\infty}\sup\frac{T_F}{F/\log P},\notag\\
&\tau_A(\mu_R,\mu_T,r)\triangleq\lim_{P\to\infty}\lim_{F\to\infty}\sup\frac{T_A}{F/\log P},\notag
\end{align}
respectively. For half-duplex ENs, we have $\tau=\tau_F+\tau_A$; while with full-duplex EN transmission, we have $\tau=\max\{\tau_F,\tau_A\}$.


\begin{remark}\label{remark_dof}
At a given feasible sequence of caching and delivery codes, the NDTs of the fronthaul and access networks can be computed separately by counting the actual amount of information bits delivered to a particular node and the transmission rate to that node. More specifically, let $R_F\cdot F$ denote the number of bits delivered to each EN over the fronthaul link with multiplexing gain $r$, then the fronthaul NDT can be computed as $\tau_F = R_F/r$, where $r=r_D$ for dedicated fronthaul links, and $r=r_W$ for wireless fronthaul. Likewise, let $R_A \cdot F $ denote the number of bits delivered to each UE over the access link at a transmission rate of $d\cdot \log P + o(\log P)$, then the access NDT can be computed as $\tau_A = R_A/d$, where $d$ represents the standard channel DoF for that UE in the high SNR regime \cite{upperbound}.
\end{remark}

\section{Cache-Aided RAN without Fronthaul}\label{section Caching Gain in RAN without Fronthaul}

In this section, we provide a tutorial overview of some of the caching and delivery schemes in a cache-aided RAN without fronthaul connections\footnote{The absence of fronthaul connections means that the ENs cannot fetch the requested file bits from the cloud during the content
delivery phase. However, the ENs are still allowed to cooperate for data transmission. Hence, certain connections between the
ENs still exist to convey necessary signalling overhead.} The simplicity of this model allows us to introduce various cache-aided interference management techniques and understand the primary benefits of caching as well as the overall performance bounds, which will later be instrumental in studying more involved and practical RAN architectures with fronthaul connections. Throughout this section, we focus on the cache size region $\mu_R+K_T\mu_T\ge1$ since there is no fronthaul connection and therefore the accumulated cache capacity at all ENs together with the cache capacity at each single UE should be large enough to collectively store the entire file library.

\subsection{Caching Gains}\label{section caching gains}

In this subsection we present the various gains thanks to the joint UE and EN caches through illustrative examples. The gains from the UE caches, besides the obvious local caching gain, are obtained by treating the cache content at each UE as side information for coded multicasting or by cancelling known interference. Both of these gains will be referred to as IC since the number of interfering signals at the UEs is effectively reduced thanks to the proactively cached contents. The gains from the EN caches are obtained through the elimination or reduction of the received signal space of the interference via collaborative EN transmission, such as ZF and IA.

In the following, we present three examples in a $3\times 3$ RAN to elaborate in detail how these different gains (i.e., IC, ZF, and IA) are obtained jointly or individually with proper cache placement and delivery schemes. To simplify the presentation, we only consider integer-point cache sizes (i.e., $K_T\mu _T$, $K_R \mu _R$ $\in  \mathbb{Z}$) in these examples. Note that, in general, the worst-case demand vector corresponds to each user requesting a different file from the library. We assume, without loss of generality, that UE $q$,  $q\in[3]$, requests file $W_q$ in the delivery phase.
When some UEs request the same file, the delivery schemes proposed for distinct requests can still be applied by treating the requests as different files, which, however, may cause higher transmission latency than considering the common requests explicitly.

\begin{example}[IC gain and ZF gain \cite{bothcache,gunduz,mine}]\label{example self interference cancellation gain}
Consider ($\mu_R=\frac{1}{3},\mu_T=\frac{2}{3}$). In the cache placement phase, by the symmetric file splitting and caching scheme, each file $W_n$, $n\in[N]$, is split into $9$ equal-size subfiles
\begin{align}
  &\left\{W_{n,\{1\},\{1,2\}},W_{n,\{1\},\{1,3\}},W_{n,\{1\},\{2,3\}},W_{n,\{2\},\{1,2\}},W_{n,\{2\},\{1,3\}},\right.\notag\\
  &\left. \ W_{n,\{2\},\{2,3\}},W_{n,\{3\},\{1,2\}},W_{n,\{3\},\{1,3\}},W_{n,\{3\},\{2,3\}}\right\},\label{eqn example 1}
\end{align}
where subfile $W_{n,\{q\},\Psi}$ is cached at UE $q$ and the ENs in set $\Psi$. Each UE has cached $3$ subfiles of its desired file, and needs the remaining $6$ subfiles. Therefore, there are a total of 18 subfiles to be transmitted over the access link. Each UE desires 6 out of 18 subfiles, while the remaining 12 subfiles act as interference. Note that some of these undesired subfiles are also cached by each UE; and thus can be utilized as side information for IC. For example, the undesired subfiles for UE 1 are
\begin{align}
  &\left\{W_{2,\{1\},\{1,2\}},W_{2,\{1\},\{1,3\}},W_{2,\{1\},\{2,3\}},W_{3,\{1\},\{1,2\}},W_{3,\{1\},\{1,3\}},W_{3,\{1\},\{2,3\}},\right.\notag\\
  &\ \left.W_{2,\{3\},\{1,2\}},W_{2,\{3\},\{1,3\}},W_{2,\{3\},\{2,3\}},W_{3,\{2\},\{1,2\}},W_{3,\{2\},\{1,3\}},W_{3,\{2\},\{2,3\}}\right\}.\notag
\end{align}
Here, the first 6 subfiles are already cached at UE 1, and thus can be eliminated by IC, and only the remaining 6 subfiles act as interference at UE 1.

Next, we explain how cooperative beamforming can be employed by the ENs to cancel the remaining interference at each UE. Consider, for example, subfile $W_{1,\{2\},\{1,2\}}$, which is intended for UE 1, cached at UE 2, and causes interference to UE 3. Let EN 1 and EN 2 transmit this subfile with beamforming factors $v_1=h_{3,2}$ and $v_2=-h_{3,1}$, respectively. The received signal gain for $W_{1,\{2\},\{1,2\}}$ at UE 3 thus becomes $h_{31}v_1+h_{32}v_2=0$. That is, the interference caused by $W_{1,\{2\},\{1,2\}}$ is zero-forced at UE 3. This ZF method can be applied similarly to all the subfiles by designing the corresponding beamforming factors so as to cause zero interference to their unintended UEs. As a result, each UE only receives signals for its $6$ desired subfiles without any interference, which can be decoded via a six-symbol extension, achieving a per-user DoF of $1$. Following Remark \ref{remark_dof}, an NDT of $\tau(\frac13, \frac23, 0)=\frac{6\times 1/9}{1} = \frac{2}{3}$ is thus achieved, where the numerator  accounts for the normalized total size of the subfiles intended for each user, and the denominator is the achievable DoF per user.
\end{example}

\begin{example}[ZF gain \cite{mine}]\label{example ZF gain parametric}
Consider ($\mu_R=\frac{1}{3},\mu_T=\frac{2}{3}$) again as in Example \ref{example self interference cancellation gain}.  Instead of splitting each file into $9$ equal-size subfiles, we now split file $W_n$, $n\in[N]$, into $2$ unequal-size subfiles as:
%
\begin{align}
  \left\{W_{n,\{1,2,3\},\emptyset},W_{n,\emptyset,\{1,2,3\}}\right\},\notag
\end{align}
where $W_{n,\{1,2,3\},\emptyset}$ contains $\frac{1}{3}F$ bits and is cached at all three UEs but none of the ENs, while  $W_{n,\emptyset,\{1,2,3\}}$ contains $\frac{2}{3}F$ bits and is cached at all three ENs but none of the UEs. This cache placement scheme corresponds to the parametric file splitting and caching scheme with file splitting parameters $a_{3,0}=\frac{1}{3},a_{0,3}=\frac{2}{3}$ and others being 0. Upon user requests, each UE $q$ only needs the subfile $W_{q,\emptyset,\{1,2,3\}}$ since it has cached the other. Therefore, the system only has 3 subfiles to deliver, one for each UE. The fact that the subfiles that need to be delivered are cached at all the three ENs turns the channel into a multi-input single-output (MISO) broadcast channel with each EN acting as a virtual antenna. By designing the ZF beamforming vectors at all three ENs, each subfile can be successfully decoded at its desired UE without interference. Thus, an NDT of $\tau(\frac13, \frac23, 0) = \frac{2}{3}$ can be achieved. Compared to symmetric file splitting adopted in Example \ref{example self interference cancellation gain}, the asymmetric file splitting adopted in this example enables full EN cooperation and does not require IC at the UEs, yet achieving the same NDT performance.
\end{example}

\begin{example}[IC gain and IA gain \cite{niesen,myisit}]\label{example EN coordination gain}
Consider ($\mu_R=\frac{1}{3},\mu_T=\frac{1}{3}$).
By using symmetric file splitting and caching, file $W_n$, $n\in[N]$, is split into 9 equal-size subfiles:
\begin{align}
  \left\{W_{n,\{1\},\{1\}},W_{n,\{1\},\{2\}},W_{n,\{1\},\{3\}},W_{n,\{2\},\{1\}},W_{n,\{2\},\{2\}},W_{n,\{2\},\{3\}},W_{n,\{3\},\{1\}},W_{n,\{3\},\{2\}},W_{n,\{3\},\{3\}}\right\},\label{eqn example 4}
\end{align}
and each subfile $W_{n,\{q\},\{p\}}$ is cached at UE $q$ and EN $p$. Each UE caches 3 subfiles of its requested file and needs the remaining 6 subfiles, resulting in a total of 18 subfiles to be transmitted.  Each UE desires 6 out of the 18 subfiles and sees the other 12 subfiles as interference.  Given that each of these $18$ subfiles is desired by one UE and cached at another UE, we can perform pair-wise XOR combining and shrink the set of 18 subfiles to a set of 9 coded messages:
\begin{align}
  \left\{W_{\{1,2\},\{1\}}^\oplus,W_{\{1,3\},\{1\}}^\oplus,W_{\{2,3\},\{1\}}^\oplus,W_{\{1,2\},\{2\}}^\oplus,W_{\{1,3\},\{2\}}^\oplus,W_{\{2,3\},\{2\}}^\oplus,W_{\{1,2\},\{3\}}^\oplus,W_{\{1,3\},\{3\}}^\oplus,W_{\{2,3\},\{3\}}^\oplus\right\},\label{eqn example 4 2}
\end{align}
where $W_{\{q_1,q_2\},\{p\}}^\oplus\triangleq W_{q_1,\{q_2\},\{p\}}\oplus W_{q_2,\{q_1\},\{p\}}$ is generated at EN $p$  and desired by UEs  $q_1,q_2$.  Now, each UE desires 6 out of the 9 coded messages, while the remaining 3 coded messages act as interference. The number of interfering signals at each UE is thus reduced via exploiting coded multicasting.

Next, we show how to coordinate the beamforming design at the ENs through the IA technique to align the 3 undesired messages along the same direction at each UE. Consider UE 1 as an example. The undesired messages at UE 1, $W_{\{2,3\},\{1\}}^\oplus$, $W_{\{2,3\},\{2\}}^\oplus$, and $W_{\{2,3\},\{3\}}^\oplus$, are precoded with beam-forming vectors $\mathbf{v}_{\{2,3 \}, 1}, \mathbf{v}_{\{2,3 \}, 2}$ and $\mathbf{v}_{\{2,3 \}, 3}$, respectively. To align these messages, we choose the beamforming vectors such that $\mathbf{H}_{1,1}\mathbf{v}_{\{2,3\},1}=\mathbf{H}_{1,2}\mathbf{v}_{\{2,3 \}, 2}=\mathbf{H}_{1,3}\mathbf{v}_{\{2,3\}, 3}$, where $\mathbf{H}_{1,p}$ is the channel matrix between EN $p$ and UE $1$ after certain symbol extension is applied. Similar beamforming vectors are applied to align the interference at UE 2 and UE 3 as well.
As a result, each UE can decode 6 desired messages and suppress 3 undesired messages that are aligned in the same subspace via 7-symbol extension, yielding a per-user DoF of $\frac{6}{7}$ .  Finally, an NDT of $\tau(\frac13,\frac13, 0) =\frac{2/3}{6/7} = \frac{7}{9}$  can be achieved.
\end{example}

Through the above examples, we have demonstrated that caching accelerates content delivery over the access network in a RAN architecture by opportunistically (depending on cache capacities) changing the information flow, and by enabling various interference management techniques.  These include IC or coded multicasting for subfiles cached at UEs, ZF for subfiles cached at multiple ENs, and IA for subfiles cached at only one EN.

\subsection{Performance Bounds}

Following the above illustrative examples,  several achievable upper bounds on the optimal NDT of a general  $K_T\times K_R$ cache-aided RAN without fronthaul are obtained in  \cite{bothcache, niesen, mine,gunduz}.  The works  \cite{bothcache, niesen, mine} also provide theoretical lower bounds on the optimal NDT, but under different constraints and assumptions.
In the subsection, we present these bounds and provide some comparison and discussion.

Using symmetric file splitting and cache placement and exploiting IC and ZF (one-shot linear delivery) as in Example \ref{example self interference cancellation gain}, the authors in \cite{bothcache} show that the following NDT is achievable:
\begin{align}
  \tau_{\rm{NMA}}=\frac{K_R(1-\mu_R)}{\min\{K_R,K_R\mu_R+K_T\mu_T\}}\label{eqn NMA scheme}
\end{align}
at an arbitrary integer-point cache size pair $(\mu_R=i/K_R,\mu_T=j/K_T)$, for $i\in [0:K_R],j\in [K_T]$. In \eqref{eqn NMA scheme}, $(1-\mu_R)$ stems from the local caching gain at the UEs, while $\min\{K_R,K_R\mu_R+K_T\mu_T\}$ is the achievable sum DoF in the delivery phase, where the $K_R\mu_R$ term is due to IC  at the UE side, and the $K_T\mu_T$ term is due to the ZF gain at the EN side.
Using the same symmetric file splitting and caching strategy but exploiting coded muticasting and IA as in Example \ref{example EN coordination gain} , the authors in \cite{niesen} show that the following minimum NDT is achievable:
\begin{align}
  \tau_{\rm{HND}}=\frac{K_T-1+\frac{K_R}{K_R\mu_R+1}}{K_T}(1-\mu_R)\label{eqn HND scheme}
\end{align}
at an arbitrary integer-point cache size pair $(\mu_R=i/K_R,\mu_T=j/K_T)$, for $i\in[0:K_R],j\in[K_T]$. The fact that  $\tau_{HND}$  is independent of $\mu_T$  is because each EN in \cite{niesen} only caches $F/K_T$ bits of each file from the library without overlap, regardless of its actual normalized cache size $\mu_T > 1/{K_T}$. In \eqref{eqn HND scheme}, $(1-\mu_R)$ results from the local caching gain at UEs, similar to \eqref{eqn NMA scheme}, and $({K_T-1+\frac{K_R}{K_R\mu_R+1}})/{K_T}$ results from the combined coded multicasting and IA gain. Using the symmetric file splitting and caching, again, the authors  in \cite{gunduz} proposed another scheme exploiting IC, ZF and IA jointly.
However, the expression provided in \cite{gunduz} is not valid for all network configurations, as aligned messages are not always guaranteed to be decodable due to the limit degrees of freedom available. The NDT presented in \cite{gunduz} holds for the 3x3 RAN and is included in the numerical comparison provided below. For all the schemes proposed in \cite{bothcache, niesen,gunduz}, the minimum NDT at non-integer cache size points can be obtained through the memory sharing techniques \cite{fundamentallimits}.

Using  parametric file splitting and caching, and optimizing the file splitting parameters as in Example \ref{example ZF gain parametric} ,  the authors in \cite{mine} show that the minimum NDT obtained by solving the following linear program (LP)  is achievable:
\begin{align}
\tau_{\rm{XTL}}\triangleq  \min\sum_{r=0}^{K_R-1}&\sum_{t=1}^{K_T}\frac{\binom{K_R-1}{r}\binom{K_T}{t}}{d_{r,t}}a_{r,t},\label{eqn tmin}\\
\textrm{s.t.} \qquad& \eqref{eqn:total cache}, \eqref{eqn:receiver cache}, \eqref{eqn:transmitter cache}\\
&0\le a_{r,t} \le 1,\forall (r,t)\in \mathcal{A}
\end{align}
for any cache size pair $(\mu_R, \mu_T)$. Here, $\mathcal{A}\triangleq\{(r,t): r+K_Rt\ge K_R,0\le r\le N_R, 0\le t\le K_T,r,t\in\mathbb{Z}\}$ is the set of all possible integer pairs $(r,t)$, and $d_{r,t}$ is the achievable per-user DoF for the  $\binom{K_T}{t}\times\binom{K_R}{r+1}$ cooperative X-multicast channel\footnote{In a $\binom{K_T}{t}\times\binom{K_R}{r+1}$ cooperative X-multicast channel, each set of $r+1$ UEs forms a UE multicast group,  each set of $t$  ENs forms an EN cooperation group, and each EN cooperation group has an independent message for each UE multicast group \cite[Definition 2]{mine}. }, given by \cite[Lemma 1]{mine}
\begin{align}
d_{r,t}
=\left\{
\begin{array}{ll}
1, &r+t\ge K_R\\
\frac{\binom{K_R-1}{r}\binom{K_T}{t}t}{\binom{K_R-1}{r}\binom{K_T}{t}t+1}, &r+t= K_R-1\\
\max\left\{d_{1},\frac{r+t}{K_R}\right\}, &r+t\le K_R-2
\end{array}
, \right.\label{eqn tau ip}
\end{align}
where
\begin{align}
d_1\triangleq \max\limits_{1\le t'\le t}\!\left\{\!\frac{\binom{K_R-1}{r}\binom{K_T}{t'}\binom{K_R-r-1}{t'-1}t'}
{\binom{K_R-1}{r}\binom{K_T}{t'}\binom{K_R-r-1}{t'-1}t'\!+\!\binom{K_R-1}{r+1}\binom{K_R-r-2}{t'-1}\binom{K_T}{t'-1}}\!\right\}\!.
\end{align}
Note that unlike \cite{bothcache, niesen,gunduz}, the scheme in \cite{mine} intrinsically includes memory sharing in its formulation through the parametric file splitting. The term multiplied by each file splitting parameter $a_{r,t}$ in the objective function represents the joint IC, ZF, and IA gain.  In the  $3\times 3$ network, for example, the achievable NDT by solving the above LP is a piece-wise linearly decreasing function of the cache size pair:
\begin{align}
\tau_{\rm{XTL}}=
\left\{
\begin{array}{ll}
1-\mu_R, & (\mu_R,\mu_T)\in \mathcal{R}^1\\
\frac{4}{3}-\frac{4}{3}\mu_R-\frac{1}{3}\mu_T, & (\mu_R,\mu_T)\in \mathcal{R}^2\\
\frac{3}{2}-\frac{5}{3}\mu_R-\frac{1}{2}\mu_T, & (\mu_R,\mu_T)\in \mathcal{R}^3\\
\frac{13}{6}-\frac{8}{3}\mu_R-\frac{3}{2}\mu_T, & (\mu_R,\mu_T)\in \mathcal{R}^4\\
\frac{8}{3}-\frac{8}{3}\mu_R-3\mu_T, & (\mu_R,\mu_T)\in \mathcal{R}^5
\end{array}
\right.\label{eqn mine 3x3}
\end{align}
where $\{\mathcal{R}^i\}^5_{i=1}$ are given as
\begin{align}
\left\{
\begin{array}{ll}
\mathcal{R}^1=\{(\mu_R,\mu_T): \mu_R+\mu_T\ge1, \mu_R\le1, \mu_T\le1\}\\
\mathcal{R}^2=\{(\mu_R,\mu_T): \mu_R+\mu_T<1, 2\mu_R+\mu_T\ge1,\mu_R+2\mu_T>1\}\\
\mathcal{R}^3=\{(\mu_R,\mu_T): 3\mu_R+3\mu_T\ge2, 2\mu_R+\mu_T<1,\mu_R\ge0\}\\
\mathcal{R}^4=\{(\mu_R,\mu_T): 3\mu_R+3\mu_T<2, \mu_R\ge0, 3\mu_T>1\}\\
\mathcal{R}^5=\{(\mu_R,\mu_T): 3\mu_T\le1, \mu_R+2\mu_T\le1,\mu_R+3\mu_T\ge1\}
\end{array}.\notag
\right.
\end{align}

Next we present a lower bound on the optimal NDT. It is shown in \cite{mine} that when neither inter-file nor intra-file coding is allowed in the cache placement (i.e., uncoded prefetching), the minimum NDT is lower bounded by $\tau_L$  defined as follows \cite[eq.(11)]{mine}:
\begin{align}
\tau_{L} \triangleq\max\limits_{\substack{l \in [\min\{K_T,K_R\}]\\s_1\in [0:l]\\s_2\in [0:K_R-l]}}\frac{1}{l}
\bigg\{&(s_1+s_2)-(K_T-l)s_2\mu_T-\left(\frac{2s_2+s_1+1}{2}\cdot s_1+s_2^2\right)\mu_R\notag\\
&+\left(\frac{2s_2+s_1}{2}(s_1-1)+s_2^2\right)(1-K_T\mu_T)^+\bigg\}.\label{eqn taul2}
\end{align}
Different lower bounds are obtained in \cite{bothcache} and \cite{niesen}. But the one in \cite{bothcache} is restricted to one-shot linear delivery scheme, and thus cannot bound the performance of symbol-extension based delivery schemes, such as IA. The bound in \cite{niesen} allows arbitrary intra- and inter-file coding, thus it is not as tight as \eqref{eqn taul2} to bound the performance of uncoded prefetching that is widely adopted in the literature.

\begin{figure}[!tbp]
\begin{minipage}[t]{1\linewidth}
\centering
\includegraphics[scale=0.4]{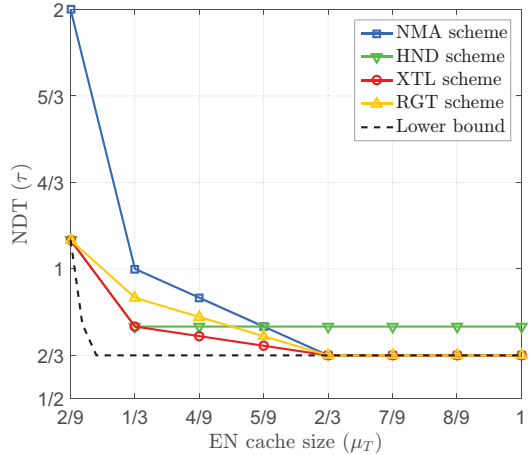}
\subcaption{}\label{Fig Comparison_in_RAN 1}
\end{minipage}
\begin{minipage}[t]{1\linewidth}
\centering
\includegraphics[scale=0.4]{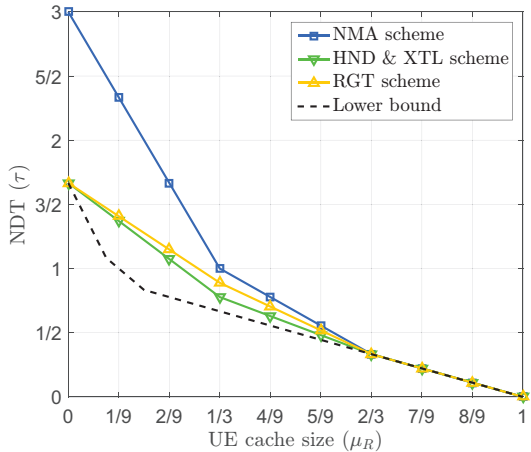}
\subcaption{}\label{Fig Comparison_in_RAN 2}
\end{minipage}
\caption{Achievable NDT in a $3\times 3$ RAN a) as a function of $\mu_T$ when $\mu_R=1/3$, and b) as a function of $\mu_R$ when $\mu_T=1/3$.}\label{Fig Comparison_in_RAN}
\vspace{-0.5cm}
\end{figure}

Finally, we compare the performance of the aforementioned achievable schemes in \cite{bothcache, niesen, mine,gunduz} along with the lower bound in \eqref{eqn taul2}.
Fig.~\ref{Fig Comparison_in_RAN} illustrates the NDTs achieved by different schemes in a $3\times3$ RAN. Note that the performance of the scheme in \cite{bothcache}, referred to as NMA, the scheme in \cite{niesen}, referred to as HND, and the scheme in \cite{gunduz}, referred to as RGT, at non-integer points is obtained via memory sharing between integer points. The performance of the scheme in \cite{mine}, referred to as XTL, on the other hand, is obtained directly from \eqref{eqn mine 3x3} at any cache size pair. It can be seen from Fig.~\ref{Fig Comparison_in_RAN 1} that the XTL scheme and the RGT scheme are optimal when $(\mu_R=1/3,\mu_T = 2/9)$ and $(\mu_R=1/3,\mu_T\ge 2/3)$. This is because both XTL and RGT schemes exploit jointly the IC, ZF, and IA gains, while the HND scheme loses its optimality since it only considers IC and IA gains and the NMA scheme is due to that it is limited to one-shot linear transmission. Compared to the RGT scheme, the XTL scheme is better when $2/9<\mu_T<2/3$, because it exploits caching gains more effectively through joint ZF and IA.
Fig.~\ref{Fig Comparison_in_RAN 2}  further shows that when $\mu_T=1/3$, i.e., when the accumulated cache capacity among all the ENs is just enough to store the entire library, the performances of the XTL and HND schemes overlap and are better than the NMA and RGT schemes. Nevertheless, there is still a  non-negligible gap between the best achievable scheme and the lower bound in \eqref{eqn taul2} at small cache size regimes, including $(\mu_R=1/3, 2/9 < \mu_T < 2/3)$ and  $( 0 < \mu_R < 2/3, \mu_T = 1/3)$. This gives rise to the opportunities of tightening the lower bound further or advancing the existing delivery schemes.

%

\section{Cache-aided RAN with dedicated fronthaul}\label{s:Dedicated}

In this section, we study the joint design of cloud processing and edge caching in RANs with dedicated fronthaul links, also referred to as F-RANs, with each EN having a dedicated fronthaul link. In the F-RAN model, the ENs can fetch contents from the cloud through dedicated finite-capacity frounthaul links (see Fig. \ref{Fig model dedicated}). These dedicated links can help overcome the ENs' limited storage capacity. To this end, we introduce two transmission schemes considering caches at both the ENs and UEs, in addition to dedicated fronthaul links, where the fronthaul link capacity as well as the users' demands is unknown during the placement phase. The first scheme exploits centralized cache placement while the second one is decentralized.

\subsection{Caching for dedicated fronthaul links}\label{ss:novel dedicated}
In this section we study both centralized and decentralized cache placement at the UEs, while caching at the ENs is done in a centralized manner. We highlight that centralized coordination of the cache contents at the ENs, which model fixed base stations, is a sensible assumption. At the UE side, we first consider centralized cache placement to illustrate the main ideas, and then focus on decentralized caching, which is more appropriate to model the mobile behavior of UEs roaming around. The proposed delivery strategies for centralized and decentralized caching are based on the ideas presented in \cite{gunduz} and the soft-transfer delivery scheme in \cite{simeone}. These achievable schemes aim to minimize the NDT taking into account the interplay between the ENs' caches, UEs' caches and the capacity of the fronthaul links. The proposed delivery strategies jointly exploit cache-aided IA, ZF, and IC as well as the ENs' fronthaul links, and is studied for both half- and full-duplex transmission at the ENs.

In comparison with \cite{simeone}, where the authors consider a F-RAN with caches only at the ENs, our model also exploits caches at the UE side, similarly to \cite{girgis2017decentralized,Cran}. Moreover, we do not assume the knowledge of the capacity of the fronthaul links during the placement phase, which is a more realistic assumption, since the fronthaul link condition and its capacity can be time-varying and unknown during off-peak traffic periods.
\subsubsection{Cache Placement Phase}\label{s:Decentralized}
The ENs leverage the following centralized cache placement strategy (see Fig. \ref{placement}):
\begin{itemize}
  \item $\mu _T<\frac{1}{K_T}$: EN $p$, for $p\in [K_T]$, stores $\mu_TF$ non-overlapping bits of each file of the library, and the remainder of the files are accessible only from the edge cloud through the fronthaul links.
  \item $\mu_T\geq\frac{1}{K_T}$: Each file of the library is split into two parts, one part is replicated at all the ENs while the other part is stored collectively across the ENs (each EN caches a distinct part). As a result, each EN stores $\frac{(1-\mu _T)F}{K_T-1}$ non-overlapping bits of  each file of the library plus the same $\frac{(K_T\mu _T-1)F}{K_T-1}$ bits of each file, fulfilling the memory size constraint.
\end{itemize}
Unlike \cite{simeone}, the fronthaul link capacity is unknown during the placement phase; therefore, the placement cannot be optimized based on the fronthaul multiplexing gain $r_D$.

We consider both centralized and decentralized cache placement at the UEs. In the case of centralized cache placement, we adopt the symmetric file splitting and caching scheme of Section \ref{section cache model} at the UEs when $K_R\mu_R=i\in[0:K_R]$ is an integer.
For each file we denote the fractional size of the subfile stored at $K_R\mu _R=r$ out of $K_R$ UEs,  $r \in \mathbb{Z}$, by:
\begin{align}
\begin{split}
   f'(K_R,\mu _R)=\frac{1}{\binom{K_R}{K_R\mu _R}}.
\end{split}
\end{align}

\begin{figure}[!t]
\centering
\includegraphics[scale=0.27]{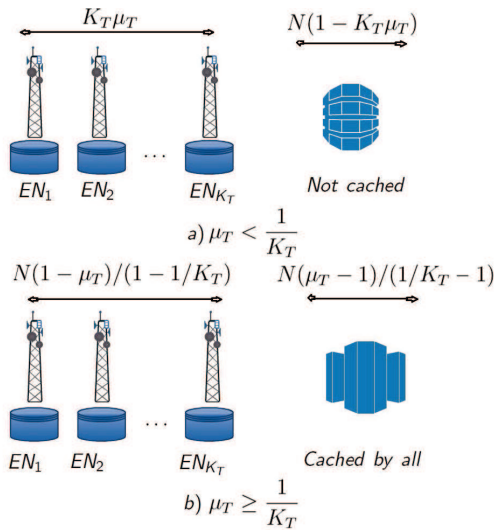}
\caption{Library partition after the cache placement at ENs.}
\label{placement}
\end{figure}
\par
For decentralized cache placement, we adopt the scheme of  Section \ref{section cache model}. We denote by $f_r$ the fractional size of the subfile stored at any $r$ out of $K_R$ UEs, each of them randomly caching $\mu _RF$ bits from each file as defined in \eqref{eqn decentralized receiver subfile size}.

\begin{remark}
We remark here that, when centralized caching is employed at both the ENs and UEs, while coordination among UE caches is needed, no coordination is required between UEs and ENs.
If the portion of each file to be stored across all the ENs is chosen randomly, for a sufficient large $F$, we can argue by the law of large numbers that, each of the subfiles stored at the UEs is also divided into two parts, one stored cooperatively across all the ENs and the other stored by either all the ENs, or none of them (depending on $\mu_T$). 
\end{remark}

\subsubsection{Delivery Phase}\label{s:delivery}


For the worst-case scenario we assume that each UE requests a distinct file from the library, and let, without loss of generality, UE $q$ request file $W_q$, $q \in [K]$.
\begin{enumerate}
  \item (IA and IC gains) We first consider the subfiles cached at one EN and $j$ UEs:
\begin{align}
\{W_{i,\Phi,\{p\}}\!:\!i\!\in
\![K_R],|\Phi|\!=\!j,\Phi\subset[K_R],i\notin\Phi,p\!\in\![K_T]\}. \label{eqn fileset 1}
\end{align}
Since each UE already has some of the undesired subfiles cached, it can cancel the interference caused by these subfiles. Therefore, by combining IC with IA, the achievable NDT for these subfiles is given by
\begin{equation} \label{eqmax}
\tau _{IA}(j)=\frac{K_R\ \binom{K_R-1}{j}f'}{\max \left\lbrace \frac{K_TK_R}{K_T+K_R-(j+1)},j+1\right\rbrace},
\end{equation}
where $f'$ denotes the fractional size of the transmitted subfiles $\{W_{i,\Phi,\{p\}}:p\in[K_T]\}$ for each $i$ and $\Phi$ in \eqref{eqn fileset 1}, which is given by $f'=f_j$ for decentralized placement and $f'=f'(K_R,\mu _R)$ for centralized placement.  In \eqref{eqmax}, the numerator represents the total fractional size of the transmitted subfiles in \eqref{eqn fileset 1}. The denominator represents the achievable sum DoF, where the first argument of the $\max$ operator corresponds to the DoF achieved in \cite{gunduz} by joint IA and IC, and the second argument corresponds to the joint transmission of the subfiles using IC. More specifially, to achieve the second argument, ENs take turn to transmit messages using XOR combining to all the UEs, and at each turn the access network becomes a single-server with a shared link as in \cite{fundamentallimits}.

\item (ZF and IC gains) Next we consider the subfiles cached at all the ENs and $j$ UEs:
\begin{align}
\{W_{i,\Phi,[K_T]}:i\in[K_R],|\Phi|=j,\Phi\subset[K_R],i\notin\Phi\}.\label{eqn fileset 2}
\end{align}
Similarly to the previous case, each UE can cancel the interference caused by undesired subfiles already cached locally. As a result, by leveraging a combination of IC and ZF techniques, the achievable NDT of  subfiles in \eqref{eqn fileset 2} is given by:
\begin{equation}\label{eqnZF}
\tau _{ZF}(j)=\frac{K_R\ \binom{K_R-1}{j}f' }{\min \left\lbrace K_T+j,K_R\right\rbrace},
\end{equation}
where $f'$ denotes the fractional size of each of the transmitted subfiles in \eqref{eqn fileset 2}, which is given by $f'=f_j$ for decentralized placement and $f'=f'(K_R,\mu _R)$ for centralized placement. Again, the numerator in \eqref{eqnZF} corresponds to the total fractional size of the transmitted subfiles in \eqref{eqn fileset 2}, while the denominator corresponds to the sum DoF. If the files to be transmitted are carefully selected, the ENs that can cache the same contents can reduce the number of interfering signals at the UEs by $j$. Consider, for example,the $3 \times 3$ F-RAN and $j=1$.  Subfiles $W_{1,\{2\},[3]}$, $W_{2,\{3\},[3]}$ and $W_{3,\{1\},[3]}$, requested by UE 1, UE 2 and UE 3, respectively, can be jointly transmitted, and by ZF we can cancel $W_{2,\{3\},[3]}$ at UE 1, $W_{3,\{1\},[3]}$ at UE 2 and $W_{1,\{2\},[3]}$ at UE 3. The interfering subfiles at each UE are already cached at this UE, e.g., $W_{3,\{1\},[3]}$ is cached at UE 1, so these interferences can be canceled. As a result, the desired subfiles are received interference-free with an equivalent DoF of $K_T+j$ over the access link.
\end{enumerate}
\par
Now, we proceed to present the delivery strategies based on the IA-IC and ZF-IC techniques. Depending on the EN cache size and the fronthaul capacity, three different delivery strategies are proposed: access-only delivery for $\mu_T\geq 1/K_T,r_D=0$, cloud-only delivery for $\mu_T=0,r_D>0$, and joint cloud and access-aided delivery for $0<\mu_T< 1/K_T,r_D>0$. These delivery strategies are based on half-duplex transmissions, while their full-duplex counterparts are obtained later.

\textbf{Access-Only Delivery ($\mu_T\geq 1/K_T,r_D=0$):} When cloud links are not available, i.e., $r_D=0$, all demands must be satisfied from the EN and UE caches as in Section \ref{s:Dedicated}.  In the proposed cache placement strategy (Section \ref{s:Decentralized}) when $\mu _T \geq 1/K_T$, each file is divided into two parts. One part is collectively cached across all the ENs with each EN caching a distinct part, while the other is replicated at all the EN caches. As a result, the transmission of the required subfiles can be carried out by a combination of the IA-IC and ZF-IC techniques, and achieves the following NDT:
\begin{equation}
\tau ^D_{a}=\tau ^D_{a_F}+\tau ^D_{a_A},
\end{equation}
where $\tau ^D_{a_F}=0$ is zero due to the lack of a fronthaul link. For $K_R\mu _R\in \mathbb{Z}$ the access NDT for centralized caching is given by
\begin{align}
\tau^D_{a_A}& = K_T\frac{1-\mu _T}{K_T-1}\tau _{IA}(K_R\mu _R)+\frac{K_T\mu_T-1}{K_T-1}\tau _{ZF}(K_R\mu _R),
\end{align}
while the non-integer points can be obtained through memory-sharing. The access NDT for decentralized caching is
\begin{align}
\tau^D_{a_A}&=\sum _{j=0}^{K_R-1}\left( K_T\frac{1-\mu _T}{K_T-1}\tau _{IA}(j)+\frac{K_T\mu_T-1}{K_T-1}\tau _{ZF}(j) \right).
\end{align}


\textbf{Cloud-Only Delivery ($\mu_T=0,r_D>0$):}  Cloud-only delivery occurs when there are no caches at the ENs, i.e.,  $\mu_T=0$, so the UEs' demands can only be satisfied by the cloud server, which requires a non-zero fronthaul link capacity, i.e., $r_D>0$. In this case, we employ the soft-transfer technique \cite{simeone} to deliver the bits of each of the requested $K_R$ files that are not already cached locally at the requesting UE. In the soft-transfer scheme the cloud server implements ZF-beamforming over the access network treating all the ENs as one virtual multi-antenna transmitter. The resulting encoded signals that should be transmitted by the ENs are quantized and transmitted to the ENs over the fronthaul links. In the soft-transfer approach, the UE caches are exploited for both the ZF and IC gains as explained above. The number of UEs at which the transmitted signal for each subfile can be neutralized or cancelled is $\min \left\lbrace K_R,K_T+j\right\rbrace-1$, exploiting the IC and ZF gain of Example 1.

For this particular network configuration, the following NDT is achievable:
\begin{equation}
\tau ^D_c=\tau ^D_{c_F}+\tau ^D_{c_A},
\end{equation}
where we have, for $K_R\mu _R\in \mathbb{Z}$,

\begin{equation*}
\begin{split}
\tau ^D_{c_F}&=\frac{K_R \binom{K_R-1}{K_R\mu_R}}{K_T r_D}f'(K_R,\mu_R),\\
\tau ^D_{c_A}&= \frac{K_R\ \binom{K_R-1}{K_R\mu_R}}{\min \left\lbrace K_R,K_T+K_R\mu_R\right\rbrace}f'(K_R,\mu_R),
\end{split}
\end{equation*}
for centralized caching, while for decentralized caching we have
\begin{equation*}
\begin{split}
\tau ^D_{c_F}&=\sum_{j=0}^{K_R-1}\frac{K_R\binom{K_R-1}{j} }{K_T r_D}f_j,\\
\tau ^D_{c_A}&= \sum _{j=0}^{K_R-1}\frac{K_R\ \binom{K_R-1}{j}}{\min \left\lbrace K_R,K_T+j\right\rbrace}f_j.
\end{split}
\end{equation*}

\textbf{Joint Cloud and Access-Aided Delivery ($0<\mu_T< 1/K_T,r_D>0$):} When $0<\mu_T< 1/K_T$, the ENs cannot store the whole library collectively; thus, both the fronthaul links and the EN caches must be used for the successful delivery of the requests. Based on the cache placement scheme in Section  \ref{s:Decentralized},  part of the requested files are available in each of the ENs, while the rest of them will be sent through the fronthaul links. The subfiles that are available  at the EN caches are transmitted using the IA-IC techniques, and the rest through the soft-transfer scheme. Therefore, the NDT achieved by centralized caching, for $K_R\mu _R\in \mathbb{Z}$, is given by
\begin{align}
\tau ^D_{h}=\tau ^{D}_{h_F}+ \tau ^D_{h_A},
\end{align}
where
\begin{align*}
\tau ^{D}_{h_F}&=(1-K_T\mu _T) \cdot \tau^D_{c_{F}},\\
\tau ^D_{h_A}&=K_T\mu _T \cdot \tau_{IA}(K_R\mu_R) +(1-K_T\mu _T) \cdot \tau ^D_{c_{A}},
\end{align*}
while the NDT of the decentralized caching scheme is
\begin{align}
\tau ^D_{h}=\sum _{j=0}^{K_R-1}K_T\mu _T \cdot \tau_{IA}(j) + (1-K_T\mu _T) \cdot \tau ^D_{c},
\end{align}
with
\begin{align*}
\tau ^{D}_{h_F}&=(1-K_T\mu _T) \cdot \tau ^D_{c_{F}},\\
\tau ^D_{h_A}&=\sum _{j=0}^{K_R-1}K_T\mu _T \cdot \tau _{IA}(j) +(1-K_T\mu _T) \cdot \tau ^D_{c_{A}}
.\end{align*}
We note that the NDT for non-integer points can be obtained by memory-sharing as before.

Combining the three delivery strategies proposed in Section \ref{s:delivery}, the following theorems provide an upper bound on the optimal NDT for \textit{half-duplex} and \textit{full-duplex} EN transmissions.
\begin{theorem}\label{thm joan 1}
For a cache-aided F-RAN with $K_T\ge2$ ENs, each with a cache of normalized size $\mu_T$,  $K_R\ge2$ UEs, each with a cache of normalized size $\mu_R$,  $N\ge K_R$ files, and a dedicated fronthaul link with capacity $C_F=r_D\log P>0$, the following NDT can be achieved by \textbf{half-duplex transmission}
\begin{equation}
\tau _S ^D=
\begin{cases}
\min \{ \tau ^D_{h}, \tau^D_{c}\},&\text{ if }\mu _T< \frac{1}{K_T}\\
\min \{ \tau ^D_{a}, \tau^D_{c}\},&\text{ if }\mu _T\geq \frac{1}{K_T}\\
\end{cases}.
\end{equation}
\end{theorem}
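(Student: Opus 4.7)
The plan is to prove Theorem \ref{thm joan 1} as a direct corollary of the three delivery strategies constructed in Section \ref{s:delivery}, since each of those strategies, once shown to be feasible in its applicable regime, yields a valid upper bound on the minimum NDT; taking the minimum over the applicable schemes then gives $\tau_S^D$. So the proof reduces to (i) certifying feasibility of each constituent scheme, (ii) verifying the two-hop NDT accounting under half-duplex operation, and (iii) arguing that both options listed in each branch of the $\min$ are indeed simultaneously available.

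First I would formalize that the cache placement of Section \ref{s:Decentralized} is demand-oblivious and thus legal under the caching-function definition of Section \ref{section system model}. Then, for the access-only strategy in the regime $\mu_T \geq 1/K_T$, feasibility follows by splitting the transmitted subfile families \eqref{eqn fileset 1} and \eqref{eqn fileset 2} into the two EN-side partitions of sizes $(1-\mu_T)/(K_T-1)$ and $(K_T\mu_T-1)/(K_T-1)$, applying IA-IC with DoF \eqref{eqmax} to the former and ZF-IC with DoF \eqref{eqnZF} to the latter, and invoking Remark \ref{remark_dof} to accumulate the access NDT $\tau^D_{a_A}$. For the cloud-only strategy ($\mu_T=0$, $r_D>0$), feasibility rests on the soft-transfer argument of \cite{simeone}: the cloud performs ZF-IC precoding over the virtual $K_T$-antenna array formed by all ENs, quantizes the resulting baseband signals, and forwards them on $K_T$ parallel fronthaul pipes of rate $r_D \log P$ each; in the high-SNR regime the quantization distortion is absorbed below the noise floor, so the access DoF is $\min\{K_R, K_T + j\}$ and the fronthaul cost per transmitted subfile class is $\binom{K_R-1}{j}/(K_T r_D)$, giving $\tau^D_{c_F}$ and $\tau^D_{c_A}$. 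The joint scheme is then the concatenation: the $K_T\mu_T$ fraction of each file already cached at the ENs is handled by IA-IC over the access, while the residual $1-K_T\mu_T$ fraction is handled by soft-transfer.

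Next I would account for half-duplex operation, where by definition $T=T_F + T_A$ and hence $\tau=\tau_F+\tau_A$; this is exactly how $\tau^D_h$, $\tau^D_a$, and $\tau^D_c$ were written as sums of fronthaul and access components. Non-integer $K_R \mu_R$ are handled via memory-sharing between the two nearest integer points, which preserves feasibility because the sequences of feasible codes at the two integer endpoints can be time-shared on disjoint coded blocks of a long file. Finally, observing that the cloud-only scheme is feasible whenever $r_D>0$, regardless of $\mu_T$, while the access-only and joint schemes are feasible in their respective $\mu_T$ regimes, the minimum in the theorem statement follows because feasibility of two schemes implies feasibility at the smaller of their two NDTs.

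The main obstacle I expect is not in the combinatorial bookkeeping but in cleanly justifying that the IA-IC and ZF-IC DoF values \eqref{eqmax} and \eqref{eqnZF}, which were derived assuming each subfile is cached at exactly one EN or at all ENs respectively, remain valid after the EN-side file split into the $(1-\mu_T)/(K_T-1)$ and $(K_T\mu_T-1)/(K_T-1)$ portions; I would address this by noting that within each portion the cache pattern is exactly of the required structure, so the per-subfile DoF carries over, and the only additional work is to scale by the portion sizes when summing fronthaul and access times. A secondary subtlety is verifying that the fronthaul bit-count in soft-transfer matches $R_F = \binom{K_R-1}{j}/K_T \cdot f'$ per subfile class, which uses the fact that the $K_T$ quantized streams each carry one ZF-beamformed symbol per access channel use and together span the needed DoF.
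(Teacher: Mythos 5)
Your proposal is correct and follows essentially the same route as the paper: Theorem \ref{thm joan 1} is proved there simply by noting that under half-duplex operation $\tau=\tau_F+\tau_A$, that the three delivery strategies (access-only, cloud-only, joint cloud and access-aided) are each feasible in their respective $\mu_T$ regimes, and that once $r_D$ is revealed the best applicable scheme is selected, which is exactly the minimum structure you argue for. Your extra care about feasibility of the DoF values after the EN-side split and about memory sharing for non-integer $K_R\mu_R$ is consistent with what the paper asserts, just spelled out in more detail than its brief proof.
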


\begin{proof}
In half-duplex transmission, the total NDT is the sum of the fronthaul  ($\tau_F$) and access ($\tau_A$) NDTs, which corresponds to the minimum of the NDTs of the \textit{cloud-only delivery} or \textit{joint cloud and access-aided delivery} when $\mu _T<1/K_T$; and the minimum of the NDTs of the \textit{cloud-only delivery} or \textit{access-only delivery} when $\mu _T\geq 1/K_T$.  Once the cloud link capacity is revealed, the best transmission scheme is chosen based on the fronthaul link rate $r_D$ and the EN cache size $\mu_T$. If $r_D$ is small, e.g., high network congestion,  \textit{joint cloud and access-aided} delivery will be used if $\mu _T<1/K_T$, and \textit{access-only} delivery if $\mu _T\geq 1/K_T$. On the other hand, if $r_D$ is large, \textit{cloud-only} approach outperforms the other two schemes.
\end{proof}

\begin{theorem}\label{thm joan 2}
For the cache-aided F-RAN with $K_T\ge2$ ENs, each with a cache of normalized size $\mu_T$,  $K_R\ge2$ UEs, each with a cache of normalized size $\mu_R$,  $N\ge K_R$ files, and a dedicated fronthaul link with capacity $C_F=r_D\log P>0$, the following NDT can be achieved by \textbf{full-duplex transmission}
\begin{equation}
\tau_P ^D\!=\!
\begin{cases}
\min \{\max \{ \tau ^{D}_{h_F}, \tau ^{D}_{h_A}\},\max \{\tau ^{D}_{c_F}, \tau ^{D}_{c_A}\}\},&\text{ if }\mu _T\!\leq\! \frac{1}{K_T}\\
\min \{\max \{ \tau ^{D}_{c_F}, \tau ^{D}_{c_A}\},\tau ^{D}_{a_A}\},&\text{ if }\mu _T\!\geq\! \frac{1}{K_T}
\end{cases}.
\end{equation}
\end{theorem}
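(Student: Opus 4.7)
The plan is to reuse the three delivery strategies constructed in the proof of Theorem~\ref{thm joan 1} (access-only, cloud-only, and joint cloud and access-aided delivery), each of which already decomposes its latency into a fronthaul component $T_F$ and an access component $T_A$. The only structural change needed for the full-duplex setting is to replace the additive latency $T_F+T_A$ by $\max\{T_F,T_A\}$, since an EN can now receive on its fronthaul link while simultaneously transmitting on the access channel. This immediately yields, per scheme, the full-duplex NDTs $\max\{\tau^D_{c_F},\tau^D_{c_A}\}$ for cloud-only delivery, $\max\{\tau^D_{h_F},\tau^D_{h_A}\}$ for joint delivery, and $\tau^D_{a_A}$ for access-only delivery (since $\tau^D_{a_F}=0$).

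To justify the $\max$ replacement rigorously, I would invoke a standard block-pipelining argument. Partition each requested file into $B$ chunks of size $F/B$. For the cloud-only scheme, the cloud transmits the soft-transfer quantized signal for chunk $k$ on the fronthaul during time window $k$, and the ENs re-transmit it on the access channel during window $k+1$. For the joint scheme, the cache-available subfiles delivered via IA-IC and the fronthaul-delivered soft-transfer signals are handled in parallel, chunk by chunk. Choosing $B\to\infty$ but $B/F\to 0$ as $F\to\infty$, the startup latency of the first chunk and the tail latency of the last chunk become negligible relative to the total, so the normalized latency of each scheme converges to the maximum of its fronthaul and access NDT components.

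The final step is to take the minimum over applicable strategies. When $\mu_T<1/K_T$ the ENs cannot collectively store the whole library, so the feasible choices are cloud-only or joint delivery, giving $\tau_P^D=\min\{\max\{\tau^D_{h_F},\tau^D_{h_A}\},\max\{\tau^D_{c_F},\tau^D_{c_A}\}\}$. When $\mu_T\geq 1/K_T$ the library is fully stored across the ENs, so the feasible choices are cloud-only or access-only, giving $\tau_P^D=\min\{\max\{\tau^D_{c_F},\tau^D_{c_A}\},\tau^D_{a_A}\}$. This reproduces the piecewise expression in the theorem. The NDT at non-integer cache points $K_R\mu_R\notin\mathbb{Z}$ is then obtained by memory-sharing, exactly as in the half-duplex case.

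The principal obstacle is verifying that the pipelining preserves both the fronthaul multiplexing gain of the soft-transfer scheme and the achievable DoFs underlying \eqref{eqmax} and \eqref{eqnZF}. Since IA and ZF gains are attained via symbol extensions over many channel uses, each chunk must be large enough to host the required extension length but small enough that the pipelining overhead vanishes; a two-layer scaling where the chunk size grows with $F$ but sublinearly (so $B\to\infty$ and $F/B\to\infty$ simultaneously) resolves this and preserves the per-chunk rate, DoF, and quantization-distortion characterizations in the high-SNR limit used to define the NDT.
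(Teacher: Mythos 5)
Your proposal is correct and follows essentially the same route as the paper: the paper's proof simply invokes the pipelined (full-duplex) transmission result of the Sengupta--Tandon--Simeone reference to replace $T_F+T_A$ by $\max\{T_F,T_A\}$ and then appeals to the achievability of the component fronthaul and access NDTs established in Theorem~\ref{thm joan 1}, which is exactly your block-pipelining argument followed by the minimum over the feasible strategies in each cache-size regime. Your additional care about choosing the chunk size large enough to accommodate the symbol extensions needed for IA/ZF while keeping the pipelining overhead asymptotically negligible is a detail the paper leaves implicit in the citation, but it does not change the argument.
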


\begin{proof}
From the results in \cite{simeone} for this type of transmission, we only need to prove the achievability of the fronthaul and access NDTs, which follow from Theorem \ref{thm joan 1}.
\end{proof}

\subsection{Numerical results}
In this subsection, we will present the NDT achieved by the caching and delivery schemes presented above, for some particular network setting, and compare the results with other schemes available in the literature. We first briefly introduce the benchmark schemes from the literature.
\subsubsection{Fully Centralized Caching}

In \cite{simeone}, the authors assume that only the ENs are equipped with caching capabilities. In the delivery phase, by exploiting  ZF (for subfiles cached at all the ENs), IA (for subfiles cached at only one EN), and soft-transfer (for subfiles not cached at any of the ENs) techniques, the authors show that the following  NDT is achievable via half-duplex EN transmission:
\begin{align}
  \tau_{\textrm{STS}}=\left\{\!
  \begin{array}{ll}
    (K_T\!+\!K_R\!-\!1)\mu_T+(1\!-\!\mu_TK_T)\left(\frac{K_R}{\min\{K_T,K_R\}}\!+\!\frac{K_R}{K_Tr_D}\right),&\textrm{ for $\mu_T\le\frac{1}{K_T},r_D\le r_{th}$},\\
    \frac{K_R}{\min\{K_T,K_R\}}\frac{K_T\mu_T-1}{K_T-1}+(1-\mu_T)\frac{K_T+K_R-1}{K_T-1},&\textrm{ for $\mu_T\ge\frac{1}{K_T},r_D\le r_{th}$},\\
    \frac{K_R}{\min\{K_T,K_R\}}+\frac{(1-\mu_T)K_R}{K_Tr_D},&\textrm{ for $0\le\mu_T\le1,r_D\ge r_{th}$}.
  \end{array}
  \right.
\end{align}
where $r_{th}\triangleq \frac{K_R(K_T-1)}{K_T(\min\{K_T,K_R\}-1)}$.

In \cite{goselingdelivery}, the authors generalize this result by introducing the achievable \textit{NDT region} to characterize the trade-off among the latencies achieved by different users' demand combinations. An achievable scheme is presented for a F-RAN with two ENs and two UEs.

\subsubsection{Fully Decentralized Caching}
The authors in \cite{girgis2017decentralized} consider decentralized cache placement at both the EN and the UE sides. Note that, as discussed in Section \ref{section cache model}, with decentralized caching at the ENs the presence of fronthaul links is a requirement to satisfy all possible UE demands. In \cite{girgis2017decentralized} the authors propose a delivery scheme for F-RAN with two ENs, which leverages ZF, IA and soft-transfer techniques opportunistically. The achievable NDT via half-duplex EN transmission is given by
\begin{align}
  \tau_{\textrm{GENE}} =
  \left\{
  \begin{array}{ll}
  \tau_F^a+\tau_A^a,&\textrm{ for $0<r_D\le K_R$}\\
  \tau_F^b+\tau_A^b,&\textrm{ for $K_R < r_D$}
  \end{array}
  \right.,\notag
\end{align}
where
\begin{align}
  &\tau_F^a\triangleq \frac{(1-\mu_T)^2(1-\mu_R)}{r_D\mu_R}\cdot\left[1-(1-\mu_R)^{K_R}-\frac{K_R\mu_R}{2}(1-\mu_R)^{K_R-1}\right],\notag\\
  &\tau_F^b\triangleq \frac{(1-\mu_T)^2(1-\mu_R)}{r_D\mu_R}\cdot\left[1-(1-\mu_R)^{K_R}-\frac{K_R\mu_R}{2}(1-\mu_R)^{K_R-1}\frac{1-3\mu_T}{1-\mu_T}\right],\notag\\
  &\tau_A^a\triangleq \frac{1-\mu_R}{\mu_R}\cdot\bigg[1-(1-\mu_R)^{K_R}-\left(\frac{K_R}{2}-\mu_T(1-\mu_T)\right)\mu_R(1-\mu_R)^{K_R-1}\bigg],\notag\\
  &\tau_A^b\triangleq \frac{1-\mu_R}{\mu_R}\cdot\left[1-(1-\mu_R)^{K_R}-\frac{K_R\mu_R}{2}(1-\mu_R)^{K_R-1}\right].
\end{align}
\subsubsection{Numerical Comparison}
 In what follows, we present the comparison between the achievable NDTs of the proposed caching and delivery schemes, the scheme presented in \cite{simeone} (referred to as STS), and the one in \cite{girgis2017decentralized} (referred to as GENE).

\begin{figure}[!tbp]
\centering
\includegraphics[scale=0.4]{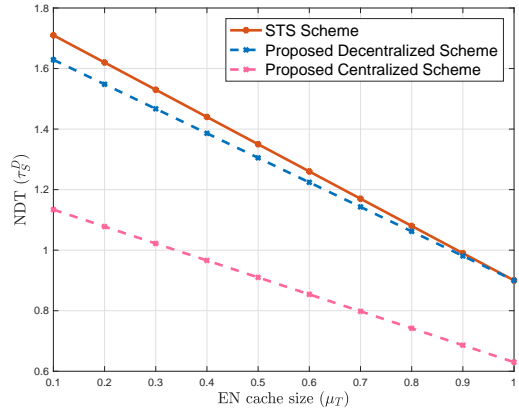}
\caption{NDT vs. EN cache size $\mu_T$ for access-only delivery when $K_T=K_R=10$,  $\mu_R=0.1$  and $r_D=0$.}
\label{no-cloud}
\end{figure}

We first consider access-only delivery, i.e., $r_D=0$, by assuming $\mu _T\geq 1/K_T$. In Fig. \ref{no-cloud}, we compare the NDT of the proposed schemes with STS scheme when $K_T=K_R=10,\mu_R=0.1$. For fairness of the comparison, we added local caching gain to the STS scheme, since it originally only considers EN caches. Fig. \ref{no-cloud} illustrates the gains from UE caches in terms of the NDT in a F-RAN. We observe that as the EN cache size increases, the performance improvement of the proposed schemes compared to STS shrink. This is because, as $\mu_T$ increases, the number of subfiles transmitted using ZF in our delivery scheme increases, and the benefit of UE caches for IC diminishes, as they only account for uncoded caching gain as in the STS scheme. However, for a limited $\mu_T$, UE caches can provide gains beyond uncoded caching gain thanks to the combination of IA, ZF and IC techniques. Moreover, it can be seen that centralized and decentralized schemes are very close in performance; therefore, even when the UE caches cannot be centrally coordinated, the loss in NDT is relatively small.

\begin{figure}[!tbp]
\centering
\includegraphics[scale=0.4]{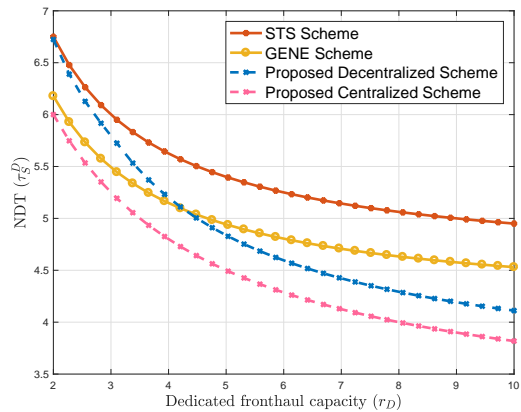}
\caption{NDT vs. founthaul link capacity $r_D$ for cloud-only delivery when $K_T=2$, $K_R=10$, $\mu _R=0.1$ and $\mu _T=0$.}
\label{no-cahce}
\end{figure}

In Fig. \ref{no-cahce}, we consider cloud-only delivery, i.e., $\mu _T=0$, with half-duplex EN transmission. Here, we plot the NDT performance with respect to the fronthaul link capacity $r_D$. As expected, the NDT decays with $r_D$, and saturates to a fixed value, which essentially characterizes the access delay.  We consider $K_T=2$ to be able to compare the results with that of the GENE scheme of \cite{girgis2017decentralized}. It must be noted that the STS scheme only exploits local caching gain from UE caches, while the GENE scheme assumes decentralized caching at the ENs; and hence, their performance is relatively poorer. The GENE scheme has worse performance compared to our proposed decentralized scheme when $r_D$ is high, because the former scheme employs soft-transfer scheme only for a part of the files that is not cached anywhere in the network, whereas our proposed decentralized scheme employs soft-transfer scheme that enables ZF at the ENs and also benefits from the UE caches. As the cloud rate increases, the benefit of joint soft-transfer and centralized cache placement outperforms significantly the GENE scheme.

\begin{figure}[!tbp]
\centering
\includegraphics[scale=0.4]{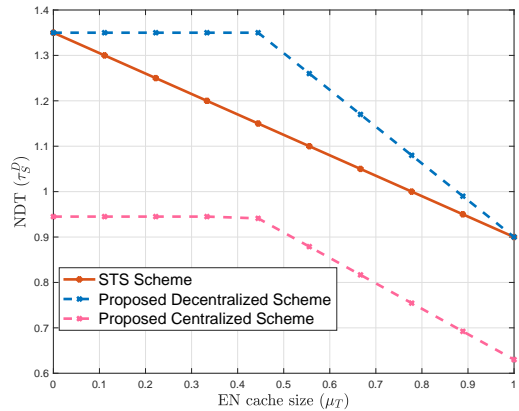}
\caption{NDT vs. EN cache size $\mu_T$ for joint cloud and access-aided delivery when $K_T=K_R=10$, $\mu _R=0.1$, $r_D=2$.}
\label{general}
\end{figure}

Joint cloud and access-aided delivery is considered in Fig. \ref{general}.  We observe that the performance of the proposed centralized scheme is significantly better than that of the STS scheme, thanks to the coordination of the UE caches, and to the exploitation of  IA, ZF and IC techniques jointly. We reemphasize that our caching strategies do not assume the knowledge of the fronthaul link capacities. This is motivated from the practical consideration that the placement and delivery phases are typically carried out over different time frames, and an accurate prediction of the fronthaul link capacities during the placement phase is too strong an assumption. The consequence of this limitation can be observed in Fig. \ref{general} where, due to the high cloud link capacity, the STS scheme achieves a lower NDT compared to our proposed decentralized scheme. The initial flat performance of the proposed schemes is because we do not start exploiting the EN caches until $\mu_T=0.4$, and employ the soft-transfer scheme before that point, whose performance does not depend on $\mu_T$ in this case since we assume $K_T = K_R$. However, even though the cloud rate is unknown during the placement phase, the proposed decentralized scheme performance approaches that of  the STS as $\mu_T$ increases.

\begin{figure}[!tbp]
\centering
\includegraphics[scale=0.4]{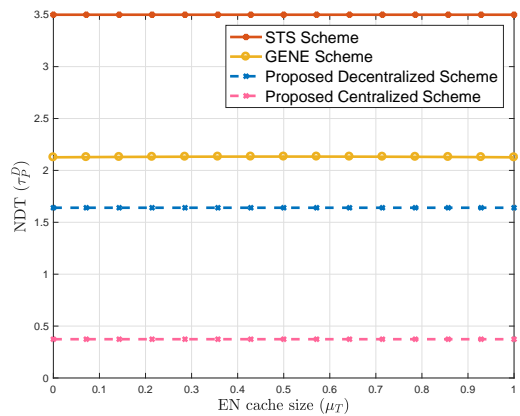}
\caption{NDT vs. EN cache size $\mu_T$ for joint cloud and access-aided delivery with full-duplex EN transmission when $K_T=2$, $K_R=10$, $\mu _R=0.1$, $r_D=3$.}
\label{pipe1}
\end{figure}
\par

We conclude this section by considering  full-duplex ENs. As expected, Fig. \ref{pipe1} shows significant reduction in the achieved NDT compared to half-duplex ENs. Particularly noticeable is the the low NDT achieved by the centralized placement scheme. Note in Figure \ref{general} that the NDT of 0.95 is achievable in the case of half-duplex ENs for the same setting with cloud only delivery (i.e. $\mu_T=0$), while it reduces to $0.5$ for full duplex ENs.
\par

Interesting is the NDT behavior of the pipeline transmission presented in Figure \ref{pipe1}, where a clear reduction of it is obtained. For the configuration presented in the figure, the lowest NDT is always obtained by soft-transfer delivery.  For  $\mu_T = 0, 0.1, .0.2$  the maximum between edge and the fronthaul NDT is initially given by the fronthaul. Then as $\mu_T$ increases the fronthaul delay is reduced and the edge increases (down slope). Then IA at the edge becomes more of a burden and the edge delay becomes dominant increasing until $\mu_T$ = 0.5. Note that, contrary to what is usually believed in caching, memory sharing could not be employed as we do not know the cloud rate in advance. Memory sharing relies on that all network parameters are known, and thus the line between two achievable NDT points is also achievable  by means of splitting the caches in two parts, one that leverages the delivery scheme of one of the points and the other, that of the other point proportionally. However, in our scheme we cannot modify the cache placement beforehand as  $r_D$ is unknown during the placement phase and thus, we cannot obtain the line of achievable points \textit{a priori.} As a result, the proposed scheme becomes one possible solution which tries to satisfy the worst case, i.e., lack of cloud links.

\section{Cache-Aided RAN with Wireless Fronthaul}
\label{section Caching gain in F-RAN with wireless fronthaul}
In this section, we consider the F-RAN where each EN is connected to the cloud via a shared wireless link operated by an MBS as shown in Fig.~\ref{Fig model wireless}.   It is important to emphasize that the fronthaul network from the cloud to the ENs can be viewed as a broadcast channel with receiver caches, where coded multicasting or IC can be exploited but needs to be designed jointly with the access network in the F-RAN model.
The coded caching framework in F-RAN with wireless fronthaul has been previously studied in   \cite{koh2017cloud,Cran}.
In specific, \cite{koh2017cloud} focused on a $2\times2$ network with EN caches and proposed a scheme that exploits both coded and uncoded multicasting over the fronthaul link as well as IA and ZF opportunistically over the access link. It is shown that, under full-duplex EN transmission, coded multicasting over the fronthaul link is unnecessary to achieve the optimal NDT performance in certain cases.
The work \cite{Cran} considered a general $K_T\times K_R$ F-RAN architecture with caches at both the EN and UE sides, and presented a network-coded fronthauling strategy in conjunction with ZF over the access link for half-duplex ENs.
Note that both \cite{koh2017cloud} and \cite{Cran} assumed centralized cache placement.
In this section, we propose a new delivery scheme with decentralized cache placement at all the ENs and UEs for this model.  Note that, due to the wireless fronthaul connection, considering decentralized cache placement at the EN side as well is of great practical interest.

For simplicity, we focus on half-duplex EN transmission only,  though the extension to full-duplex transmission is straightforward.  In our proposed scheme,  the wireless fronthaul link is used not only to fetch the requested file bits which are not available in any EN cache, but also the file bits already cached at some but not all ENs to boost transmission cooperation to any desired level in the access link.
%
%
The access transmission in our proposed delivery scheme is similar to \cite{mine}, which transforms the access link into a cooperative X-multicast channel. Based on the proposed delivery scheme, we obtain an achievable upper bound on the optimal NDT.  We also obtain a theoretical lower bound on the optimal NDT following cut-set-like arguments in the fronthaul and access networks separately. It is shown that the multiplicative gap between the upper and lower bounds is within $12$.

\subsection{Delivery Scheme}\label{section V 3x3}
In this subsection we present the proposed delivery scheme using the $3\times3$ F-RAN model as an example.
%
As before, we assume that UE $q$  desires $W_q$, for $q\in[3]$. Using the notations defined in Section \ref{section cache model}, we denote $W_{q,\Phi,\Psi}$ as the subfile desired by UE $q$ and cached at UE set $\Phi$ and EN set $\Psi$.
Excluding the locally cached subfiles,  each UE $q$, for $q\in[3]$, wants to receive subfiles $\{W_{q,\Phi,\Psi}:\Phi\not\ni q,\Phi\subseteq[3],\Psi\subseteq[3]\}$. We divide the subfiles wanted by all the UEs into different groups according to the size of $\Phi$ and $\Psi$, indexed by $\{(m,n):m\in[0:2], n\in[0:3]$, such that subfiles in group $(m,n)$ are cached at $m=|\Phi|$ out of $K_R$ UEs and $n=|\Psi|$  out of $K_T$ ENs.  As a result of random decentralized cache placement,  the fractional size of each subfile in group $(m,n)$ is given by $f_{m,n}$ as shown in \eqref{eqn decentralized subfile size} at large file size.
There are $3\binom{2}{m}\binom{3}{n}$ subfiles in group $(m,n)$. Each group of subfiles is delivered individually in a time-division manner. In the following, we present the delivery strategy for two representative groups, $(m,0)$ and $(m,1)$, where $m\in[0:2]$.

\subsubsection{Delivery of Group $(m,0)$}\label{section delivery33 m0}
Each subfile in group $(m,0)$ is desired by one UE, cached at $m$ other UEs, but none of the ENs. Fronthaul transmission is compulsory in the delivery phase since these subfiles are not available at any of the ENs. Instead of transmitting these subfiles to all the ENs one by one over the wireless fronthaul link, we utilize the local cache contents of the UEs, if $m\ne 0$,  and exploit the coded multicasting gain through XOR combining for these subfiles, similarly to \cite{fundamentallimits}. The specific delivery scheme is given below.

\textbf{fronthaul-compulsory delivery}:  The cloud generates a set of coded messages given by
\begin{align}
  \left\{ W_{\Phi^+,\emptyset}^\oplus \triangleq  \bigoplus_{q\in\Phi^+}  W_{q,\Phi^+\backslash \{q\},\emptyset} : \Phi^+ \subseteq [3],|\Phi^+| = m+1 \right\},\label{eqn example m0 1}
\end{align}
where each coded message  $W_{\Phi^+,\emptyset}^\oplus$  has $f_{m,0}F$ bits, and is desired by the UEs in set $\Phi^+$.  We let the MBS naively multicast each coded message in \eqref{eqn example m0 1}  to all the three ENs. The fronthaul NDT is then given by
\begin{align}
  \tau_F=\frac{\binom{3}{m+1}f_{m,0}}{r_W},\label{eqn 33m0 fronthaul}
\end{align}
where  $r_W$ is the multiplexing gain of the wireless fronthaul link defined before.

By such naive multicasting in the fronthaul link, each EN now has access to all the coded messages in \eqref{eqn example m0 1}, and can transmit with full cooperation in the access network. The access channel thus becomes the $\binom{3}{3}\times\binom{3}{m+1}$ cooperative X-multicast channel, whose achievable per-user DoF is $d_{m,3}=1$ by \eqref{eqn tau ip}. Since each UE desires $\binom{2}{m}$ coded messages, the access NDT is given by
\begin{align}
  \tau_A=\binom{2}{m}f_{m,0}.\label{eqn 33m0 access}
\end{align}

Summing up \eqref{eqn 33m0 fronthaul} and \eqref{eqn 33m0 access}, the total NDT for group $(m,0)$ is
\begin{align}
  \tau_{m,0}=\frac{\binom{3}{m+1}f_{m,0}}{r_W}+\binom{2}{m}f_{m,0}.\notag
\end{align}

\subsubsection{Delivery of Group $(m,1)$}\label{section delivery33 m1}
Unlike the subfiles in group $(m,0)$, each subfile in group $(m,1)$ is already cached at one EN, and therefore the fronthaul transmission is optional. To utilize the UE caches, if $m\ne 0$, we can still generate coded messages as in \eqref{eqn example m0 1} but at each EN rather than the cloud. In specific, each EN $p$, for $p\in[3]$, generates:
\begin{align}
   \left\{ W_{\Phi^+ ,\{p\}}^\oplus \!\triangleq\!  \bigoplus_{q\in\Phi^+}\!  W_{q,\Phi^+\backslash \{q\},\{p\}} : \Phi^+ \subseteq [3], |\Phi^+ | = m + 1 \right\} . \label{eqn example m1 1}
\end{align}
Each coded message $W_{\Phi^+,\{p\}}^\oplus$ has $f_{m,1}F$ bits, and is desired by the UEs in set $\Phi^+$. In the following, we introduce the transmission of these coded messages over the access network with and without the aid of the fronthaul network, respectively.

\textbf{Access-Only Delivery:} Each EN $p$, for $p\in[3]$, sends $\{W_{\Phi^+,\{p\}}^\oplus\}$ in the access network, and the access channel becomes the $\binom{3}{1}\times\binom{3}{m+1}$ cooperative X-multicast channel with achievable per-user DoF $d_{m,1}$ given in \eqref{eqn tau ip} . Since each UE desires $3\binom{2}{m}$ messages, the  total NDT is given by
\begin{align}
  \tau=\frac{3\binom{2}{m}f_{m,1}}{d_{m,1}}.\label{eqn 33m1 nofronthaul}
\end{align}

\textbf{Fronthaul-Aided Delivery: }With the aid of fronthaul, we can allow ENs to share the coded messages in the fronthaul network so as to form transmission cooperation among ENs in the access network. As a price to pay for the EN cooperation gain, additional fronthaul delivery latency will be caused.  Thus, the optimal cooperation strategy should balance the transmission time between the access network and the fronthaul network.

Assume that after the aid of fronthaul transmission, every set of $1+i$ ENs can form a cooperation group in the access link, where $i \in  [2]$ is a design parameter to balance the tradeoff mentioned above. We split each coded message $W_{\Phi^+,\{p\}}^\oplus$ in \eqref{eqn example m1 1} into $\binom{2}{i}$ sub-messages $\{W_{\Phi^+,\{p\}}^{\oplus,\Psi^+}:\Psi^+\subseteq[3],|\Psi^+|=1+i,p\in\Psi^+\}$, each with $\frac{f_{m,1}}{\binom{2}{i}}F$ bits and sent by EN set $\Psi^+$ exclusively in the access network. Consider an arbitrary EN set $\Psi^+$ with size $1+i$. The sub-messages to be sent by this set are
\begin{align}
\left\{W_{\Phi^+,\{p\}}^{\oplus,\Psi^+}:\Phi^+\subseteq[3],|\Phi^+|=m+1,p\in\Psi^+\right\}.\label{eqn 33m1 fronthaul 1}
\end{align}
Therefore, to have these sub-messages ready at their corresponding EN sets, the MBS needs to send $\{W_{\Phi^+,\{p\}}^{\oplus,\Psi^+}:\Phi^+\subseteq[3],|\Phi^+|=m+1\}$ to ENs $\{p':p'\in\Psi^+\backslash \{p\}\}$ which do not cache them. Given that each sub-message is already cached at one EN, an additional layer of pair-wise XOR combining on the top of these  sub-messages can be used to exploit the IC gain in the fronthaul network. In specific, the MBS generates a set of coded sub-messages
\begin{align}
  \left\{ W_{\Phi^+,\{p\}}^{\oplus,\Psi^+}\oplus W_{\Phi^+,\{p'\}}^{\oplus,\Psi^+} : \Phi^+\subseteq\![3],|\Phi^+|=m\!+\!1,p,p'\!\in\Psi^+\right\}\notag
\end{align}
with each intended to ENs $p$ and $p'$ in set $\Psi^+$.
Upon receiving the above  coded sub-messages, each EN in $\Psi^+$ can decode its desired sub-messages with  its local cache. The fronthaul NDT for the given $i$ is  thus given by
\begin{align}
  \tau_F=\frac{\binom{3}{m+1}\binom{3}{1+i}\binom{1+i}{2}f_{m,1}}{r_W\binom{2}{i}}=\frac{3\binom{3}{m+1}if_{m,1}}{2r_W}.\label{eqn 33m1 fronthaul 2}
\end{align}

In the access network, the $1+i$ ENs in each set $\Psi^+$ cooperatively transmit sub-messages in \eqref{eqn 33m1 fronthaul 1}, each desired by $m+1$ UEs. The access network is thus upgraded to the $\binom{3}{1+i}\times\binom{3}{m+1}$ cooperative X-multicast channel with achievable per-user DoF $d_{m,1+i}$  in \eqref{eqn tau ip}. Since each UE wants $\binom{2}{m}\binom{3}{1+i}\binom{1+i}{1}$ sub-messages, each with $\frac{f_{m,1}}{\binom{2}{i}}F$ bits, the access NDT is
\begin{align}
  \tau_A=\frac{\binom{2}{m}\binom{3}{1+i}\binom{1+i}{1}f_{m,1}}{\binom{2}{i}d_{m,1+i}}=\frac{3\binom{2}{m}f_{m,1}}{d_{m,1+i}}.\label{eqn 33m1 fronthaul 3}
\end{align}
Summing up \eqref{eqn 33m1 fronthaul 2} and \eqref{eqn 33m1 fronthaul 3}, the total NDT is given by
\begin{align}
  \tau=\frac{3\binom{3}{m+1}if_{m,1}}{2r_W}+\frac{3\binom{2}{m}f_{m,1}}{d_{m,1+i}}.\label{eqn 33m1 fronthaul 4}
\end{align}

Finally, comparing the NDT achieved with access-only delivery in \eqref{eqn 33m1 nofronthaul} and the NDT with fronthaul-aided delivery in  \eqref{eqn 33m1 fronthaul 4}  for all possible $i$, we choose the smallest one to be the NDT for group $(m,1)$, i.e.,
\begin{align}
  \tau_{m,1}=\min_{i\in[0:2]} \left\{\frac{3\binom{3}{m+1}if_{m,1}}{2r_W}+\frac{3\binom{2}{m}f_{m,1}}{d_{m,1+i}}\right\}.\notag
\end{align}


\subsection{Main Results}
Generalizing the above delivery strategy to the $K_T\times K_R$ F-RAN with wireless fronthaul, we obtain the achievable NDT in the following theorem whose proof is given in Appendix A.

\begin{theorem}\label{thm 1}
For the cache-aided F-RAN with $K_T\ge2$ ENs, each with a cache of normalized size $\mu_T$,  $K_R\ge2$ UEs, each with a cache with normalized size $\mu_R$,  $N\ge K_R$ files, and a wireless fronthaul link with MBS power $P_F =\left( P\right)^{r_W}$, the minimum NDT achieved by  random decentralized caching with half-duplex transmission is upper bounded by
\begin{align}
\tau^W_{upper}=\sum_{m=0}^{K_R-1}\sum_{n=0}^{K_T}\tau_{m,n},\label{tau-upper}
\end{align}
where
\begin{numcases}{\tau_{m,n}\!=\!\!}\!
\binom{K_R}{m\!+\!1}\!\frac{f_{m,0}}{r_W}\!+\!\binom{K_R\!-\!1}{m}\!\frac{f_{m,0}}{d_{m,K_T}},\textrm{ if $n\!=\!0$,} \label{eqn tau m0}\\
\!\min_{i \in [0:K_T-n]} \tau_{m,n}^i, \qquad\qquad\quad\qquad\ \textrm{ if $n\!\ge\!1$,} \label{eqn tau mn gamma}
\end{numcases}
with
\begin{align}
  \tau_{m,n}^i=\binom{K_R}{m+1}\binom{K_T}{n}\min\left\{1,\frac{i}{n+1}\right\}\frac{f_{m,n}}{r_W}+\binom{K_R-1}{m}\binom{K_T}{n}\frac{f_{m,n}}{d_{m,n+i}}.\label{eqn thm 1}
\end{align}
Here $f_{m,n}$ is the fractional size of each subfile cached in $m$ UEs and $n$ ENs  given in \eqref{eqn decentralized subfile size}, and $d_{m,j}$ is the achievable per-user DoF of the $\binom{K_T}{j}\times\binom{K_R}{m+1}$ cooperative X-multicast channel given in \eqref{eqn tau ip}.
\end{theorem}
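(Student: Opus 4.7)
The plan is to generalize the worked $3\times 3$ example in Section \ref{section V 3x3} to an arbitrary $K_T\times K_R$ network by processing the delivery in time-shared blocks indexed by $(m,n)$, $m\in[0:K_R-1]$, $n\in[0:K_T]$, with block $(m,n)$ handling exactly those requested subfiles that happen to be cached at $m$ UEs and $n$ ENs. Under random decentralized placement the fractional size of each such subfile concentrates to $f_{m,n}$ as in \eqref{eqn decentralized subfile size}, so the per-block NDTs simply add up to the bound in \eqref{tau-upper}, and it remains to construct a scheme that attains $\tau_{m,n}$ in \eqref{eqn tau m0}--\eqref{eqn tau mn gamma} for each block. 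WLOG each UE requests a distinct file since $N \geq K_R$.

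For the boundary block $n=0$, I will mimic Section \ref{section delivery33 m0}: for every $(m+1)$-subset $\Phi^+\subseteq[K_R]$ the cloud forms $W_{\Phi^+,\emptyset}^{\oplus}=\bigoplus_{q\in\Phi^+}W_{q,\Phi^+\setminus\{q\},\emptyset}$, multicasts it to all $K_T$ ENs (giving the fronthaul term $\binom{K_R}{m+1}f_{m,0}/r_W$), and lets the ENs cooperate fully over the $\binom{K_T}{K_T}\times\binom{K_R}{m+1}$ cooperative X-multicast channel. Since each UE desires $\binom{K_R-1}{m}$ of these codewords and cancels the rest using its cache, Lemma 1 of \cite{mine} delivers the access term $\binom{K_R-1}{m}f_{m,0}/d_{m,K_T}$, reproducing \eqref{eqn tau m0}.

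For $n\ge 1$ I parameterize by the target cooperation level $i\in[0:K_T-n]$. At every $n$-EN set $\Psi$ I XOR-combine the $m+1$ locally cached subfiles into $W_{\Phi^+,\Psi}^{\oplus}$ of size $f_{m,n}F$ for each $(m+1)$-UE set $\Phi^+$, and split it evenly into $\binom{K_T-n}{i}$ sub-messages, one per $(n+i)$-EN set $\Psi^+\supset\Psi$. The fronthaul task is to deliver each sub-message to the $i$ ENs in $\Psi^+\setminus\Psi$ that do not cache it; I handle this either by naive multicast (cost factor $1$) or, when efficient, by an outer XOR layer over all $(n+1)$-subsets of $\Psi^+$: in any such $\Psi^{n+1}$, each EN caches exactly $n$ of the $n+1$ corresponding sub-messages, so one broadcast of their XOR resolves the missing sub-message at every EN via interference cancellation. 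A short binomial manipulation reducing $\binom{K_T}{n+i}\binom{n+i}{n+1}/\binom{K_T-n}{i}$ to $\binom{K_T}{n}\cdot i/(n+1)$ yields the coded fronthaul load, and taking the better of the two strategies produces the factor $\min\{1,i/(n+1)\}$ in \eqref{eqn thm 1}. On the access side, the $\binom{K_T}{n+i}\times\binom{K_R}{m+1}$ cooperative X-multicast channel with per-user DoF $d_{m,n+i}$ from \eqref{eqn tau ip} is used; each UE needs $\binom{K_R-1}{m}\binom{K_T}{n}$ codeword-equivalents, giving the second summand of \eqref{eqn thm 1}, and minimizing the sum over $i$ gives $\tau_{m,n}$.

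The main obstacle I anticipate is the combinatorial decodability check across the two nested XOR layers: I must argue that every EN in a given $(n+1)$-subset can cleanly invert the outer XOR using its $n$ locally cached sub-messages, and that every UE can then invert the inner XOR using its $m$ locally cached subfiles after receiving its $(n+i)$-EN-cooperative X-multicast codeword, all while the sub-message splitting respects the equal-rate structure required to invoke \cite[Lemma 1]{mine} for $d_{m,n+i}$. Once this bookkeeping is verified, the binomial identity above reproduces the stated $\min\{1,i/(n+1)\}$ exactly, and the final sum $\sum_{m,n}\tau_{m,n}$ gives \eqref{tau-upper}.
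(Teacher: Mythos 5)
Your proposal follows essentially the same route as the paper's Appendix A: the same $(m,n)$ grouping with time division, the same coded-message construction and naive multicast for $n=0$, the same split into $\binom{K_T-n}{i}$ sub-messages per $(n+i)$-EN superset with the choice between uncoded fronthaul multicast and the outer XOR layer over $(n+1)$-subsets (your binomial reduction to $\binom{K_T}{n}\,i/(n+1)$ is exactly the paper's), and the same invocation of the $\binom{K_T}{n+i}\times\binom{K_R}{m+1}$ cooperative X-multicast DoF $d_{m,n+i}$ on the access side. The decodability bookkeeping you flag as the remaining obstacle is resolved precisely as you describe (each EN $p\in\Psi'$ caches the $n$ sub-messages indexed by the $n$-subsets of $\Psi'$ containing $p$, hence inverts the outer XOR), so the argument is complete and correct.
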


In Theorem \ref{thm 1} , \eqref{eqn tau m0} is the delivery time of those subfiles that are not cached in any EN, i.e., $n=0$, by the fronthaul-compulsory delivery scheme. It has an explicit expression, consisting of the fronthaul part and the access part.  \eqref{eqn tau mn gamma} is the delivery time of subfiles cached in at least one EN, i.e., $n\ge 1$, which takes the minimum achieved among all possible fronthaul-aided strategies indexed by $i$  in. The index $i\in [0: K_T-n]$ represents the increased level for EN cooperation over the access network boosted by the fronthaul-aided delivery.  By taking a closer look at $\tau_{m,n}^i$ in \eqref{eqn thm 1}, the first term is due to the fronthaul transmission, if $i>0$, via direct multicasting at $i> n+1$ or coded multicasting at $i\le n+1$,  while the second term is due to the access transmission over the upgraded $\binom{K_T}{n+i}\times\binom{K_R}{m+1}$ cooperative X-multicast channel where the benefits of ZF and IA are jointly exploited.  This theorem indicates that the proposed delivery scheme will always find a balance between the increased fronthaul latency (due to more bits to fetch) and the reduced access latency (due to more chance for EN cooperation), towards a minimum end-to-end latency.

%
In the extreme case when the fronthaul capacity is sufficiently large, i.e.,  $r_W \rightarrow \infty$, the fronthaul NDT approaches zero, and the overall achievable NDT is dominated by the access NDT, given by
\begin{align}
  \tau_{upper} (r_W \rightarrow \infty ) =\sum_{m=0}^{K_R-1}\sum_{n=0}^{K_T}\binom{K_R-1}{m}\binom{K_T}{n}\frac{f_{m,n}}{d_{m,K_T}}=\sum_{m=0}^{K_R-1}\binom{K_R-1}{m}\frac{\mu_R^m(1-\mu_R)^{K_R-m}}{d_{m,K_T}},\label{tau_u_infinity}
\end{align}
which is equivalent to the NDT when $\mu_T=1$.

We next obtain a lower bound of the minimum NDT based on the assumption of random decentralized cache placement, whose proof is in Appendix B.
\begin{theorem}\label{thm 2}
For the cache-aided F-RAN with $K_T\ge2$ ENs, each with a cache of normalized size $\mu_T$,  $K_R\ge2$ UEs, each with a cache with normalized size $\mu_R$,  $N\ge K_R$ files, and  a wireless fronthaul link with MBS power $P_F =\left( P\right)^{r_W}$, the minimum NDT achieved by random decentralized caching with half-duplex EN transmission is lower bounded by
\begin{align}
  \tau^W_{lower}=&\max_{l_1\in[K_R]}\frac{l_1(1-\mu_T)^{K_T}(1-\mu_R)^{l_1}}{r_W}+\max_{l_2\in[K_R]}\frac{l_2(1-\mu_R)^{l_2}}{\min\{l_2,K_T\}}.\label{eqn thm 2}
\end{align}
\end{theorem}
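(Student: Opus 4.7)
The plan is to split the claimed bound into its two additive pieces, prove each by a cut-set / Fano's-inequality argument in the spirit of the lower-bound techniques from the cache-aided interference network literature (e.g.\ the derivation of \eqref{eqn taul2} in \cite{mine}), and then combine them using the half-duplex identity $\tau = \tau_F + \tau_A$ already recorded in Remark \ref{remark_dof}. Throughout I would fix a worst-case demand vector in which the UEs in the distinguished subset request distinct files, since the NDT in Definition \ref{definition ndt} is the worst-case quantity.

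For the access term, I would pick any $\mathcal{L}_2\subseteq[K_R]$ with $|\mathcal{L}_2|=l_2$ and consider the cut that isolates the receivers in $\mathcal{L}_2$ (together with their caches $V_{\mathcal{L}_2}$) from the rest of the network. The signals that traverse this cut are exactly the access-channel outputs $(Y_q^{T_A})_{q\in\mathcal{L}_2}$, and a standard MISO cut-set argument bounds their mutual information with the requested files by $T_A\min\{l_2,K_T\}\log P + o(T_A\log P)$. By Fano's inequality and the reliability requirement $P_\epsilon\to 0$,
\begin{align*}
H\bigl(W_{d_{\mathcal{L}_2}}\bigr) - H\bigl(W_{d_{\mathcal{L}_2}} \mid V_{\mathcal{L}_2}\bigr) \;\le\; T_A\min\{l_2,K_T\}\log P + o(F).
\end{align*}
Under random decentralized placement the concentration result behind \eqref{eqn decentralized subfile size} gives $H(W_{d_{\mathcal{L}_2}}\mid V_{\mathcal{L}_2}) = l_2 F[1-(1-(1-\mu_R)^{l_2})] + o(F) = l_2(1-\mu_R)^{l_2}F + o(F)$, because a bit of any one of the $l_2$ distinct requested files is uncached by every UE in $\mathcal{L}_2$ with probability $(1-\mu_R)^{l_2}$, and these events are asymptotically independent across bits. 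Dividing by $F/\log P$ and taking $F\to\infty$, $P\to\infty$ yields $\tau_A\ge l_2(1-\mu_R)^{l_2}/\min\{l_2,K_T\}$, and maximizing over $l_2$ produces the second term of \eqref{eqn thm 2}.

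For the fronthaul term, I would pick any $\mathcal{L}_1\subseteq[K_R]$ with $|\mathcal{L}_1|=l_1$ and cut the MBS off from the rest, so the only outgoing edge is the fronthaul broadcast signal $S^{T_F}$ of rate at most $r_W\log P$. Since every quantity observed by the UEs is (up to noise) a function of the caches $\{U_p\}$, $\{V_q\}$ and $S^{T_F}$, the data-processing inequality combined with Fano's inequality yields
\begin{align*}
H\bigl(W_{d_{\mathcal{L}_1}}\bigr) - H\bigl(W_{d_{\mathcal{L}_1}} \mid V_{\mathcal{L}_1},U_{[K_T]}\bigr) \;\le\; H\bigl(S^{T_F}\bigr) + o(F) \;\le\; T_F\,r_W\log P + o(F).
\end{align*}
Because EN and UE placements are drawn independently at random, the joint concentration gives $H(W_{d_{\mathcal{L}_1}}\mid V_{\mathcal{L}_1},U_{[K_T]}) = l_1(1-\mu_T)^{K_T}(1-\mu_R)^{l_1}F + o(F)$. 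Dividing by $F/\log P$ yields $\tau_F\ge l_1(1-\mu_T)^{K_T}(1-\mu_R)^{l_1}/r_W$, and maximizing over $l_1$ gives the first term. Combining the two bounds via $\tau=\tau_F+\tau_A$ finishes the proof.

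The main obstacle I foresee is the fronthaul step: arguing carefully that the conditional entropy $H(W_{d_{\mathcal{L}_1}}\mid V_{\mathcal{L}_1},U_{[K_T]})$ really concentrates at $l_1(1-\mu_T)^{K_T}(1-\mu_R)^{l_1}F$, and that intra-file coding inside the caches cannot help beat this figure. This requires (i) invoking the independence of the random placements across ENs and UEs together with a McDiarmid or Hoeffding-style concentration bound to upgrade the expected fractional subfile size \eqref{eqn decentralized subfile size} into a high-probability statement valid for every realization of the random caches, and (ii) showing that allowing coding within files only changes these entropies by $o(F)$ in the relevant regime, as is standard in uncoded-prefetching converse arguments. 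Everything else is either routine cut-set/Fano bookkeeping or an absorption of $o(F)$ and $o(\log P)$ terms into the asymptotic NDT definition.
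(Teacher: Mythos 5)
Your overall strategy is exactly the one the paper uses in its Appendix~B proof: bound $\tau_F$ and $\tau_A$ separately by cut-set/Fano arguments, evaluate the conditional entropy of the requested files given the relevant caches using the random uncoded placement, and add the two bounds via the half-duplex identity $\tau=\tau_F+\tau_A$. Choosing an arbitrary subset $\mathcal{L}_i$ rather than the first $l_i$ users, and cutting at the MBS output $S^{T_F}$ rather than at the EN received signals $Q^W_{1\sim K_T}$ as the paper does, are immaterial differences.

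However, both of your displayed inequalities are wrong as written, and not merely cosmetically: you bound $H(W_{d_{\mathcal{L}_2}}) - H(W_{d_{\mathcal{L}_2}}\mid V_{\mathcal{L}_2}) = I(W_{d_{\mathcal{L}_2}}; V_{\mathcal{L}_2})$ by the access cut capacity $T_A\min\{l_2,K_T\}\log P$. The mutual information between the files and the caches is fixed at placement time (it equals $l_2F[1-(1-\mu_R)^{l_2}]$ here) and cannot be controlled by $T_A$; this inequality is false, e.g., for $\mu_R$ near $1$ and $T_A=0$. Moreover, substituting your own value of $H(W\mid V)$ into that display yields $\tau_A \ge l_2[1-(1-\mu_R)^{l_2}]/\min\{l_2,K_T\}$, not the claimed $l_2(1-\mu_R)^{l_2}/\min\{l_2,K_T\}$. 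The quantity that must be squeezed through the cut is the \emph{conditional} entropy $H(W_{d_{\mathcal{L}_2}}\mid V_{\mathcal{L}_2}) = H(W_{d_{\mathcal{L}_2}}) - I(W_{d_{\mathcal{L}_2}};V_{\mathcal{L}_2})$: by Fano, $H(W\mid V) = I(W;Y\mid V) + o(F) \le h(Y_{\mathcal{L}_2}) + o(F) \le T_A\min\{l_2,K_T\}\log P + o(T_A\log P)+o(F)$. The same swap occurs in your fronthaul display, whose left-hand side should be $H(W_{d_{\mathcal{L}_1}}\mid V_{\mathcal{L}_1}, U_{[K_T]})$. With these corrections the two bounds come out as claimed. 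Two smaller remarks: (i) your worry about coded placement is moot, since the theorem is stated for the specific random uncoded placement of Section~\ref{section decentralized cache placement}, which is precisely what licenses the entropy evaluations $l_1(1-\mu_T)^{K_T}(1-\mu_R)^{l_1}F$ and $l_2(1-\mu_R)^{l_2}F$ (the paper invokes this assumption explicitly in the step leading to \eqref{eqn converse 33}); (ii) the concentration you ask for is the standard one already recorded around \eqref{eqn decentralized subfile size}, with the deviation absorbed into the $o(F)$ terms.
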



Comparing Theorem \ref{thm 1} and Theorem \ref{thm 2}, we obtain the multiplicative gap between the upper and lower bounds below, with proof given in Appendix C.

\begin{corollary}  \label{coro gap}
The multiplicative gap between the NDT upper bound \eqref{tau-upper} and the NDT lower bound \eqref{eqn thm 2} is within 12.
\end{corollary}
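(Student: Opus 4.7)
The plan is to exploit the additive decomposition of both $\tau^W_{upper}$ and $\tau^W_{lower}$ into a fronthaul part and an access part and bound each ratio separately. Write $\tau^W_{upper}=\tau_F^U+\tau_A^U$, where $\tau_F^U$ collects the terms in \eqref{eqn tau m0}--\eqref{eqn thm 1} carrying the factor $1/r_W$ and $\tau_A^U$ collects the remaining terms; likewise write $\tau^W_{lower}=\tau_F^L+\tau_A^L$ with $\tau_F^L$ and $\tau_A^L$ the two maxima in \eqref{eqn thm 2}. The elementary inequality $(a+b)/(c+d)\le\max\{a/c,b/d\}$ then reduces the corollary to showing $\tau_F^U\le 12\,\tau_F^L$ and $\tau_A^U\le 12\,\tau_A^L$.

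For the fronthaul comparison, I would simplify $\tau_F^U$ by choosing $i=0$ in \eqref{eqn tau mn gamma} for every $n\ge 1$, which zeroes out the fronthaul contribution of all groups with $n\ge 1$ and leaves only the $n=0$ terms from \eqref{eqn tau m0}. The identity $\sum_{m=0}^{K_R-1}\binom{K_R}{m+1}\mu_R^m(1-\mu_R)^{K_R-m}=\frac{1-\mu_R}{\mu_R}[1-(1-\mu_R)^{K_R}]$ then yields the closed-form estimate $\tau_F^U\le\frac{(1-\mu_R)[1-(1-\mu_R)^{K_R}]}{\mu_R}\cdot\frac{(1-\mu_T)^{K_T}}{r_W}$. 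On the other side, $\tau_F^L$ is at least the value of $l_1(1-\mu_T)^{K_T}(1-\mu_R)^{l_1}/r_W$ at $l_1=\lceil 1/\mu_R\rceil$ when $K_R\mu_R\ge 1$ and at $l_1=K_R$ otherwise; since $l\mapsto l(1-\mu_R)^l$ peaks near $l^\star=-1/\ln(1-\mu_R)$ with peak value of order $1/(e\mu_R)$, this gives $\tau_F^L\ge c\cdot\min\{1/\mu_R,K_R\}(1-\mu_T)^{K_T}/r_W$ for a universal constant $c>0$. Combining the two bounds and using $1-(1-\mu_R)^{K_R}\le\min\{1,K_R\mu_R\}$ in the two regimes yields an explicit constant $C_F$ with $\tau_F^U\le C_F\tau_F^L$.

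For the access comparison, keeping $i=0$ for every $n\ge 1$ yields $\tau_A^U\le\sum_{m=0}^{K_R-1}\sum_{n=0}^{K_T}\binom{K_R-1}{m}\binom{K_T}{n}f_{m,n}/d_{m,\tilde n}$, where $\tilde n=K_T$ for $n=0$ and $\tilde n=n$ otherwise. Splitting the index set according to \eqref{eqn tau ip} into the saturated region $m+\tilde n\ge K_R$ (where $d_{m,\tilde n}=1$), the boundary strip $m+\tilde n=K_R-1$, and the undersaturated region $m+\tilde n\le K_R-2$ (where the bound $d_{m,\tilde n}\ge(m+\tilde n)/K_R$ from \eqref{eqn tau ip} suffices), and then collapsing the transmitter-side sum via the binomial identity $\sum_{n=0}^{K_T}\binom{K_T}{n}\mu_T^n(1-\mu_T)^{K_T-n}=1$, reduces the problem to a univariate comparison in $m$. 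Matching this against $\tau_A^L=\max_{l_2}l_2(1-\mu_R)^{l_2}/\min\{l_2,K_T\}$, evaluated at $l_2^\star\approx\min\{1/\mu_R,K_T\}$, produces a constant $C_A$ by the same peak-versus-geometric-sum estimate used in the fronthaul step.

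The principal obstacle is the piecewise, nested expression for $d_{m,n}$ in \eqref{eqn tau ip}: any looseness in its lower bound propagates directly into $C_A$ and could push the constant past $12$. My approach is to handle the three DoF regimes separately---using $d_{m,n}=1$ on the saturated region, an explicit constant on the one-diagonal strip $m+n=K_R-1$, and $d_{m,n}\ge(m+n)/K_R$ (sharpened via the $t'=1$ branch of $d_1$ if needed) on the undersaturated region. After the binomial collapse, the remaining one-variable comparison is a standard geometric-sum-versus-peak-value argument contributing a factor comparable to $e/(1-e^{-1})$, leaving sufficient room for the final constant $\max\{C_F,C_A\}$ to come out at most $12$.
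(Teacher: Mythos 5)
Your skeleton coincides with the paper's own proof: upper-bound the minimum over $i$ in \eqref{eqn tau mn gamma} by the $i=0$ choice, split the comparison into a fronthaul ratio and an access ratio (the paper combines its bounds $g_F\le 12$ and $g_A<12$ through exactly the mediant-type inequality you invoke), and lower-bound the two maxima in \eqref{eqn thm 2} by evaluating them at $l_1,l_2$ of order $\min\{1/\mu_R,K_R\}$. Two points in your plan, however, fall short of a proof of the stated constant. First, in the access comparison the bound $d_{m,n}\ge (m+n)/K_R$ on the undersaturated region is not merely loose, it fails outright when $K_T<K_R$: taking $\mu_R\to 0$ and $\mu_T\approx 1/K_T$, the single group $(m,n)=(0,1)$ already contributes about $e^{-1}K_R$ to your $\tau_A^U$, whereas $\tau_A^L=\max_{l_2}l_2(1-\mu_R)^{l_2}/\min\{l_2,K_T\}\approx K_R/K_T$, so the ratio grows like $K_T$. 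The sharpening you flag as optional, the $t'=1$ branch of $d_1$, i.e. $d_{m,n}\ge d_{m,1}=K_T\big/\big(K_T+\tfrac{K_R-m-1}{m+1}\big)$, is therefore mandatory; it is precisely the bound the paper uses in its $K_T<K_R$ case (while its $K_T\ge K_R$ case uses $d_{m,n}\ge 1/2$ instead).

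Second, the corollary is a quantitative claim, and your argument never actually establishes the number $12$: ``an explicit constant $C_F$,'' ``a constant $C_A$,'' and ``sufficient room \dots to come out at most 12'' are assertions, not derivations. The peak heuristic $\max_l l(1-\mu_R)^l\approx 1/(e\mu_R)$ ignores the integer constraint $l_1,l_2\in[K_R]$ and the boundary regimes, which is exactly where the constant comes from. The paper's proof requires an explicit case split: for the fronthaul ratio, the cases $K_R\le 12$; $K_R\ge 13,\ \mu_R\ge 1/12$; $1/K_R\le\mu_R<1/12$ with $l_1=\lfloor 1/(4\mu_R)\rfloor$; and $\mu_R<1/K_R$ with $l_1=\lfloor K_R/4\rfloor$, giving bounds $12,12,8,8$; for the access ratio (in the $K_T<K_R$ case), the cases $\mu_R<1/(4K_R)$, $1/(4K_R)\le\mu_R<1/(4K_T)$, and $\mu_R\ge 1/(4K_T)$ with $l_2\in\{K_R,\lceil 1/(4\mu_R)\rceil,\lfloor 1/(4\mu_R)\rfloor\}$, giving $7/3$, $37/5$, $19/3$. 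Until you carry out this (or equivalent) bookkeeping, your argument shows a constant multiplicative gap but not the specific gap of $12$ claimed in \eqref{tau-upper} versus \eqref{eqn thm 2}.
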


\subsection{Numerical Examples and Comparison}
First, we observe the sum NDT, the fronthaul NDT, and the access NDT achieved by the proposed delivery scheme, seperately, at different fronthual multiplexing gain $r_W$.
Fig. \ref{Fig phaseNDT} depicts the achievable NDT results as well as the NDT lower bound in the $3\times 3$ network with $\mu_T=\mu_R=\frac{1}{3}$. It can be seen that the sum NDT  decreases as  $r_W$ increases, and is very close to the lower bound in the entire region of $r_W$. When $r_W$ is large, the fronthaul NDT can be ignored, and the sum NDT converges to a limit, which, based on  \eqref{tau_u_infinity}, is only related to UE cache size $\mu_R$ but not EN cache size $\mu_T$. Comparing to the sum NDT and fronthaul NDT, the access NDT  remains constant when $r_W\le3$, then decreases very slowly when $r_W$ increases. This is because the access NDT term of subfiles in group $(m,0)$ in  \eqref{eqn tau m0} is only a function of cache sizes $\mu_R$ and $\mu_T$, and the access NDT term of subfiles in group $(m,n)$ in \eqref{eqn tau mn gamma} , for $n\ge1$, remains constant when the optimal $i$ is fixed for a range of $r_W$, and hence decreases slowly as $r_W$ increases. When $r_W=3.5$, there is a slight increase of the fronthaul NDT. This indicates that  more subfiles are transmitted in the fronthaul link at this point of fronthaul multiplexing gain to trade for higher EN cooperation in the access link.

\begin{figure}[tbp]
\begin{centering}
\includegraphics[scale=0.38]{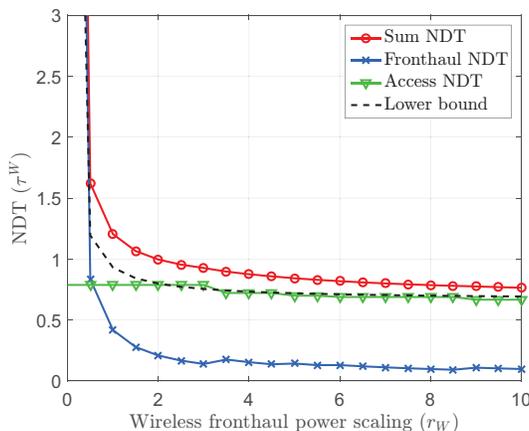}
\vspace{-0.2cm}
\caption{NDT v.s. fronthaul capacity $r_W$ when $K_T=3,K_R=3,\mu_T=\frac{1}{3},\mu_R=\frac{1}{3}$.}\label{Fig phaseNDT}
\end{centering}
\end{figure}

Next we compare our scheme with \cite{koh2017cloud,Cran} which study centralized caching in F-RAN with wireless fronthaul. The work \cite{koh2017cloud} (referred to as \emph{KSTK}) is limited to a $2\times2$ F-RAN with EN caches only and it obtains the achievable NDT for full-duplex EN, given by
\begin{align} \label{eqn:tau_KSTK}
  \tau_{\rm{KSTK}}=\left\{\!
\begin{array}{ll}
\mu_T(3-\frac{4}{r_W})+\frac{2}{r_W},&\textrm{ for $0\le\mu_T<\frac{1}{2},0<r_W\le\frac{2}{3}$},\\
2-\mu_T,&\textrm{ for $\frac{1}{2}\le\mu_T\le1, 0<r_W\le\frac{2}{3}$},\\
(1-\mu_T)\frac{2}{r_W},&\textrm{ for $0\le\mu_T<\frac{1}{2}, \frac{2}{3}<r_W\le1$},\\
2\mu_T(1-\frac{1}{r_W})+\frac{2}{r_W}-1,&\textrm{ for $\frac{1}{2}\le\mu_T\le1, \frac{2}{3}<r_W\le1$},\\
2(1-\frac{2}{r_W})\mu_T+\frac{2}{r_W},&\textrm{ for $0\le\mu_T<\frac{1}{2}, 1<r_W<2$},\\
1,&\textrm{ for $\frac{1}{2}\le\mu_T\le1, 1<r_W<2$,}\\
1,&\textrm{ for $0\le\mu_T\le1,r_W\ge 2$}.
\end{array}
\right..
\end{align}
%
The work \cite{Cran} (referred to as \emph{DYL}) considers a general $K_T\times K_R$ F-RAN architecture with caches at both the ENs and UEs and obtains an achievable  NDT under half-duplex EN transmission, given by:
\begin{align}
\tau_{\rm{DYL}}=\left\{
\begin{array}{ll}
\frac{K_R}{r_W}+\frac{K_R-K_R\mu_R}{\min\{K_T+K_R\mu_R,K_R\}},&\textrm{ if $\mu_T=0$ and $K_R\mu_R \in \mathbb{Z}$}\\
\min\limits_{t\in[K_T\mu_T:K_T]}\{\frac{(t-K_T\mu_T)K_R}{tr_W}+\frac{K_R-K_R\mu_R}{\min\{t+K_R\mu_R,K_R\}}\},&\textrm{ if $K_T\mu_T\in \mathbb{Z}^+$,$K_R\mu_R \in \mathbb{Z}$}
\end{array}
\right. .\label{eqn:tau_DYL}
\end{align}



\begin{figure}[!tbp]
\begin{minipage}[t]{1\linewidth}
\centering
\includegraphics[scale=0.4]{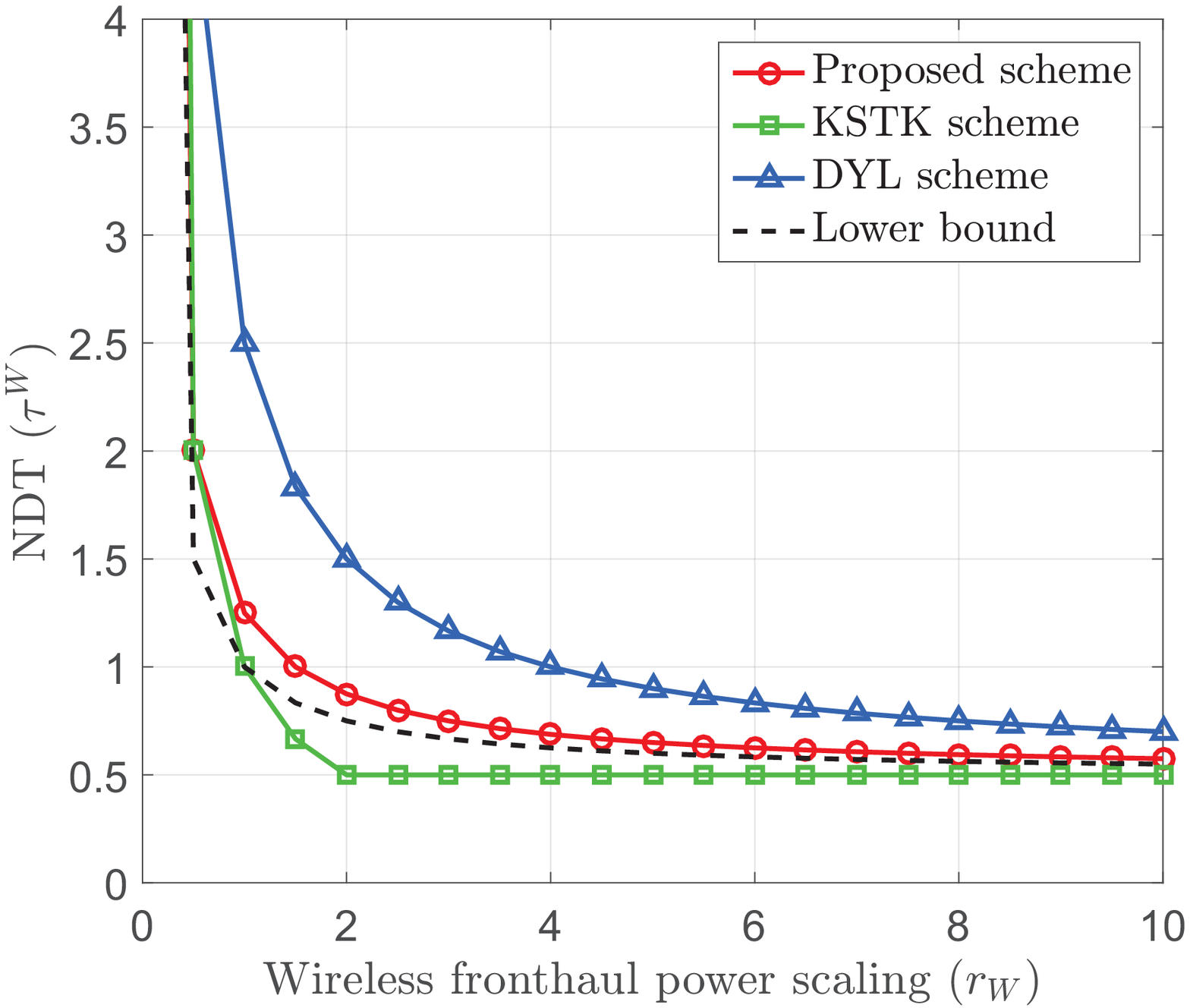}
\subcaption{}\label{Fig compare2 mut0}
\end{minipage}
\begin{minipage}[t]{1\linewidth}
\centering
\includegraphics[scale=0.4]{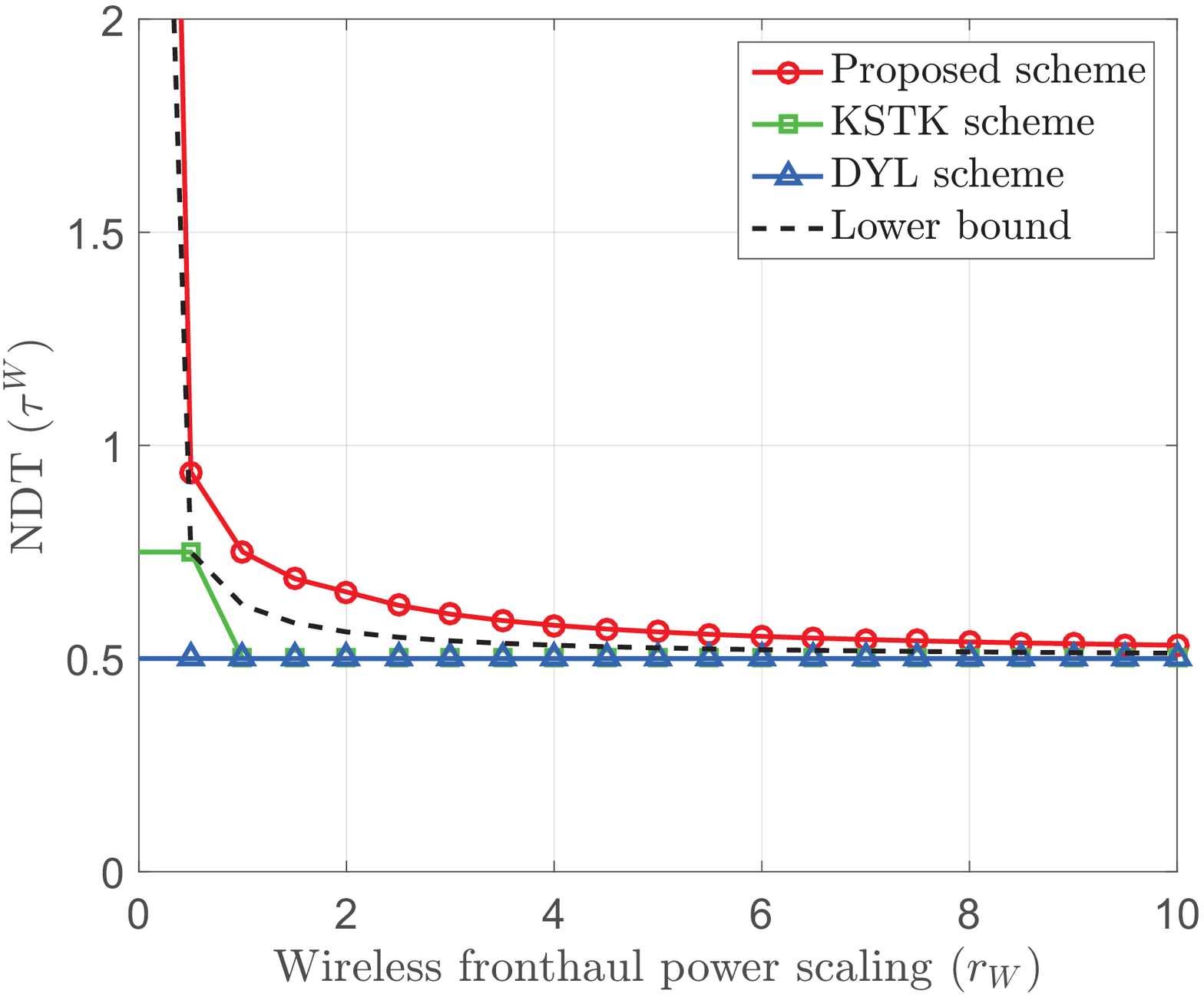}
\subcaption{}\label{Fig compare2 mut033}
\end{minipage}
\caption{NDT v.s. fronthaul capacity $r_W$ when $K_T=2,K_R=2,\mu_R=\frac{1}{2}$, a) $\mu_T=0$ b) $\mu_T=1/2$.}\label{Fig compare2}
\end{figure}

In Fig. \ref{Fig compare2}, we illustrate the performance comparison in a $2\times 2$ F-RAN, to be able to compare with the KSTK scheme, at  $\mu_R=\frac{1}{2}$ and $\mu_R\in \{0, \frac{1}{2}\}$ .
Note that we have added caching at UEs in the plot of the KSTK scheme for fair comparison. We see that when there is no EN cache ($\mu_T=0$) in Fig. \ref{Fig compare2 mut0}, the NDT in our proposed scheme with decentralized caching is smaller than the one in the DYL scheme with centralized caching. This is because our scheme exploits coded multicasting gain in the fronthaul link as stated in Section \ref{section delivery33 m0}, while the DYL scheme  only transmits the uncoded requested file of each UE in the fronthaul link in DYL scheme. The KSTK scheme, on the other hand, has the smallest NDT among the three schemes, and is smaller than our derived lower bound when $r_W>1$, mainly due to the use of full-duplex EN transmission and centralized caching. In specific, the NDT in the KSTK scheme remains $\frac{1}{2}$ when $r_W\ge2$, because the access transmission now becomes the bottleneck in the delivery phase.

When $\mu_T=\frac{1}{2}$ as plotted in  Fig. \ref{Fig compare2 mut033} , it can be seen that the DYL scheme is the best among all in the entire region of $r_W$, in contrast to the finding in Fig. \ref{Fig compare2 mut0}.  The reason is that at this EN cache size, the DYL scheme creates interference-free transmissions for all UEs in the access link by exploiting IC and ZF gains jointly. In fact, by comparing to a simple cut-set lower bound $\tau^W\ge1-\mu_R=\frac{1}{2}$ with centralized caching,  the achievable scheme in DYL is found to be optimal.
 It is further seen from the figure that the KSTK scheme, by full-duplex transmission, is also optimal when $r_W\ge1$, but is sub-optimal when $r_W<1$ because coded multicasting gain based on UE caches in the access link is not exploited. Nevertheless, both DYL and KSTK schemes achieve smaller NDTs than our derived lower bound for decentralized caching. Due to decentralized caching, our scheme is inferior to the other two schemes, but is very close to them when $r_W$ is large, and close to the lower bound in the entire region of $r_W$.

\begin{figure}[!tbp]
\begin{minipage}[t]{1\linewidth}
\centering
\includegraphics[scale=0.4]{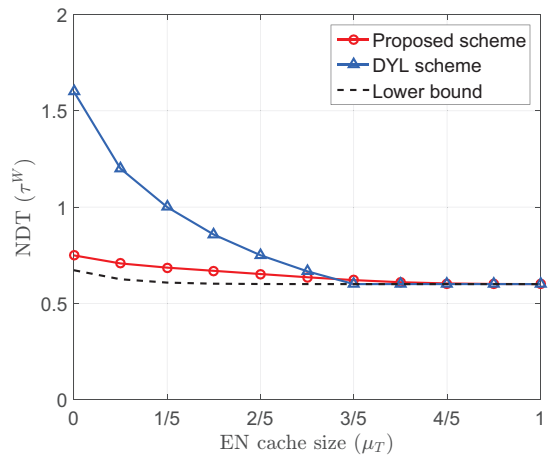}
\subcaption{}\label{Fig compare1 mur04}
\end{minipage}
\begin{minipage}[t]{1\linewidth}
\centering
\includegraphics[scale=0.4]{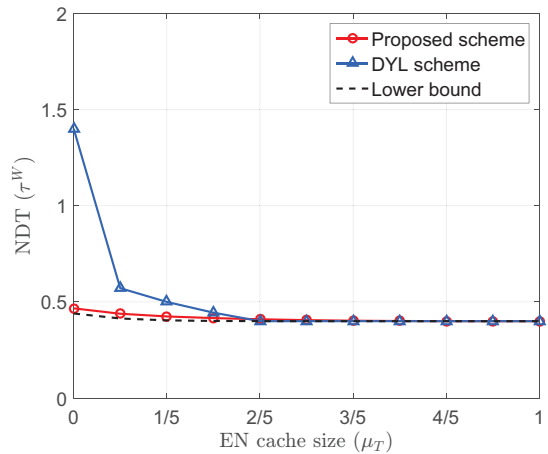}
\subcaption{}\label{Fig compare1 mur06}
\end{minipage}
\caption{NDT v.s. EN cache size $\mu_T$ when $K_T=10,K_R=10,r_W=10$, a) $\mu_R=\frac{2}{5}$ b) $\mu_R=\frac{3}{5}$.}\label{Fig compare1}
\end{figure}

Finally, we compare our scheme with the DYL scheme, with respect to EN cache size $\mu_T$ , in a $10\times 10$ F-RAN with $r_W=10$ at $\mu_R \in \{\frac{2}{5}, \frac{3}{5} \}$ in Fig. \ref{Fig compare1}. Our achievable NDT is still very close to the lower bound. It is even smaller than the one achieved by the DYL scheme with centralized caching when $\mu_T$ is small, and they perform very close when $\mu_T$ is large. This is because our scheme exploits  an additional layer of coded multicasting opportunities in the fronthaul link, while the DYL scheme only exploits the coded multicasting gain in the fronthaul link by generating coded messages directly from requested subfiles $\{W_{q,\Phi,\Psi}\}$. Furthermore, we obtain a larger achievable per-user DoF \eqref{eqn tau ip} than the one in DYL scheme in the access link by using ZF and IA jointly.

\section{Conclusions and Future Directions}\label{section conclusion}

In this paper, we have provided a comprehensive study of content caching and delivery for cache-aided RANs with caches at both the EN and UE sides. We first reviewed the basic tools and techniques that allow exploiting the distributed storage resources in the most efficient manner for cache-aided RANs without fronthaul. Then we have introduced novel transmission techniques for cache-aided RANs with both dedicated and wireless fronthaul links.

%
%

\par
In Section \ref{s:Dedicated}, we have studied the F-RAN architecture with dedicated fronthaul links and an arbitrary number of ENs and UEs, in which both the ENs and the UEs have cache capabilities. We have considered centralized placement at the EN caches, as the ENs represent static access points with dedicated fronthaul links (e.g., millimeter wave connections). We have analyzed both centralized and decentralized cache placement at the UEs. The proposed caching and delivery schemes combine IA, ZF, and IC techniques together with soft-transfer fronthauling, and we have provided comparisons between the achievable NDTs and the literature. We have shown that the proposed schemes reduce the end-to-end delay significantly for a wide range of system parameters taking into account the interplay between the EN caches, UE caches, and the fronthaul link capacities.
\par
In Section \ref{section Caching gain in F-RAN with wireless fronthaul}, we have proposed a novel transmission technique for the F-RAN architecture with a wireless fronthaul link and an arbitrary number of ENs and UEs both equipped with caches. We have analyzed the latency performance under decentralized cache placement at all cache nodes. In the proposed delivery scheme, the wireless fronthaul is used not only to fetch cache-miss contents but also to fetch contents already cached at the ENs to boost EN cooperation to any desired level in the access link. Joint IC, ZF, and IA gains are exploited across the fronthaul and access networks. We have also shown that the latency of the proposed scheme is within a constant multiplicative gap to the optimal. Numerical results show that the proposed delivery scheme with decentralized cache placement can even outperform existing schemes with centralized cache placement under certain conditions.
\par
Among open problems,  to narrow the gap between the achievable schemes and the theoretical bounds, either by further advancing the achievable caching and delivery techniques or by deriving tighter converse, seems to be a natural direction.  Furthermore, the following practical issues are worthwhile for further investigation.
First, given the overhead and feasibility of estimating the global channel state information (CSI), it is of great importance to analyze the caching gain in the presence of imperfect or delayed channel state information. Second, most current caching schemes as well as those considered in this paper still suffer from the exponential sub-packetization problem, which would impede practical implementation when the numbers of ENs and UEs become large \cite{yanqifaPDA};  therefore, low sub-packetization for cache placement is an important research direction \cite{Ozfatura:mobility}. Third, while the new contributions in this paper and a majority of the existing works in the literature focus on the asymptotic NDT analysis, as it lends itself to closed-form expressions, whose order optimality can be proven in certain cases, both finite SNR analysis and finite block-length analysis are also important to validate the conclusions reached through NDT analysis. Low-complexity solutions should also be identified to provide reasonable performance in practical settings. Initial works in these directions can be found in \cite{Mohajer:SPAWC:18, Tolli:ISIT:18}. Last but not least, while this work focuses on the fully connected RAN architecture, it is also important to investigate the more practical partially connected networks, in which each UE can only communicate with a subset of the ENs and/or not all ENs have fronthaul connections. Initial results for partially connected RANs can be found in   \cite{Roushdy:WCNC:18,XuPartial}.

\setcounter{subsubsection}{0}
\section*{Appendix A: Proof of Theorem \ref{thm 1}}
We assume that UE $q$, for $q\in[K_R]$, desires $W_q$ in the delivery phase. Excluding the locally cached subfiles,  each UE $q$, for $q\in[K_R]$, wants subfiles $\{W_{q,\Phi,\Psi}:\Phi\not\ni q,\Phi\subseteq[K_R],\Psi\subseteq[K_T]\}$. We divide the subfiles wanted by all UEs into different groups according to the size of $\Phi$ and $\Psi$, indexed by $\{(m,n):m\in [0:K_R-1], n\in [0:K_T]\}$, such that subfiles in group $(m,n)$ are cached at $m$ UEs and $n$ ENs. There are $K_R\binom{K_R-1}{m}\binom{K_T}{n}$ subfiles in group $(m,n)$, each with fractional size $f_{m,n}$. Each group of subfiles is delivered individually in the time division manner. Without loss of generality, we present the delivery strategy for an arbitrary group $(m,n)$.  The delivery strategy is also given in Algorithm \ref{algorithm 1}.

\begin{algorithm}[!t]
\caption{Delivery scheme for $K_T\times K_R$ F-RAN with wireless fronthaul}\label{algorithm 1}
\begin{algorithmic}[1]
\FOR{$m=0,1,\ldots,K_R-1$}
\FOR{$n=0,1,\ldots,K_T$}
\IF{$n=0$}\STATE Generate coded messages $\{W_{\Phi^+,\emptyset}^{\oplus}\triangleq\bigoplus\limits_{q\in\Phi^+} W_{q,\Phi^+\backslash\{q\},\emptyset}:\Phi^+\subseteq[K_R],|\Phi^+|=m+1\}$, each desired by $m+1$ UEs
\STATE The MBS sends messages  $\{W_{\Phi^+,\emptyset}^\oplus\}$ to all the $K_T$ ENs one by one
\STATE The network topology in the access link is changed into the $\binom{K_T}{K_T}\times\binom{K_R}{m+1}$ cooperative X-multicast channel whose achievable per-user DoF is $d_{m,K_T}$ in \eqref{eqn tau ip}\ELSE
\STATE Generate coded messages $\{W_{\Phi^+,\Psi}^{\oplus}\triangleq\bigoplus\limits_{q\in\Phi^+} W_{q,\Phi^+\backslash\{q\},\Psi}:\Phi^+\subseteq[K_R],|\Phi^+|=m+1,\Psi\subseteq[K_T],|\Psi|=n\}$
\STATE Let $i=\arg\min_{i}  \tau_{m,n}^i$ in \eqref{eqn tau mn gamma}
\STATE Split each coded message into $\binom{K_T-n}{i}$ sub-messages $\{W_{\Phi^+,\Psi}^{\oplus,\Psi^+}\}$, each with fractional size $\frac{f_{m,n}}{\binom{K_T-n}{i}}$ and corresponding to a unique EN set $\Psi^+:|\Psi^+|=n+i, \Psi\subseteq\Psi^+$
\FOR{$\Psi^+\subseteq[K_T],|\Psi^+|=n+i$}
\FOR{$\Phi^+\subseteq[K_R],|\Phi^+|=m+1$}
\IF{$1\le\frac{i}{n+1}$}\STATE The MBS sends sub-messages  $\{W_{\Phi^+,\Psi}^{\oplus,\Psi^+}:\Psi\subseteq\Psi^+,|\Psi|=n\}$ to EN set $\Psi^+$ one by one\ELSE
\STATE The MBS sends coded sub-messages  $\{\bigoplus_{\Psi\subset\Psi'} W_{\Phi^+,\Psi}^{\oplus,\Psi^+}:\Psi'\subseteq\Psi^+,|\Psi'|=n+1,|\Psi|=n\}$ to EN set $\Psi^+$\ENDIF
\STATE ENs in $\Psi^+$ can access $\{W_{\Phi^+,\Psi}^{\oplus,\Psi^+}:\Psi\subseteq\Psi^+,|\Psi|=n\}$ desired by UE set $\Phi^+$.
\ENDFOR
\ENDFOR
\STATE The network topology in the access link is changed into the $\binom{K_T}{n+i}\times\binom{K_R}{m+1}$ cooperative X-multicast channel whose achievable per-user DoF is $d_{m,n+i}$ in \eqref{eqn tau ip}
\ENDIF
\ENDFOR
\ENDFOR
\end{algorithmic}
\end{algorithm}

\subsubsection{$n=0$}
Note that each subfile in group $(m,0)$ is desired by one UE, and already cached at $m$ different UEs but none of ENs. IC  approach can be used. In specific, the coded messages are given by
\begin{align}
\left\{W_{\Phi^+,\emptyset}^{\oplus}\triangleq\!\bigoplus_{q\in\Phi^+} W_{q,\Phi^+\backslash\{q\},\emptyset}:\Phi^+\subseteq[K_R],|\Phi^+|=m+1\right\}.\label{eqn m0 message}
\end{align}
Each coded message $W_{\Phi^+,\emptyset}^{\oplus}$ is desired by UE set $\Phi^+$. (If $m=0$, each coded message $W_{\Phi^+,\emptyset}^\oplus$ degenerates to subfile $W_{q,\emptyset,\emptyset}$ for $\Phi^+=\{q\}$.) These messages need to be generated at the MBS and then delivered to UEs via the fronthaul link and the access link. In the fronthaul link, we let the MBS multicast each coded message in \eqref{eqn m0 message} to all the $K_T$ ENs one
by one. The fronthaul NDT is given by
\begin{align}
\tau_F=  \binom{K_R}{m+1}\frac{f_{m,0}}{r_W}.\label{eqn m0 fronthaul}
\end{align}

By such naive multicast transmission in the fronthaul link, each EN now has access to all the coded messages in \eqref{eqn m0 message}, and can cooperatively transmit together in the access link. The access link thus becomes the $\binom{K_T}{K_T}\times\binom{K_R}{m+1}$ cooperative X-multicast channel with achievable per-user DoF $d_{m,K_T}$ in \eqref{eqn tau ip}. Since each UE desires $\binom{K_R-1}{m}$ messages, the access NDT is given by
\begin{align}
 \tau_A= \binom{K_R-1}{m}\frac{f_{m,0}}{d_{m,K_T}}.\label{eqn m0 access}
\end{align}
Combining \eqref{eqn m0 fronthaul} and \eqref{eqn m0 access}, the achievable NDT for the delivery of group $(m,0)$ is
\begin{align}
  \tau_{m,0}=\binom{K_R}{m+1}\frac{f_{m,0}}{r_W}+\binom{K_R-1}{m}\frac{f_{m,0}}{d_{m,K_T}}.\label{eqn m0}
\end{align}

\subsubsection{$n>0$}
Note that each subfile in group $(m,n)$ is desired by one UE, and already cached at $m$ different UEs and $n$ different ENs. IC approach can be used as when $n=0$. In specific, given an arbitrary UE set $\Phi^+$ with size $|\Phi^+|=m+1$ and an arbitrary EN set $\Psi$ with size $n$, each EN in $\Psi$ generates the coded message $W_{\Phi^+,\Psi}^{\oplus}\triangleq\bigoplus_{q\in\Phi^+} W_{q,\Phi^+\backslash\{q\},\Psi}$ desired by all UEs in $\Phi^+$. (If $m=0$, coded message $W_{\Phi^+,\Psi}^\oplus$ degenerates to subfile $W_{q,\emptyset,\Psi}$ for $\Phi^+=\{q\}$.) Through this IC  approach, $m+1$ different subfiles are combined into a single coded message via XOR, and there are only $\binom{K_R}{m+1}\binom{K_T}{n}$ coded messages to be transmitted in total, each available at $n$ ENs and desired by $m+1$ UEs.

With the aid of fronthaul, we can allow ENs to access the coded messages of others via the transmission of the MBS in the fronthaul link, thereby enabling chances for more transmission cooperation in the access link. Assume that after the aid of fronthaul transmission, every set of $n + i$ ENs can form a cooperation group in the access link, where $i \in [0:K_T-n]$ is a design parameter.\footnote{If $i=0$, every set of $n$ ENs already forms a cooperation group in the access link, and the coded messages can be delivered to UEs directly in the access link without the use of fornthaul channel. The access link becomes the $\binom{K_T}{n}\times\binom{K_R}{m+1}$ cooperative X-multicast channel with per-user DoF of $d_{m,n}$ in \eqref{eqn tau ip}.} We split each coded message $W_{\Phi^+,\Psi}^{\oplus}$ into $\binom{K_T-n}{i}$ sub-messages, each with fractional size $f_{m,n}/\binom{K_T-n}{i}$ and corresponding to a distinct EN set $\Psi^+$ with size $n+i$ such that $\Psi\subseteq\Psi^+$. Denote $W_{\Phi^+,\Psi}^{\oplus,\Psi^+}$ as the sub-message in $W_{\Phi^+,\Psi}^{\oplus}$, which is desired by UE set $\Phi^+$, cached at EN set $\Psi$, and corresponding to EN set $\Psi^+$. Each sub-message $W_{\Phi^+,\Psi}^{\oplus,\Psi^+}$ is sent by EN set $\Psi^+$ exclusively in the access link. Then, for an arbitrary EN set $\Psi^+$ with size $n+i$, each EN in $\Psi^+$ needs to access all the sub-messages
\begin{align}
  \left\{W_{\Phi^+,\Psi}^{\oplus,\Psi^+}:\Phi^+\subseteq[K_R],|\Phi^+|=m+1,\Psi\subseteq\Psi^+,|\Psi|=n\right\}.\label{eqn mn 1}
\end{align}

To do this, the MBS choose one of the two methods below to send sub-messages to ENs in the fronthaul link.
\begin{enumerate}
  \item  Fronthaul Transmission without IC : For each EN set $\Psi^+$, the MBS directly sends sub-messages in \eqref{eqn mn 1} one-by-one, and each EN in $\Psi^+$ decodes all the non-cached sub-messages. By this method, the NDT in the fronthaul link is given by
\begin{align}
 \tau_F^1=\lim_{P\to\infty}\lim_{F\to\infty}\frac{T_F}{F/\log P}=\frac{1}{r_W}\binom{K_R}{m+1}\binom{K_T}{n+i}\binom{n+i}{n}\frac{f_{m,n}}{\binom{K_T-n}{i}}.\label{eqn fronthaul 1}
\end{align}
  \item Fronthaul Transmission with IC : Note that each sub-message is already cached at $n$ ENs. The MBS can exploit IC  opportunities in the fronthaul link. In specific, for each EN set $\Psi^+$, the MBS sends coded sub-messages
\begin{align}
  \bigg\{\bigoplus_{\Psi\subset\Psi'} W_{\Phi^+,\Psi}^{\oplus,\Psi^+}:&\Phi^+\subseteq[K_R],|\Phi^+|=m+1,\Psi'\subseteq\Psi^+,|\Psi'|=n+1,|\Psi|=n\bigg\}.\notag
\end{align}
For each coded sub-message $\bigoplus_{\Psi\subset\Psi'} W_{\Phi^+,\Psi}^{\oplus,\Psi^+}$, each EN $p$ in $\Psi'$ caches $n$ sub-messages $\{W_{\Phi^+,\Psi}^{\oplus,\Psi^+}:p\in\Psi,\Psi\subset\Psi'\}$, and can decode the non-cached sub-message $\{W_{\Phi^+,\Psi}^{\oplus,\Psi^+}:p\notin\Psi,\Psi\subset\Psi'\}$.  By this method the NDT in the fronthaul link is given by
\begin{align}
   \tau_F^2=\lim_{P\to\infty}\lim_{F\to\infty}\frac{T_F}{F/\log P}=\frac{1}{r_W}\binom{K_R}{m+1}\binom{K_T}{n+i}\binom{n+i}{n+1}\frac{f_{m,n}}{\binom{K_T-n}{i}}.\label{eqn fronthaul 2}
\end{align}
\end{enumerate}
Choosing the smaller one between \eqref{eqn fronthaul 1} and \eqref{eqn fronthaul 2}, the fronthaul NDT is given by
\begin{align}
  \tau_F  =&\frac{1}{r_W}\binom{K_R}{m\!+\!1}\binom{K_T}{n\!+\!i}\frac{f_{m,n}}{\binom{K_T-n}{i}}\min\left\{\binom{n\!+\!i}{n},\binom{n\!+\!i}{n\!+\!1}\right\}\notag\\
  =&\binom{K_R}{m+1}\binom{K_T}{n}\min\left\{1,\frac{i}{n+1}\right\}\frac{f_{m,n}}{r_W}.\label{eqn fronthaul ndt}
\end{align}

Then in the access link, for an arbitrary EN set $\Psi^+$ with size $n+i$, each EN in $\Psi^+$ cooperatively sends sub-messages in \eqref{eqn mn 1}. The access link is changed to the $\binom{K_T}{n+i}\times\binom{K_R}{m+1}$ cooperative X-multicast channel with achievable per-user DoF $d_{m,n+i}$ in \eqref{eqn tau ip}. Since each UE $q$, for $q\in[K_R]$, wants $\binom{K_R-1}{m}\binom{K_T}{n+i}\binom{n+i}{n}$ sub-messages, the access NDT is
\begin{align}
  \tau_A=\lim_{P\to\infty}\lim_{F\to\infty}\frac{T_A}{F/\log P}=\binom{K_R-1}{m}\binom{K_T}{n+i}\frac{\binom{n+i}{n}}{\binom{K_T-n}{i}}\frac{f_{m,n}}{d_{m,n+i}}=\binom{K_R-1}{m}\binom{K_T}{n}\frac{f_{m,n}}{d_{m,n+i}}.\label{eqn access ndt}
\end{align}

Combining \eqref{eqn fronthaul ndt} and \eqref{eqn access ndt} and taking the minimum of NDT over $i$, we obtain the NDT for the delivery of group $(m,n)$ as
\begin{align}
\tau_{m,n}=\min_{i \in [0:K_T-n]} \tau_{m,n}^i,\label{eqn mn}
\end{align}
where
\begin{align}
  \tau_{m,n}^i=\binom{K_R}{m+1}\binom{K_T}{n}\min\left\{1,\frac{i}{n+1}\right\}\frac{f_{m,n}}{r_W}+\binom{K_R-1}{m}\binom{K_T}{n}\frac{f_{m,n}}{d_{m,n+i}}.\notag
\end{align}

Summing up NDTs in \eqref{eqn m0} and \eqref{eqn mn} for all groups, the total achievable NDT is
\begin{align}
  \tau^W=\sum_{m=0}^{K_R-1}\sum_{n=0}^{K_T}\tau_{m,n},\notag
\end{align}
which is the same as in Theorem \ref{thm 1}. Thus, Theorem \ref{thm 1} is proved.

\section*{Appendix B:Proof of Theorem \ref{thm 2}}

Since this is a lower bound, we focus on a specific UE demand that each UE $q$ ($q\in[K_R]$) wants file $W_q$. Since ENs are assumed to be half-duplex, we will prove lower bounds on the fronthaul and access NDTs separately.

\subsubsection{Fronthaul Transmission}
We first focus on the fronthaul transmission. Consider the transmission of the files desired by the first $l_1$ UEs, $l_1 \leq K_R$. The proof is based on the following observation. Given received signals $Q^W_{1\sim K_T}$ from the MBS at all ENs and the cache contents $U_{1\sim K_T}$ of all the ENs, one can construct the transmitted signals of all the ENs. Then, given all the transmitted signals from the ENs and cache contents $V_{1\sim l_1}$ at the first $l_1$ UEs, one can recover the desired files of these UEs with arbitrarily low probability of error. We have
\begin{align}
H(W_{1\sim l_1}|Q^W_{1\sim K_T},U_{1\sim K_T},V_{1\sim l_1})=F\varepsilon_F+T_F\varepsilon_P\log P.\notag
\end{align}
Here, $\varepsilon_F$ is a function of file size $F$, and $\varepsilon_P$ is a function of power $P$, and satisfy $\lim_{F\to\infty}\varepsilon_F=0$, $\lim_{P\to\infty}\varepsilon_P=0$. Then, we have
\begin{subequations}\label{eqn converse 1}
\begin{align}
  l_1F=&H(W_{1\sim l_1}|W_{(l_1+1)\sim N})\label{eqn converse 11}\\
  =&I(W_{1\sim l_1};Q^W_{1\sim K_T},U_{1\sim K_T},V_{1\sim l_1}|W_{(l_1+1)\sim N}+H(W_{1\sim l_1}|Q^W_{1\sim K_T},U_{1\sim K_T},V_{1\sim l_1},W_{(l_1+1)\sim N})\label{eqn converse 12}\\
  =&h(Q^W_{1\sim K_T},U_{1\sim K_T},V_{1\sim l_1}|W_{(l_1+1)\sim N})-h(Q^W_{1\sim K_T},U_{1\sim K_T},V_{1\sim l_1}|W_{1\sim N})+F\varepsilon_F+T_F\varepsilon_P\log P\label{eqn converse 13}\\
  \le& h(Q^W_{1\sim K_T})+H(U_{1\sim K_T},V_{1\sim l_1}|W_{(l_1+1)\sim N})+F\varepsilon_F+T_F\varepsilon_P\log P.\label{eqn converse 15}
\end{align}
\end{subequations}
Here, \eqref{eqn converse 12} and \eqref{eqn converse 13} simply follow from the definition of mutual information; and \eqref{eqn converse 15} from the fact that conditioning reduces entropy. In \eqref{eqn converse 15}, $h(Q^W_{1\sim K_T})$ can be bounded by
\begin{subequations}\label{eqn converse 2}
\begin{align}
  h(Q^W_{1\sim K_T})=&I(Q^W_{1\sim K_T};S)+h(Q^W_{1\sim K_T}|S)\label{eqn converse 21}\\
  =&I(Q^W_{1\sim K_T};S)+T_F\varepsilon_P\log P\label{eqn converse 22}\\
  \le&T_F(r_W\log P+\varepsilon_P\log P)+T_F\varepsilon_P\log P,\label{eqn converse 23}
\end{align}
\end{subequations}
where $S$ denotes the signal transmitted by the MBS. Here, \eqref{eqn converse 22} is due to the fact that the conditional entropy $h(Q^W_{1\sim K_T}|S)$ reduces to the entropy of the noise term; and \eqref{eqn converse 23} follows from the capacity bound on the broadcast channel in the high SNR regime.

By denoting $U_{1\sim K_T,1\sim l_1}$ and $V_{1\sim l_1,1\sim l_1}$ as the cached contents of files $W_{1\sim l_1}$ at all the $K_T$ ENs and UEs $\{1,2,\ldots, l_1\}$, respectively, and denoting $U_{1\sim K_T,n},V_{1\sim l_1,n}$ as the cached contents of file $W_n$ at all the $K_T$ ENs and UEs $\{1,2,\ldots, l_1\}$, respectively, the second term in \eqref{eqn converse 15} is given by
\begin{subequations}\label{eqn converse 3}
\begin{align}
  H(U_{1\sim K_T},V_{1\sim l_1}|W_{(l_1+1)\sim N}) =&H(U_{1\sim K_T,1\sim l_1},V_{1\sim l_1,1\sim l_1})\label{eqn converse 31}\\
  =&\sum_{n=1}^{l_1} H(U_{1\sim K_T,n},V_{1\sim l_1,n})\label{eqn converse 32}\\
  =&l_1F\cdot [1-(1-\mu_T)^{K_T}(1-\mu_R)^{l_1}],\label{eqn converse 33}
\end{align}
\end{subequations}
where \eqref{eqn converse 31} and \eqref{eqn converse 32} follow from the fact that only the cached contents of files $\{W_1,\ldots,W_{l_1}\}$ are unknown given files $\{W_{l_1+1},\ldots,W_N\}$ and that the caching scheme does not allow intra-file or inter-file coding; \eqref{eqn converse 33} follows from the fact that each EN and each UE caches a subset of $\mu_TF$ and $\mu_RF$ bits of each file independently and uniformly at random, respectively.

Combining \eqref{eqn converse 15}\eqref{eqn converse 23}\eqref{eqn converse 33}, and letting $F\rightarrow\infty$, $P\rightarrow\infty$, we obtain that
\begin{align}
  \lim_{P\rightarrow\infty}\lim_{F\rightarrow\infty}\frac{T_F\log P}{F}\ge \frac{1}{r_W}l_1(1-\mu_T)^{K_T}(1-\mu_R)^{l_1}.\label{eqn converse 4}
\end{align}

\subsubsection{Access Transmission}
Next we consider the access transmission. The proof method is an extension of the approach in \cite[Section VI]{niesen} by taking decentralized cache scheme into account. Consider the first $l_2$ UEs, for $l_2\in[K_R]$. The proof is based on the following observation. Given the received signals $Y_{1\sim l_2}$ and the cached contents $V_{1\sim l_2}$ of the $l_2$ UEs, one can successfully decode the desired files of these $l_2$ UEs. Thus, we have
\begin{align}
  H(W_{1\sim l_2}|Y_{1\sim l_2},V_{1\sim l_2})=F\varepsilon_F.\notag
\end{align}
Similar to \eqref{eqn converse 1}, we have
\begin{subequations}\label{eqn converse 5}
\begin{align}
  l_2F=&H(W_{1\sim l_2}|W_{(l_2+1)\sim N})\label{eqn converse 51}\\
  =&I(W_{1\sim l_2};Y_{1\sim l_2},V_{1\sim l_2}|W_{(l_2+1)\sim N})+H(W_{1\sim l_2}|Y_{1\sim l_2},V_{1\sim l_2},W_{(l_2+1)\sim N})\label{eqn converse 52}\\
  =&h(Y_{1\sim l_2},V_{1\sim l_2}|W_{(l_2+1)\sim N})-h(Y_{1\sim l_2},V_{1\sim l_2}|W_{1\sim N})+F\varepsilon_F\label{eqn converse 54}\\
  \le& h(Y_{1\sim l_2},V_{1\sim l_2}|W_{(l_2+1)\sim N})+F\varepsilon_F\label{eqn converse 55}\\
  \le& h(Y_{1\sim l_2})+H(V_{1\sim l_2}|W_{(l_2+1)\sim N})+F\varepsilon_F.\label{eqn converse 56}
\end{align}
\end{subequations}
In \eqref{eqn converse 56}, $h(Y_{1\sim l_2})$ is bounded by
\begin{subequations}\label{eqn converse 6}
\begin{align}
  h(Y_{1\sim l_2})=&I(Y_{1\sim l_2};X_{1\sim K_T})+h(Y_{1\sim l_2}|X_{1\sim K_T})\label{eqn converse 61}\\
  =&I(Y_{1\sim l_2};X_{1\sim K_T})+T_F\varepsilon_P\log P\label{eqn converse 62}\\
  \le&T_A\min\{K_T,l_2\}(\log P+\varepsilon_P\log P)+T_A\varepsilon_P\log P.\label{eqn converse 63}
\end{align}
\end{subequations}
Here, \eqref{eqn converse 62} is due to the fact that the conditional entropy $h(Y_{1\sim l_2}|X_{1\sim K_T})$ results only from the noise received at UEs; \eqref{eqn converse 63} follows from the capacity bound of the $K_T\times l_2$ MIMO channel in high SNR regime, similar to the proof of \cite[Lemma 5]{niesen}.

In \eqref{eqn converse 56}, $H(V_{1\sim l_2}|W_{(l_2+1)\sim N})$ is given by
\begin{align}
  H(V_{1\sim l_2}|W_{(l_2+1)\sim N})=H(V_{1\sim l_2,1\sim l_2})=\sum_{n=1}^{l_2} H(V_{1\sim l_2,n})=l_2F\cdot [1-(1-\mu_R)^{l_2}].\label{eqn converse 7}
\end{align}
Note that \eqref{eqn converse 7} is similar to \eqref{eqn converse 3}, and the detailed explanation is omitted here.

Combining \eqref{eqn converse 56}, \eqref{eqn converse 63}, \eqref{eqn converse 7}, and letting $F\rightarrow\infty$, $P\rightarrow\infty$, we obtain
\begin{align}
  \lim_{P\rightarrow\infty}\lim_{F\rightarrow\infty}\frac{T_A\log P}{F}\ge \frac{l_2(1-\mu_R)^{l_2}}{\min\{l_2,K_T\}}.\label{eqn converse 8}
\end{align}

Combining \eqref{eqn converse 4} and \eqref{eqn converse 8}, and taking the maximum over $l_1,l_2\in[K_R]$, the minimum NDT $\tau$ is lower bounded by
\begin{align}
  \tau=\lim_{P\rightarrow\infty}\lim_{F\rightarrow\infty}\frac{(T_F+T_A)\log P}{F}\ge\max_{l_1\in[K_R]}\frac{l_1}{r_W}(1-\mu_T)^{K_T}(1-\mu_R)^{l_1}+ \max_{l_2\in[K_R]}\frac{l_2(1-\mu_R)^{l_2}}{\min\{l_2,K_T\}},\notag
\end{align}
which completed the proof of Theorem \ref{thm 2}.

\section*{Appendix C: Proof of Corollary \ref{coro gap}}
We consider two cases separately, $K_T\ge K_R$ and $K_T<K_R$.

\subsection{$K_T\ge K_R$}
When $K_T\ge K_R$, NDT can be upper bounded by $\tau^W_{upper}=\sum_{m=0}^{K_R-1}\sum_{n=0}^{K_T}\tau_{m,n}$, where $\tau_{m,n}$ is given in \eqref{eqn tau m0} and \eqref{eqn tau mn gamma}. Taking $i=0$ in \eqref{eqn tau mn gamma}, $\tau_{m,n}$ ($n>0$) is bounded by
\begin{align}
  \tau_{m,n}\le\frac{\binom{K_R-1}{m}\binom{K_T}{n}f_{m,n}}{d_{m,n}}.\notag
\end{align}
We also have
\begin{align}
  \tau_{m,0} = \binom{K_R}{m+1}\frac{f_{m,0}}{r} + \frac{\binom{K_R-1}{m}f_{m,0}}{d_{m,K_T}}.\notag
\end{align}
When $K_T\ge K_R$, it is easy to see that $d_{m,n}\ge1/2$ for $m\in[0:K_R-1],n\in[K_T]$. Then, $\tau^W_{upper}$ is upper bounded by
\begin{align}
\tau^W_{upper}\le&\sum_{m=0}^{K_R-1}\sum_{n=1}^{K_T}\frac{\binom{K_R-1}{m}\binom{K_T}{n}f_{m,n}}{d_{m,n}}+\sum_{m=0}^{K_R-1}\frac{\binom{K_R-1}{m}f_{m,0}}{d_{m,K_T}}+\frac{1}{r}\sum_{m=0}^{K_R-1}\binom{K_R}{m+1}f_{m,0}\notag\\
\le&2\sum_{m=0}^{K_R-1}\sum_{n=0}^{K_T}\binom{K_R-1}{m}\binom{K_T}{n}f_{m,n}+\frac{(1-\mu_T)^{K_T}}{r}\sum_{m=0}^{K_R-1}\binom{K_R}{m+1}\mu_R^m(1-\mu_R)^{K_R-m}\notag\\
=&2(1-\mu_R)\sum_{m=0}^{K_R-1}\sum_{n=0}^{K_T}\binom{K_R-1}{m}\binom{K_T}{n}\mu_R^m(1-\mu_R)^{K_R-1-m}\mu_T^n(1-\mu_T)^{K_T-n}\notag\\
&+\frac{(1-\mu_T)^{K_T}}{r}\frac{1-\mu_R}{\mu_R}\sum_{m=0}^{K_R-1}\binom{K_R}{m+1}\mu_R^{m+1}(1-\mu_R)^{K_R-m-1}\notag\\
=&2(1-\mu_R)+\frac{(1-\mu_T)^{K_T}}{r}\frac{1-\mu_R}{\mu_R}\left[\sum_{p=0}^{K_R}\binom{K_R}{p}\mu_R^{p}(1-\mu_R)^{K_R-p}-(1-\mu_R)^{K_R}\right]\notag\\
=&2(1-\mu_R)+\frac{(1-\mu_T)^{K_T}}{r}\frac{1-\mu_R}{\mu_R}\left[1-(1-\mu_R)^{K_R}\right]\label{eqn gap 1}
\end{align}
Taking $l_2=1$ in \eqref{eqn thm 2}, the lower bound of NDT is lower bounded by
\begin{align}
  \tau^W_{lower}\ge\max_{l_1}\frac{l_1(1-\mu_T)^{K_T}(1-\mu_R)^{l_1}}{r}+(1-\mu_R).\label{eqn gap 2}
\end{align}
Denote $g$ as the multiplicative gap, then the gap is bounded by
\begin{align}
  g\le\frac{2(1-\mu_R)+\frac{(1-\mu_T)^{K_T}}{r}\frac{1-\mu_R}{\mu_R}\left[1-(1-\mu_R)^{K_R}\right]}{\max_{l_1}\frac{l_1(1-\mu_T)^{K_T}(1-\mu_R)^{l_1}}{r}+(1-\mu_R)}.\notag
\end{align}
To upper bound $g$, we first consider
\begin{align}
  g_F\triangleq\frac{\frac{(1-\mu_T)^{K_T}}{r}\frac{1-\mu_R}{\mu_R}\left[1-(1-\mu_R)^{K_R}\right]}{\max_{l_1}\frac{l_1(1-\mu_T)^{K_T}(1-\mu_R)^{l_1}}{r}}=\frac{\frac{1-\mu_R}{\mu_R}\left[1-(1-\mu_R)^{K_R}\right]}{\max_{l_1}l_1(1-\mu_R)^{l_1}},\notag
\end{align}
which can also be viewed as the gap in the fronthaul link. We consider four cases to upper bound $g_F$, i.e., (1) $K_R\le 12$; (2) $K_R\ge 13, \mu_R\ge\frac{1}{12}$; (3) $K_R\ge 13, \frac{1}{K_R}\le\mu_R<\frac{1}{12}$; (4) $K_R\ge13, \mu_R<\frac{1}{K_R}$. Note that the broadcast channel in the fronthaul link is similar to the one-server shared link in \cite{fundamentallimits,decentralized}, and the proof here is similar to the one in  \cite{fundamentallimits,decentralized}.
\subsubsection{$K_R\le 12$}
In this case, using the inequality $(1-\mu_R)^{K_R}\ge1-K_R\mu_R$, we have
\begin{align}
\frac{1-\mu_R}{\mu_R}\left[1-(1-\mu_R)^{K_R}\right]\le\frac{1-\mu_R}{\mu_R}K_R\mu_R\le12(1-\mu_R).\notag
\end{align}
Letting $l_1=1$, $g_F$ is bounded by
\begin{align}
  g_F\le\frac{12(1-\mu_R)}{1-\mu_R}=12.\notag
\end{align}
\subsubsection{$K_R\ge 13, \mu_R\ge\frac{1}{12}$}
We have
\begin{align}
\frac{1-\mu_R}{\mu_R}\left[1-(1-\mu_R)^{K_R}\right]\le\frac{1-\mu_R}{\mu_R}\le12(1-\mu_R).\notag
\end{align}
Similar to Case 1 that $K_R\le12$, $g_F$ is also upper bounded by 12.
\subsubsection{$K_R\ge 13, \frac{1}{K_R}\le\mu_R<\frac{1}{12}$}
Letting $l_1=\lfloor\frac{1}{4\mu_R}\rfloor$, we have
\begin{align}
  \max_{l_1}l_1(1-\mu_R)^{l_1}\ge\lfloor\frac{1}{4\mu_R}\rfloor(1-\mu_R)^{\lfloor\frac{1}{4\mu_R}\rfloor}\ge (\frac{1}{4\mu_R}-1)(1-\frac{1}{4\mu_R}\mu_R)=\frac{3}{16\mu_R}-\frac{3}{4}.\notag
\end{align}
Then, $g_F$ is upper bounded by
\begin{align}
  g_F\le\frac{\frac{1-\mu_R}{\mu_R}\left[1-(1-\mu_R)^{K_R}\right]}{\frac{3}{16\mu_R}-\frac{3}{4}}\le\frac{1/\mu_R}{\frac{3}{16\mu_R}-\frac{3}{4}}=\frac{1}{3/16-3\mu_R/4}<\frac{1}{3/16-3/48}=8.\notag
\end{align}
\subsubsection{$K_R\ge13, \mu_R<\frac{1}{K_R}$}
Letting $l_1=\lfloor\frac{K_R}{4}\rfloor$, we have
\begin{align}
g_F&\le\frac{\frac{1-\mu_R}{\mu_R}\left[1-(1-\mu_R)^{K_R}\right]}{\lfloor\frac{K_R}{4}\rfloor(1-\mu_R)^{\lfloor\frac{K_R}{4}\rfloor}}\notag\\
&=\frac{1-(1-\mu_R)^{K_R}}{\mu_R\lfloor\frac{K_R}{4}\rfloor(1-\mu_R)^{\lfloor\frac{K_R}{4}\rfloor-1}}\notag\\
&\le\frac{1-(1-K_R\mu_R)}{\mu_R\lfloor\frac{K_R}{4}\rfloor(1-\mu_R)^{\lfloor\frac{K_R}{4}\rfloor-1}}\notag\\
&=\frac{K_R}{\lfloor\frac{K_R}{4}\rfloor}\frac{1}{(1-\mu_R)^{\lfloor\frac{K_R}{4}\rfloor-1}}\notag\\
&\le\frac{K_R}{\frac{K_R}{4}-1}\frac{1}{1-(\frac{K_R}{4}-1)\mu_R}\notag\\
&<\frac{1}{\frac{1}{4}-\frac{1}{K_R}}\frac{1}{1-(\frac{K_R}{4}-1)\frac{1}{K_R}}\notag\\
&\le\frac{1}{\frac{1}{4}-\frac{1}{13}}\frac{1}{\frac{3}{4}+\frac{1}{K_R}}<8.\notag
\end{align}
Combining all four cases, we find that $g_F\le12$ for all $\mu_R,K_R$. Then, the gap $g$ is upper bounded by
\begin{align}
  g\le\frac{2(1-\mu_R)+12\max_{l_1}\frac{l_1(1-\mu_T)^{K_T}(1-\mu_R)^{l_1}}{r}}{\max_{l_1}\frac{l_1(1-\mu_T)^{K_T}(1-\mu_R)^{l_1}}{r}+(1-\mu_R)}\le12.\notag
\end{align}
Thus, we proved the case when $K_T\ge K_R$.

\subsection{$K_T<K_R$}
Now, we consider the case when $K_T<K_R$. The achievable upper bound of NDT is $\tau^W_{upper}=\sum_{m=0}^{K_R-1}\sum_{n=0}^{K_T}\tau_{m,n}$, where $\tau_{m,n}$ is given in \eqref{eqn tau m0} and \eqref{eqn tau mn gamma}. Taking $i=0$ in \eqref{eqn tau mn gamma}, $\tau_{m,n}$ ($n>0$) is bounded by
\begin{align}
  \tau_{m,n}\le\frac{\binom{K_R-1}{m}\binom{K_T}{n}f_{m,n}}{d_{m,n}}.\notag
\end{align}
We also have
\begin{align}
  \tau_{m,0} = \binom{K_R}{m+1}\frac{f_{m,0}}{r} + \frac{\binom{K_R-1}{m}f_{m,0}}{d_{m,K_T}}.\notag
\end{align}
It is easy to see in \eqref{eqn tau ip} that $d_{m,n}\ge d_{m,1}=\frac{K_T}{K_T+\frac{K_R-m-1}{m+1}}$ for $m\in[0:K_R-1],n\in[K_T]$. Then, the achievable upper bound of NDT is bounded by \eqref{eqn gap 3}.
\begin{figure*}[!t]
\begin{align}
  \tau_{upper}\le&\sum_{m=0}^{K_R-1}\sum_{n=1}^{K_T}\frac{\binom{K_R-1}{m}\binom{K_T}{n}f_{m,n}}{d_{m,n}}+\sum_{m=0}^{K_R-1}\frac{\binom{K_R-1}{m}f_{m,0}}{d_{m,K_T}}+\frac{1}{r}\sum_{m=0}^{K_R-1}\binom{K_R}{m+1}f_{m,0}\notag\\
  \le&\sum_{m=0}^{K_R-1}\sum_{n=1}^{K_T}\frac{\binom{K_R-1}{m}\binom{K_T}{n}f_{m,n}}{\frac{K_T}{K_T+\frac{K_R-m-1}{m+1}}}+\sum_{m=0}^{K_R-1}\frac{\binom{K_R-1}{m}f_{m,0}}{\frac{K_T}{K_T+\frac{K_R-m-1}{m+1}}}+\frac{1}{r}\sum_{m=0}^{K_R-1}\binom{K_R}{m+1}f_{m,0}\notag\\
  =&\sum_{m=0}^{K_R-1}\frac{\binom{K_R-1}{m}}{\frac{K_T}{K_T+\frac{K_R-m-1}{m+1}}}\sum_{n=0}^{K_T}\binom{K_T}{n}f_{m,n}+\frac{1}{r}\sum_{m=0}^{K_R-1}\binom{K_R}{m+1}f_{m,0}\notag\\
  =&\sum_{m=0}^{K_R-1}\frac{\binom{K_R-1}{m}}{\frac{K_T}{K_T+\frac{K_R-m-1}{m+1}}}\mu_R^m(1-\mu_R)^{K_R-m}+\frac{1}{r}\sum_{m=0}^{K_R-1}\binom{K_R}{m+1}f_{m,0}\notag\\
  =&\sum_{m=0}^{K_R-1}\frac{\binom{K_R-1}{m}(K_T-1+\frac{K_R}{m+1})}{K_T}\mu_R^m(1-\mu_R)^{K_R-m}+\frac{1}{r}\sum_{m=0}^{K_R-1}\binom{K_R}{m+1}f_{m,0}\notag\\
  =&\frac{K_T-1}{K_T}\sum_{m=0}^{K_R-1}\binom{K_R-1}{m}\mu_R^m(1-\mu_R)^{K_R-m}+\frac{1}{K_T}\sum_{m=0}^{K_R-1}\binom{K_R}{m+1}\mu_R^m(1-\mu_R)^{K_R-m}\notag\\
  &+\frac{1}{r}\sum_{m=0}^{K_R-1}\binom{K_R}{m+1}f_{m,0}\notag\\
  =&\frac{K_T-1}{K_T}(1-\mu_R)\sum_{m=0}^{K_R-1}\binom{K_R-1}{m}\mu_R^m(1-\mu_R)^{K_R-m-1}+\frac{1}{r}\sum_{m=0}^{K_R-1}\binom{K_R}{m+1}f_{m,0}\notag\\
  &+\frac{1-\mu_R}{K_T\mu_R}\sum_{m=0}^{K_R-1}\binom{K_R}{m+1}\mu_R^{m+1}(1-\mu_R)^{K_R-m-1}\notag\\
  =&\frac{K_T-1}{K_T}(1-\mu_R)+\frac{1-\mu_R}{K_T\mu_R}\sum_{p=1}^{K_R}\binom{K_R}{p}\mu_R^{p}(1-\mu_R)^{K_R-p}+\frac{1}{r}\sum_{m=0}^{K_R-1}\binom{K_R}{m+1}f_{m,0}\notag\\
  =&\frac{K_T-1}{K_T}(1-\mu_R)+\frac{1-\mu_R}{K_T\mu_R}\left[1-(1-\mu_R)^{K_R}\right]+\frac{(1-\mu_T)^{K_T}}{r}\frac{1-\mu_R}{\mu_R}\left[1-(1-\mu_R)^{K_R}\right]\label{eqn gap 3}
\end{align}
\hrule
\end{figure*}
Using Theorem \ref{thm 2}, the multiplicative gap $g$ is bounded by
\begin{align}
  g\le\frac{\frac{K_T-1}{K_T}(1-\mu_R)+\frac{1-\mu_R}{K_T\mu_R}\left[1-(1-\mu_R)^{K_R}\right]+\frac{(1-\mu_T)^{K_T}}{r}\frac{1-\mu_R}{\mu_R}\left[1-(1-\mu_R)^{K_R}\right]}{\max_{l_1\in[K_R]}\frac{l_1(1-\mu_T)^{K_T}(1-\mu_R)^{l_1}}{r}+\max_{l_2\in[K_R]}\frac{l_2(1-\mu_R)^{l_2}}{\min\{l_2,K_T\}}}.\label{eqn gap 4}
\end{align}
In \eqref{eqn gap 4}, from the analysis when $K_T\ge K_R$, we have
\begin{align}
  \frac{\frac{(1-\mu_T)^{K_T}}{r}\frac{1-\mu_R}{\mu_R}\left[1-(1-\mu_R)^{K_R}\right]}{\max_{l_1\in[K_R]}\frac{l_1(1-\mu_T)^{K_T}(1-\mu_R)^{l_1}}{r}}\le12.\notag
\end{align}
Then, to bound $g$ in \eqref{eqn gap 4}, we first consider
\begin{align}
  g_A\triangleq\frac{\frac{K_T-1}{K_T}(1-\mu_R)+\frac{1-\mu_R}{K_T\mu_R}\left[1-(1-\mu_R)^{K_R}\right]}{\max_{l_2\in[K_R]}\frac{l_2(1-\mu_R)^{l_2}}{\min\{l_2,K_T\}}},\notag
\end{align}
which can also be viewed as the multiplicative gap in the access link. We use three cases to upper bound $g_A$, i.e., (1) $\mu_R<\frac{1}{4K_R}$; (2) $\frac{1}{4K_R}\le\mu_R<\frac{1}{4K_T}$; (3) $\mu_R\ge\frac{1}{4K_T}$.
\subsubsection{$\mu_R<\frac{1}{4K_R}$}
Letting $l_2=K_R$, we have
\begin{align}
\max_{l_2\in[K_R]}\frac{l_2(1-\mu_R)^{l_2}}{\min\{l_2,K_T\}}\ge\frac{K_R(1-\mu_R)^{K_R}}{K_T}\ge\frac{K_R(1-K_R\mu_R)}{K_T}>\frac{K_R}{K_T}(1-K_R\frac{1}{4K_R})=\frac{3K_R}{4K_T}.\label{eqn gap 5}
\end{align}
Letting $l_2=1$, we have
\begin{align}
  \max_{l_2\in[K_R]}\frac{l_2(1-\mu_R)^{l_2}}{\min\{l_2,K_T\}}\ge1-\mu_R.\label{eqn gap 6}
\end{align}
We also have
\begin{align}
  \frac{1-\mu_R}{K_T\mu_R}\left[1-(1-\mu_R)^{K_R}\right]\le\frac{1-\mu_R}{K_T\mu_R}\left[1-(1-K_R\mu_R)\right]=\frac{K_R(1-\mu_R)}{K_T}\le\frac{K_R}{K_T}.\label{eqn gap 7}
\end{align}
Combining \eqref{eqn gap 5}\eqref{eqn gap 6}\eqref{eqn gap 7}, $g_A$ is upper bounded by
\begin{align}
  g_A=\frac{\frac{K_T-1}{K_T}(1-\mu_R)+\frac{1-\mu_R}{K_T\mu_R}\left[1-(1-\mu_R)^{K_R}\right]}{\max_{l_2} \frac{l_2(1-\mu_R)^{l_2}}{\min\{l_2,K_T\}}}\le\frac{\frac{K_T-1}{K_T}(1-\mu_R)}{1-\mu_R}+\frac{\frac{K_R}{K_T}}{\frac{3K_R}{4K_T}}<1+4/3=7/3.\notag
\end{align}

\subsubsection{$\frac{1}{4K_R}\le\mu_R<\frac{1}{4K_T}$}
Letting $l_2=\lceil\frac{1}{4\mu_R}\rceil$, we have
\begin{align}
\max_{l_2} \frac{l_2(1-\mu_R)^{l_2}}{\min\{l_2,K_T\}}\ge&\frac{\lceil\frac{1}{4\mu_R}\rceil(1-\mu_R)^{\lceil\frac{1}{4\mu_R}\rceil}}{\min\{\lceil\frac{1}{4\mu_R}\rceil,K_T\}}\notag\\
  \ge&\frac{\frac{1}{4\mu_R}(1-\lceil\frac{1}{4\mu_R}\rceil\mu_R)}{K_T}\notag\\
  \ge&\frac{1-(\frac{1}{4\mu_R}+1)\mu_R}{4\mu_RK_T}\notag\\
  =&\frac{\frac{3}{4}-\mu_R}{4K_T\mu_R}\notag\\
  >&\frac{\frac{3}{4}-\frac{1}{8}}{4K_T\mu_R}=\frac{5}{32K_T\mu_R}.\label{eqn gap 8}
\end{align}
We also have
\begin{align}
  \frac{1-\mu_R}{K_T\mu_R}\left[1-(1-\mu_R)^{K_R}\right]\le\frac{1}{K_T\mu_R}.\label{eqn gap 9}
\end{align}
Combining \eqref{eqn gap 6}\eqref{eqn gap 8}\eqref{eqn gap 9}, $g_A$ is upper bounded by
\begin{align}
  g_A=\frac{\frac{K_T-1}{K_T}(1-\mu_R)+\frac{1-\mu_R}{K_T\mu_R}\left[1-(1-\mu_R)^{K_R}\right]}{\max_{l_2} \frac{l_2(1-\mu_R)^{l_2}}{\min\{l_2,K_T\}}}\le\frac{\frac{K_T-1}{K_T}(1-\mu_R)}{1-\mu_R}+\frac{\frac{1}{K_T\mu_R}}{\frac{5}{32K_T\mu_R}}<1+32/5=37/5.\notag
\end{align}
\subsubsection{$\mu_R\ge\frac{1}{4K_T}$}
Letting $l_2=\lfloor\frac{1}{4\mu_R}\rfloor$, we have
\begin{align}
  \max_{l_2} \frac{l_2(1-\mu_R)^{l_2}}{\min\{l_2,K_T\}}\ge\frac{\lfloor\frac{1}{4\mu_R}\rfloor(1-\mu_R)^{\lfloor\frac{1}{4\mu_R}\rfloor}}{\min\{\lfloor\frac{1}{4\mu_R}\rfloor,K_T\}}=(1-\mu_R)^{\lfloor\frac{1}{4\mu_R}\rfloor}\ge1-\frac{1}{4\mu_R}\mu_R=\frac{3}{4}.\label{eqn gap 10}
\end{align}
We also have
\begin{align}
  \frac{1-\mu_R}{K_T\mu_R}\left[1-(1-\mu_R)^{K_R}\right]\le\frac{1}{K_T\mu_R}\le\frac{1}{K_T\frac{1}{4K_T}}=4.\label{eqn gap 11}
\end{align}
Combining \eqref{eqn gap 6}\eqref{eqn gap 10}\eqref{eqn gap 11}, $g_A$ is bounded by
\begin{align}
   g_A=\frac{\frac{K_T-1}{K_T}(1-\mu_R)+\frac{1-\mu_R}{K_T\mu_R}\left[1-(1-\mu_R)^{K_R}\right]}{\max_{l_2} \frac{l_2(1-\mu_R)^{l_2}}{\min\{l_2,K_T\}}}\le\frac{\frac{K_T-1}{K_T}(1-\mu_R)}{1-\mu_R}+\frac{4}{3/4}<1+16/3=19/3.\notag
\end{align}

From the above three cases, we find that $g_A<12$. Then the multiplicative gap $g$ is bounded by
\begin{align}
  g\le&\frac{\frac{K_T-1}{K_T}(1-\mu_R)+\frac{1-\mu_R}{K_T\mu_R}\left[1-(1-\mu_R)^{K_R}\right]+\frac{(1-\mu_T)^{K_T}}{r}\frac{1-\mu_R}{\mu_R}\left[1-(1-\mu_R)^{K_R}\right]}{\max_{l_1\in[K_R]}\frac{l_1(1-\mu_T)^{K_T}(1-\mu_R)^{l_1}}{r}+\max_{l_2\in[K_R]}\frac{l_2(1-\mu_R)^{l_2}}{\min\{l_2,K_T\}}}\notag\\
  <&\frac{12\max_{l_2\in[K_R]}\frac{l_2(1-\mu_R)^{l_2}}{\min\{l_2,K_T\}}+12\max_{l_1\in[K_R]}\frac{l_1(1-\mu_T)^{K_T}(1-\mu_R)^{l_1}}{r}}{\max_{l_1\in[K_R]}\frac{l_1(1-\mu_T)^{K_T}(1-\mu_R)^{l_1}}{r}+\max_{l_2\in[K_R]}\frac{l_2(1-\mu_R)^{l_2}}{\min\{l_2,K_T\}}}=12.\label{eqn gap 12}
\end{align}

Thus we finished the proof of Corollary \ref{coro gap} that the multiplicative gap is within 12.

\bibliographystyle{IEEEtran}
\bibliography{IEEEabrv,journal}

\end{document}